

\documentclass{toc}

\tocdetails{%
  title = {Separating k-Player from t-Player One-Way Communication, with Applications to Data Streams},
%
  number_of_pages = {35},
  number_of_bibitems = {24},
  number_of_figures = {0},
  conference_version = {ICALP19},
  author = {Elbert Du, Michael Mitzenmacher, David Woodruff, and Guang Yang},
  authorlist = {Elbert Du, Michael Mitzenmacher, David Woodruff, Guang Yang},
    %
  acmclassification = {F.1.3, F.2.3, F.2.1},
  amsclassification = {68Q17, 68W25, 68W20},
  keywords = {streaming, space complexity, hamming norm, approximation, algorithms with predictions, direct sum}
    %
}   



\tocdetails{%
}   

\usepackage{cleveref}
\usepackage[shortlabels]{enumitem}

\newcommand{\set}[1]{{\left \{{#1} \right \}}}
\newcommand{\norm}[1]{{\left\| {#1} \right \|}}
\newcommand{\intersect}{\ensuremath{\cap}}
\newcommand{\union}{\ensuremath{\cup}}
\newcommand{\st}{\;|\;}

\newcommand{\cond}{\;|\;}
\newcommand{\Cond}{\;\Big|\;}
\newcommand{\follows}{\ensuremath{\sim}}

\newcommand{\ceil}[1]{ {\left\lceil{#1} \right\rceil}}

\newcommand{\poly}{\mathsf{poly}}
\newcommand{\polylog}{\ensuremath{\mathsf{polylog}}}
\newcommand{\zo}{\ensuremath{ \{ 0, 1 \} }}
\newcommand{\zon}{\ensuremath{ \{ 0, 1 \}^n }}
\newcommand{\error}{\ensuremath{\delta}}
\newcommand{\gap}{\ensuremath{\varepsilon}}

\newcommand{\bigo}[1]{\ensuremath{{{O} \left( {#1} \right)}}}
\newcommand{\bigomega}[1]{\ensuremath{{\Omega \left( {#1} \right)}}}
\newcommand{\littleo}[1]{\ensuremath{{o \left( {#1} \right)}}}

\newcommand{\bigtheta}[1]{\ensuremath{{\Theta\left({#1} \right)}}}

\newcommand{\eqdef}{\ensuremath{:=}}

\newcommand{\E}{\ensuremath{\mathbf{E}}}

\newcommand{\entropy}{\ensuremath{\mathsf{H}}}
\newcommand{\Hmin}{\ensuremath{\mathsf{H}_{\infty}}}

\newcommand{\event}{\ensuremath{\mathcal{E}}}

\newcommand{\remove}[1]{}

\newcommand{\alphabet}{\ensuremath{\Sigma}}
\newcommand{\range}{\ensuremath{\Gamma}}

\newcommand{\dcc}{\ensuremath{\mathbf{DCC}}}
\newcommand{\rcc}{\ensuremath{\mathbf{RCC}}}
\newcommand{\dsc}{\ensuremath{\mathbf{DSC}}}
\newcommand{\rsc}{\ensuremath{\mathbf{RSC}}}
\newcommand{\rscT}{\ensuremath{\mathbf{RSC}^T}}
\newcommand{\ic}{\ensuremath{\mathbf{IC}}}
\newcommand{\dccone}{\ensuremath{\overrightarrow{\mathbf{DCC}}}}
\newcommand{\rccone}{\ensuremath{\overrightarrow{\mathbf{RCC}}}}
\newcommand{\dscone}{\ensuremath{\overrightarrow{\mathbf{DSC}}}}
\newcommand{\rscone}{\ensuremath{\overrightarrow{\mathbf{RSC}}}}
\newcommand{\rccsim}{\ensuremath{\mathbf{RCC}^{sim}}}
\newcommand{\rcclin}{\ensuremath{\mathbf{RCC}^{LIN}}}
\newcommand{\rcclinT}{\ensuremath{\mathbf{RCC}^{LIN,T}}}
\newcommand{\D}{\ensuremath{\mathbf{D}}}

\newcommand{\DlinT}{\ensuremath{\mathbf{D}^{LIN,T}}}

\newcommand{\sumequal}{\textsc{Sum-Equal} }

\newcommand{\hest}{ \textsc{T} }

\newcommand{\go}{\textsc{Gap-Ort}}
\newcommand{\gose}{\textsc{Gap-Ort-Sum-Equal}}
\newcommand{\auggo}{\textsc{Aug-Index-GOSE}}

\begin{document}

\begin{frontmatter}




\title{Separating k-Player from t-Player One-Way Communication, with Applications to Data Streams\titlefootnote{A preliminary version of this paper appeared in the \href{https://drops.dagstuhl.de/opus/volltexte/2019/10673/pdf/LIPIcs-ICALP-2019-97.pdf}{Proceedings of the 46th International Colloquium on Automata, Languages and Programming, 2019}.}}



\author[du]{Elbert Du}
\author[mitzenmacher]{Michael Mitzenmacher\thanks{Supported in part by NSF grants CCF-2101140, CNS-2107078, and DMS-2023528, and by a gift to the Center for Research on Computation and Society at Harvard University.}}
\author[woodruff]{David Woodruff}
\author[yang]{Guang Yang}


\begin{abstract}
In a $k$-party communication problem, the $k$ players with inputs $x_1, x_2, \ldots, x_k$, respectively, want to evaluate a function $f(x_1, x_2, \ldots, x_k)$ using as little communication as possible. 
We consider the message-passing model, in which the inputs are partitioned in an arbitrary, possibly worst-case manner, among a smaller number $t$ of players ($t<k$). 
The $t$-player communication cost of computing $f$ can only be smaller than the $k$-player communication cost, since the $t$ players can trivially simulate the $k$-player protocol. 
But how much smaller can it be? 
We study deterministic and randomized protocols in the one-way model, and provide separations for product input distributions, which are optimal for low error probability protocols. We also provide much stronger separations when the input distribution is non-product.  

A key application of our results is in proving lower bounds for data stream algorithms. In particular, we give an optimal $\Omega(\gap^{-2}\log(N) \log \log(mM))$ bits of space lower bound for the fundamental problem of $(1\pm\gap)$-approximating the number $\|x\|_0$ of non-zero entries of an $n$-dimensional vector $x$ after $m$ integer updates each of magnitude at most $M$, and with success probability $\ge 2/3$, in a strict turnstile stream. We additionally prove the matching $\Omega(\gap^{-2}\log(N) \log \log(T))$ space lower bound for the problem when we have access to a heavy hitters oracle with threshold $T$.
Our results match the best known upper bounds when $\gap\ge 1/\polylog(mM)$ and when $T = 2^{\poly(1/\epsilon)}$ respectively.
It also improves on the prior $\Omega(\gap^{-2}\log(mM) )$ lower bound 
and separates the complexity of approximating $L_0$ from approximating the $p$-norm $L_p$ for $p$ bounded away from $0$, since the latter has an $O(\gap^{-2}\log (mM))$ bit upper bound. 
\end{abstract}

\end{frontmatter}


\section{Introduction}
\label{sec:intro}

Consider a $k$-party communication problem, in which the players have inputs $x_1, x_2, \ldots, x_k$ respectively, and want to compute a function $f(x_1, x_2, \ldots, x_k)$ of their inputs using as little communication as possible. 
We consider the message-passing model, in which the inputs are partitioned in an arbitrary, possibly worst-case manner among a smaller number $t$ of players. 
That is, we partition $\set{1,2,\dots,k}$ into $t$ subsets $S_1, S_2, \ldots, S_t$ such that $\union_{i=1}^t S_i = \set{1,2,\dots,k}$ and $S_i\intersect S_j = \emptyset$ for every $1\le i< j\le t$,
and let the $i$-th player $P_i$ hold the sequence of inputs $y_i\eqdef \left(x_{i_1}, x_{i_2}, \ldots, x_{i_{|S_i|}}\right)$. 
We are still interested in computing the original function $f$.
The total communication required must be smaller than in the original $k$-player setting, since the $t$ players can simulate the protocol involving the original $k$ players. 
A natural question is: {\it how much smaller can the communication be?}

There are many communication models that are possible, but our main motivation for looking at this question comes from applications to data streams, see below, and so we are primarily interested in the {\it one-way number-in-hand} model. In this model, each of the $t$ players can only see its own input. 
The first player composes a message $m_1$ based on its input $y_1$ and sends $m_1$ to the second player. 
The second player takes $m_1$ and its input $y_2$ to compute a message $m_2$ for the third player, and so on. 
The $t$-th (also the last) player, upon receiving the message $m_{t-1}$ from the $(t-1)$-st player, computes the output of the protocol based on $m_{t-1}$ and its own input $y_t$. We sometimes abuse notation and refer to the output as $m_t$. 
The total communication cost is the maximum of $\sum_{i=1}^{t} |m_i|$, where $|m_i|$ denotes the length of the $i$-th message and the maximum is taken over all possible inputs $y_1,\ldots, y_t$ (which is a partition of $\set{x_1, \ldots, x_k}$) and all random coin tosses of the players. 
For streaming applications we are especially interested in $\max_{i\in\set{1, \ldots, t}} |m_i|$. 

To explain the connection to data streams, almost all known lower bound arguments on the memory required of a data stream algorithm are proven via communication complexity, or at least can be reformulated using communication complexity. 
The basic idea is to partition the elements of an input stream contiguously, consisting of say $k$ elements, into a possibly smaller number $t$ of players. 
Then one argues that if there is a data stream algorithm  solving the problem, then the communication problem can be solved by passing the memory contents as messages from player to player. 
Note that this naturally gives rise to the one-way number-in-hand model. 
Since the total communication cost is $t \cdot S$, where $S$ is the size of the memory of the streaming algorithm, if the randomized $t$-player communication complexity of the function $f$ is $CC_t$, we must have $S \geq CC_t/t$. 
Many lower bounds in data streams are proven already with two players.
However, it is known that for some functions more  players are needed to obtain stronger lower bounds, such as for estimating the frequency moments in insertion only streams (see, e.g.,
 \cite{BarYossef:2004eya,wz12} and references therein). 

One cannot help but ask {\it how powerful is communication complexity for proving data stream lower bounds}? Another natural question is: {\it for a given function $f$, which number $t$ of players should one partition the stream into}? 
Yet another question is regarding the input distribution -- should it be a product distribution for which the inputs to the players are chosen independently, 
or should the inputs be drawn from a non-product distribution to obtain the best space lower bounds? 
Since we are interested in the limits of using $t$ players for establishing lower bounds for data stream algorithms, we allow the original $k$ inputs (which correspond to the $k$ elements in a stream) to be partitioned in the worst possible way for a $t$-player communication protocol, as this will give the strongest possible lower bound.

\subsection{Our Results}
In this paper we study these communication questions and their connections to data streams.

We first make the simple observation that for non-product input distributions, the communication complexity can be arbitrarily smaller if we partition the $k$ inputs into $t < k$ players. 
Indeed, consider the $k$-player set disjointness problem in which the $i$-th player, $1 \leq i \leq k$, has a set $S_i \subseteq [n]$, 
where for notational simplicity we define $[n]\eqdef \set{1,2,\dots, n}$ for $n\in \N$. 
The input distribution satisfies the promise that either (1) $S_i \cap S_j = \emptyset$ for every $1\le i < j \le k$, or (2) there is a unique item $a \in [n]$ such that $a \in S_i$ for all $i \in [k]$, and for any other $a' \neq a$, there is at most one $i \in [k]$ for which $a' \in S_i$. It is well-known that the randomized communication complexity of this problem is $\bigomega{n/k}$ \cite{BarYossef:2004eya,g09,jayram2009hellinger}, and that the bound holds even for multiple rounds of communication and when players share a common blackboard. 
However, if we look at $t < k$ players and an arbitrary, even if the worst-case mapping of the input sets $S_1, \ldots, S_k$ to the $t$ players, then by the pigeonhole principle there exists a player who gets two input sets $S_i, S_j$ with $i \ne j$. 
Now this player can locally determine the output of the function by checking if $S_i \cap S_j=\emptyset$. 
Thus with $t < k$ players the problem is solvable using $\bigo{1}$ bits per player. This simple argument shows that for non-product distributions, there can be an arbitrarily large gap between the $k$-player and the $t$-player worst-case-partitioned randomized communication complexities. 
Note that this example applies to a symmetric problem, meaning that the $k$-player set disjointness problem is invariant under any one-to-one assignment of $x_1, \ldots, x_k$ to the $k$ players.

Perhaps surprisingly, and this is one of the main messages of our work: for symmetric functions and product input distributions,

we show that for any $t < k$, for deterministic one-way communication complexity or randomized one-way communication complexity with error probability $1/\poly(k)$, that is, the gap between the $k$-player and $t$-player communication complexities is at most a multiplicative $\bigo{1}$ factor in maximum message length, or the maximum communication from a single player, and $O(k)$ in total communication. Further, this gap is tight, as there are problems for which the input distribution is a product distribution, and the $t$-player communication with $1/\poly(k)$ error probability is $\bigo{\log k}$ for constant $t=\bigo{1}$, while the $k$-player communication with $1/\poly(k)$ error probability is $\bigomega{k \log k}$. 

Thus, the answer for product input distributions is significantly different than what we saw for non-product distributions, even for symmetric functions. 

We also show that for protocols with constant error and under product input distributions, the gap is at most a multiplicative $O(\log k)$ factor in message length and $O(k \log k)$ in total communication. Further, we show that there exists a symmetric function and input distribution which is product on any $k-1$ out of $k$ inputs, for which this gap is best possible. We leave open the question of the existence of a symmetric function and product input distribution (on all $k$ inputs rather than $k-1$ out of $k$) which realizes this gap for constant error protocols.

One takeaway message from our results is that when showing space lower bounds for data stream algorithms computing symmetric functions on product distributions, by looking at $2$-player communication complexity (which is by far the most common communication setup), there is only an $O(1)$ factor loss for error probability $1/\poly(k)$ protocols, and an $\bigo{\log k}$ factor loss for constant error protocols. 

However, for non-product distributions, which are often needed to show hardness of approximation in data streams (such as for the frequency moments \cite{BarYossef:2004eya}), one may need to use as many as $k$ players in order to obtain a non-trivial lower bound from communication complexity. 

\subsubsection{Data Stream Lower Bounds:}
\label{para:Hamming estimation}
As a key application of our lower bound techniques, we provide a space
lower bound for $(1\pm\gap)$-approximating the \emph{Hamming norm} in the strict turnstile model.
This problem, which is also known as the \emph{$L_0$ norm estimation} and denoted by $\hest_\gap$, requires estimating $\norm{\mathbf{x}}_0  \eqdef |\set{i \st x_i \ne 0}|$ of a vector $\mathbf{x} = (x_1, \ldots, x_N)$ and outputting an estimate $\widetilde{F}$ for which $(1-\gap)\norm{\mathbf{x}}_0\le \widetilde{F} \le (1+\gap)\norm{\mathbf{x}}_0 $ with constant probability. 
The vector $\mathbf{x}$ is initialized to all zeros and undergoes a sequence of $m$ updates each of the form 
$(i,v)\in [N]\times [\pm M]$, 
where $[\pm M]\eqdef \set{0,\pm 1,\dots,\pm M}$ and each update $(i,v)$ causes $x_i\gets x_i+v$.
In the strict turnstile model $x_i \geq 0$ holds for all $i$ and at all points in the stream.
We obtain an $\bigomega{\gap^{-2}\log(N)\log\log(mM)}$ bits of space lower bound for  $(1\pm\gap)$-approximating the {Hamming norm}.
This lower bound matches the best known upper bound $\bigo{\gap^{-2}\log(N)\left(\log(1/\gap)+\log \log (mM)\right)}$ \cite{KNW10} for any $\gap\ge 1/\polylog(mM)$. Note that $\gap \geq 1/\polylog(mM)$ is required in order to obtain polylogarithmic space, and so is the most common setting of parameters.

Perhaps surprisingly, 
there is an upper bound of $\bigo{\gap^{-2}\log(mM)}$ bits of space for $(1\pm\gap)$-approximating
$L_p$ for $p>0$ \cite{knw10b}  (improving an earlier $\bigo{\log^2 N}$ bound of \cite{i06}; see also a time-efficient version in \cite{knpw11}), 
and thus we provide 
a strict separation in the complexities for $p = 0$ and $p > 0$. 

The Hamming norm has many applications, as it corresponds to estimating the number of distinct values, and can be used to estimate set union and intersection sizes (see \cite{CDIM03} where it was introduced).

\paragraph*{Lower Bounds in the Learning Augmented Setting}

Recently, there has been a growing interest in using machine learning to infer information about the stream that would be useful for solving certain problems in the streaming setting. In this learning augmented setting, we have access an oracle (which in practice would have some degree of error and could be implemented with machine learning).  Learned oracles have been used to develop improved algorithms for various problems, including frequency estimation \cite{hsu2018learningbased}, caching \cite{lykouris2018competitive}, scheduling \cite{mitzenmacher_scheduling_2020}, frequency moments \cite{Jiang2020Learning-Augmented}, and more. A fairly comprehensive survey of learning augmented algorithms can be found here: \cite{DBLP:books/cu/20/MitzenmacherV20}

In our setting, the oracle provides an additional operation: we can give the oracle a coordinate, and the oracle will tell us whether the frequency of this coordinate at the end of the stream is at least $T$ for a \emph{threshold} $T$. We refer to this oracle as the \emph{heavy hitters oracle}. Approximate heavy hitter oracles have been used for frequency estimation \cite{hsu2018learningbased}.

 We derive a novel method to prove space lower bounds even with a perfect heavy hitters oracle. We use this method to prove a lower bound of $\bigomega{\gap^{-2}\log(N)\log\log(T)}$ for approximating the $L_0$ norm, which is optimal when $T = 2^{\poly(1/\epsilon)}$ as it matches the upper bound in \cite{Jiang2020Learning-Augmented}.

\subsection{Technical Overview}
We first illustrate the 
idea behind showing there is no gap between $k$-player and $2$-player
deterministic one-round communication complexity. The first player $P_1$
of the $k$-player protocol pretends to be Alice, the first player of the $2$-player protocol, 
to create the message $m_1$ as Alice would do
and sends it to the second player $P_2$ of the $k$-player protocol. 
Having received this message $m_1$,
$P_2$  enumerates over all possible
inputs of $P_1$ until finding one
which would cause $P_1$ to send $m_1$. 
Since the protocol is deterministic and it evaluates a function defined on a product domain, meaning that it is a total function on a domain of the form $S_1 \times S_2 \times \cdots \times S_k$, 
the function value must be the same as long as $P_1$'s input results in the same message $m_1$ to be sent.
So $P_2$ can arbitrarily pick one of those inputs as his guess for $P_1$.
Now $P_2$ has a guess $x$ for $P_1$'s input together
with his own input $y$, and $P_2$ can simulate Alice in the $2$-player protocol. 
This is feasible because the $2$-player protocol works under any partitioning of the inputs. 
Then $P_2$ sends to the third player $P_3$ the message that Alice would send to Bob in the $2$-player protocol, 
given that Alice had input $(x,y)$.
In case when every player $P_i$ cannot figure out how many input items have been processed from his own input and the received message $m_{i-1}$, which is important for his simulation of the $2$-player protocol,
an additional logarithmic-many-bits index carrying this piece of information should be passed together with the simulated messages.
In this way, the entire $k$-player protocol can be simulated
and the per player communication equals to the communication of the $2$-player protocol between Alice and Bob, sometimes plus the additional logarithmic many bits for the index. 
Moreover, both protocols are deterministic.

For the randomized case with a product input distribution, 

we first consider $2$-player protocols with error probability $1/\poly(k)$. 

We would like to run the same simulation as for deterministic protocols, 
except now it is unclear how the second player $P_2$ can reconstruct a valid input $x$ for the first player $P_1$ from the first message $m$. 
A natural thing would be for $P_2$
to choose the input $x=x_m$ to $P_1$ for which the probability
of sending $m$, given that $P_1$'s input is $x_m$, is greatest. 
This is not correct though, since the overall probability of $P_1$ holding $x_m$ and sending $m$ may be less than the $1/\poly(k)$ error bound  
and the protocol could afford to be always wrong on such a combination of $x_m$ and $m$.
Thus we need some balancing between two probabilities:  
i) the first player $P_1$ sends $m$ on input $x$;
and ii) the protocol output is correct 
given that $P_1$ 
has input $x$ and sends $m$. 

The above naturally suggests that we should impose an
input product distribution $\mu$. Then it must be that for a good
fraction of $x$, weighted according to $\mu$,
the $k$-player protocol is correct when the first player has input $x$ and sends message $m$. 
Thus we can sample $x$ from the conditional distribution on $\mu$
given that message $m$ is sent. Here, for correctness, it is crucial
that $\mu$ is a product distribution; this ensures for most settings of
remaining player's inputs (weighted according to $\mu$), for most choices
of $x$ (weighted according to $\mu$) giving rise to $m$, the function evaluated on the inputs is the same, and $x$ can be sampled independently of remaining inputs. 
Once we have sampled $x$, and given that the second player has private input $y$ in the $k$-player protocol, we can then have
the second player pretend to be Alice of a randomized $2$-player
protocol with input $(x,y)$, similar to the deterministic case. 
Ultimately,  
we will show that under distribution $\mu$ we obtain a protocol with
total communication at most $\bigo{k}$ times that of the 
$2$-player protocol with error probability $1/\poly(k)$. The maximum message length, which is an important resource measure in our setting, blows up by at most an $\bigo{1}$ multiplicative factor times that of the $2$-player
protocol, 
where the factor $k$ comes from 
the number of invocations of the $2$-player protocol.

We illustrate the optimality of the randomized reduction above by
looking at the \sumequal problem studied by Viola \cite{Viola:2013we}:
in this problem each of $k$ players holds an input $x_i \mod p$, where
$p = \bigtheta{k^{1/4}}$ is a prime, and they wish to determine whether
$\sum_i x_i = 0$ or $1 \mod p$. Viola shows
this problem has randomized communication complexity $\bigtheta{k \log k}$, 
for both randomized protocols with constant error probability as well
as deterministic protocols (and thus also randomized protocols with
$1/\textrm{poly}(k)$ error probability). Moreover, for randomized protocols
with $1/\textrm{poly}(k)$ error probability, Viola's $\Omega(k \log k)$
lower bound holds even for a product distribution on the inputs (where
if $\sum_i x_i \mod p \notin \{0,1\}$ the output can be arbitrary).  
We observe that under any partition of the inputs into $2$-players
Alice and Bob,
the problem can be solved with $\bigo{\log k}$ bits with probability
$1-1/\textrm{poly}(k)$ just by running an equality test on the sum
modulo $p$ of Alice and the negated sum modulo $p$ of Bob. Thus, this
illustrates that the factor $O(k)$ gap in total communication for protocols for product input 
distributions with $1/\textrm{poly}(k)$ error probability is {\it optimal}. 

On the other hand, for constant error protocols and a product input 
distribution, there is a 
$2$-player $\bigo{1}$ bit upper bound in the public coin model which comes
from running an equality test with constant error probability
(since we measure error with respect to an input distribution, equality
has an $O(1)$ upper bound with constant error). 

We note that the $k$-player protocol has communication $\bigomega{k \log k}$
for constant error protocols, 
which gives the $\bigomega{k \log k}$ factor gap we claimed. The only
downside is that the $\bigomega{k \log k}$ lower bound  holds for an input 
distribution which is product
on $k-1$ out of $k$ players, rather than all $k$ players. We leave it as an open question
to give an optimal separation for product input distributions for 
constant error probability.

Given the importance of Viola's problem in showing separations, 
we next show a {\it direct sum theorem} for his problem,
showing its communication complexity increases to $\bigomega{kr \log k }$
for solving a constant fraction of $r$ independent copies.  

To show the direct sum theorem for Viola's problem, 
one issue is that, unlike for two players where the technique
of {\it information
complexity} often provides direct sum theorems, 
for $k$-players the analogues are much weaker. A natural route would be
to take Viola's {\it corruption bound}, argue it implies a high
information bound, and then apply standard direct sum theorems for
information. This approach does not give an information cost lower
bound on private coin protocols, though one can fix it for two players
using \cite{bg14}, which improves upon a bound in \cite{BBKLSV16}.
However, for $k$ players similarly strong
bounds are unknown. 
Another natural approach is to use the fact
that if a problem has a corruption bound, then one immediately has 
a direct sum for it \cite{BPSW05}. Again though, this is only for two players or
the \emph{number on forehead} model, and not for 
our setting. 

Instead, our proof is inspired by Viola's rectangle argument 
for a single copy of the \sumequal problem, 
where each rectangle, restricted to the first
$k-1$ players, is a product distribution on 
which the protocol generates a message to the $k$-th player. 
We use a rectangle argument on multiple copies where the output 
is now a binary vector instead of a single bit.
The main obstacle is that we must consider the Hamming distance between the protocol output and the correct answer in a vector space,
which is much more involved than studying the error probability for a 
single instance.
The intuition of our proof is that for every large rectangle,
there must be linearly many copies that appear (almost) uniformly 
random in the last player's view.
The above argument is fairly intricate, and involves several levels of 
conversion: 
i) a large rectangle implies large conditional entropy in many players' inputs;
ii) the large entropy of all copies implies we have min-entropy at least $1$ on many copies;
iii) a random variable of min-entropy at least $1$ can always be  decomposed into a convex combination of uniform distributions over two elements;
iv) the summation of sufficiently many independent random variables that are each drawn from a uniform-over-two-element distribution turns out to be nearly uniform,
and hence many \sumequal copies look uniform to the last player.

Thus, the last player can hardly outperform a random guess. 
Note that it is insufficient to prove uniformity for many copies individually 
(which is not too hard using the same idea as in Viola's proof), 
since such a situation could be simulated with a much smaller rectangle 
with very small error.
We instead perform our rectangle argument inductively to show most 
copies appear almost uniform, even if conditioned on previous copies.

This direct sum technique has further applications. One application
is to proving a lower bound for approximating the Hamming norm 
in a strict turnstile stream. 
Using a result of \cite{AHLW16}, to show lower bounds for streaming algorithms in the
strict turnstile model, it suffices to show lower bounds in the simultaneous communication model, where each player simultaneously sends a linear sketch to a referee who outputs the answer. 
To get the desired direct sum property, we have a chain of reductions leading to the \sumequal problem which we compute the information complexity of.

Specifically, we consider a composition of the Gap-Orthogonality problem on top of the \sumequal instances as well as an augmented index version of the composed problem. When we compose these problems, each coordinate of the Gap-Orthogonality problem becomes a \sumequal instance, and we show that in order to solve Gap Orthogonality, we must solve most of the \sumequal instances. Thus. we can use a direct sum to bound the information cost of the composed problem in a similar manner as in \cite{wz12}. We then prove that approximating the Hamming norm reduces to the augmented index version of this, which allows us to bound its communication complexity and accordingly its streaming complexity.

In the augmented problem 
we additionally give a referee an index $i$ and the answers to all copies $j$, with $j > i$. 
Similar augmentation has been studied for $L_p$-norms \cite{knw10b}. 
This allows us to reduce our communication problem to Hamming norm approximation, and ultimately prove our data stream lower bound.

\section{Preliminaries}
\label{sec:pre}

A function $f:\alphabet^k \to \range$ is called a \emph{$k$-party symmetric function} if for every $(x_1,x_2,\dots,x_k)\in \alphabet^k$ and for every permutation $\sigma$ over $\set{1,2,\dots,k}$,
there is 
$f(x_1,\dots,x_k) = f\left(x_{\sigma(1)},  \dots, x_{\sigma(k)}\right).$

	A $k$-dimensional vector space $S$ is called a \emph{product space} if it can be represented as $S=S_1\times S_2 \times\dots\times S_k$.
	A distribution $\mu$ is called a \emph{product distribution} if it is obtained by taking the product of $k$ independent distributions, i.e., $\mu=\mu_1\times \mu_2 \times \dots \times \mu_k$.

In the $t$-player communication complexity model, there are $t$ computationally unbounded players, e.g., $P_1,\dots,P_t$, required to compute a function $f:X_1\times\dots\times X_t\to Y$, where $f$ is usually a $t$-party symmetric function.
Each player $P_i$ is given a private input $x_i\in X_i$ and follows a fixed protocol to exchange messages.
For every input $(x_1,\dots,x_t)$, the message transcript is denoted by $\Pi_t(x_1,\dots,x_t)$ when all players follow the protocol $\Pi_t$
(when $\Pi_t$ is randomized, $\Pi_t(x_1,\dots,x_t)$ is a random variable taking probabilities over players' random coins).
A deterministic protocol $\Pi_t$ computes $f$ if there is a function $\Pi_{out}$ such that $\Pi_{out}\left(\Pi_t^{(t)}(x_1,\dots,x_t), x_t\right) \equiv f$,
where $\Pi_t^{(t)}(x_1,\dots,x_t)$ denotes $P_t$'s view under the execution of $\Pi_t$ on input $(x_1,\dots,x_t)$ and for simplicity we let $\Pi_{out}\left(x_1,\dots,x_t\right)\eqdef \Pi_{out}\left(\Pi_t^{(t)}(x_1,\dots,x_t), x_t\right)$.
A $\error$-error randomized protocol $\Pi_t$ for $f$ requires the existence of $\Pi_{out}$ such that for all inputs $(x_1,\dots,x_t)$, $\Pr\left[ \Pi_{out}\left(x_1,\dots,x_t\right)=f(x_1,\dots,x_t) \right]\ge 1-\error$.
The \emph{communication cost} of $\Pi_t$ is the maximum size of $\Pi_t(x_1,\dots,x_t)$ over all $x_1,\dots,x_t$ and all random coins.
The \emph{$t$-player deterministic communication complexity} (resp. \emph{$t$-player $\error$-error randomized communication complexity}), denoted by $\dcc_t(f)$ (resp. $\rcc_{t,\error}(f)$), is the cost of the best $t$-player deterministic (resp. $\error$-error randomized) protocol $\Pi_t$ for $f$.

Given a $k$-party function $f:X_1\times \dots\times X_k\to Y$ and $t<k$, we define $\dcc_t(f)$ and $\rcc_{t,\error}(f)$ 

under a \emph{worst-case partition} of inputs.
That is, let $f_t(z_1,\dots,z_t)=f(x_1,\dots,x_k)$ be defined for every partition $i_0=0\le i_1 \le \dots \le i_t=k$ and $z_j\eqdef (x_{i_{j-1}+1},\dots,x_{i_j})$,
and the $t$-player communication complexity of $f$ is defined with respect to the worst choice of $f_t$, i.e.,  
$\dcc_t(f)\eqdef\max_{f_t}\dcc_t(f_t)$ and $\rcc_{t,\error}(f)\eqdef \max_{f_t}\rcc_{t,\error}(f_t)$.

Given a $t$-party function $f$ and its input distribution $\mu$, we let $\dcc^{\mu}_{t,\error}(f)$
denote the communication cost of the best $t$-player deterministic protocol $\Pi_t$ computing $f$ such that 
$\Pr_{x\follows \mu}\left[ \Pi_{out}(x)\ne f(x) \right] \le \error$.
Similarly we define  $\rcc^{\mu}_{t,\error}(f)$ for randomized protocols.

In the restricted \emph{one-way communication model} \cite{papadimitriou1984communication,ablayev1996lower,kremer1999randomized}, 
the $i$-th player sends exactly one message to the $(i+1)$-st player for $i\in[t-1]$ following $\Pi_t$,
and then $P_t$ announces the output of $\Pi_t$ as specified by $\Pi_{out}$.
Note that in this setting there are only $k-1$ messages sent by $P_1,\dots, P_{k-1}$, and we do not count the final output announced by $P_t$ in the communication in order to best correspond to streaming algorithms.
This is also known as a \emph{sententious} protocol in previous work, e.g., \cite{Viola:2013we}.
We denote the \emph{$t$-player one-way communication complexities of $f$} by $\dccone_t(f)$ and $\rccone_{t,\error}(f)$, respectively.

In the \emph{common reference string model} (aka \emph{CRS model}), 
there is a sequence of public random coins, which is by default a uniformly random binary string, accessible to all players.
The obvious advantage of communication in the CRS model is that players 
have access to the same random string and thus 
save the cost of synchronizing their private coins.

A streaming algorithm is an algorithm that scans the input $(x_1,\dots,x_m)\in\alphabet^m$ as $m$ stream input items in sequence, updates its internal memory of size $s=\littleo{m \log|\alphabet|}$ (i.e., a streaming automaton with $2^s$ states, where the space cost of updating the internal memory is not accounted for), 
and finally outputs a function $f(x_1,\dots,x_m)$ evaluated on all input items. 
If the best deterministic (resp. $\error$-error randomized) streaming algorithm computes $f$ with $s$ bits of memory and $t$ passes over the data stream, then we say the \emph{deterministic} (resp. \emph{$\error$-error}) \emph{streaming complexity} of $f$ is $st$, denoted by $\dsc(f)=st$ (resp. $\rsc_\error(f)=st$).
In a popular and standard setting, a streaming algorithm scans the input stream in a \emph{single pass} and only processes every input item once. 
The necessary amount of memory required by such single-pass algorithms is called the \emph{single-pass deterministic/$\error$-error streaming complexity} and denoted by $\dscone(f)$ and $\rscone_\error(f)$ respectively.

Note that every streaming algorithm can be naturally interpreted as a communication protocol where each party holds some (possibly an empty set of) input items on the stream and the messages capture the memory updates.
The connection between streaming complexity and communication complexity trivially follows 
in the following lemma.

\begin{lemma}\label{lemma:sc-cc}
For every function $f$ and error tolerance $\error$, 
for every $k\in\N$, it holds that: 
\[\dsc(f)\ge \frac{1}{k} \cdot \dcc_{k}(f),\;\;
\rsc_{\error}(f)\ge \frac{1}{k} \cdot \rcc_{k,\error}(f)\]

Furthermore, similar relations hold for 
 single-pass streaming complexities 

versus
 $k$-player one-way communication complexities: 

\[\dscone(f)\ge \frac{1}{k-1} \cdot \dccone_{k}(f),\;\;
\rscone_{\error}(f)\ge \frac{1}{k-1} \cdot \rccone_{k,\error}(f)\]

\end{lemma}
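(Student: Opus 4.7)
The plan is to simulate a streaming algorithm directly by a $k$-player communication protocol, exploiting the fact that $\dcc_k(f)$ is defined as a maximum over partitions. First I would fix an arbitrary partition of the stream positions $[m]$ into $k$ contiguous blocks $I_1,\dots,I_k$, which induces some partitioned function $f_k$ out of $f$. Given a deterministic streaming algorithm $A$ computing $f$ with $s$ bits of memory and $t$ passes, I construct a $k$-player protocol $\Pi_k$ for $f_k$ as follows: at the start of each pass, player $P_1$ runs $A$ on his sub-stream $I_1$ (starting from the initial state on pass $1$, or from the state received at the end of the previous pass), obtains the resulting memory state of at most $s$ bits, and forwards it to $P_2$; each $P_i$ in turn loads $A$'s state from the incoming message, processes his block $I_i$, and forwards the updated state to $P_{i+1}$; after $P_k$ finishes his block in a non-final pass, he sends the state back to $P_1$ to start the next pass. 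At the end of pass $t$, player $P_k$ holds $A$'s final state and announces its output.

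Next I would count the communication. Every message in $\Pi_k$ carries exactly $A$'s memory contents, so it has size at most $s$; across $t$ passes there are at most $kt$ such messages, giving a transcript of size at most $kts = k\cdot\dsc(f)$. Correctness of $\Pi_k$ is inherited from $A$: by construction the final state is bit-for-bit what $A$ would produce running on the concatenated stream, so $\Pi_k$ computes $f_k$ whenever $A$ computes $f$. Since the partition was arbitrary, the same argument yields $\dcc_k(f_k)\le k\cdot\dsc(f)$ for every partition-induced $f_k$, and taking the maximum on the left gives $\dcc_k(f)\le k\cdot\dsc(f)$, i.e.\ the first inequality.

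The randomized case is identical up to handling randomness: let $A$ be an $\error$-error algorithm, and treat $A$'s random tape as $P_1$'s private randomness (or as the CRS, if one wishes to reduce coordination cost), piggybacked on the state when and only when subsequent players need it; the per-input success probability of $\Pi_k$ equals that of $A$, so $\rcc_{k,\error}(f)\le k\cdot\rsc_\error(f)$. For the one-way variants, specialize to $t=1$: then $\dscone(f)=s$, and the simulation produces exactly $k-1$ messages, each of size $s$, sent in the order $P_1\to P_2 \to \cdots \to P_k$, with $P_k$ outputting from his view. This is precisely a $k$-player one-way (sententious) protocol of cost $(k-1)s$, which yields $\dccone_k(f)\le (k-1)\dscone(f)$ and, analogously, $\rccone_{k,\error}(f)\le (k-1)\rscone_\error(f)$.

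There is no real obstacle here, only bookkeeping: one must verify that the memory contents really can be encoded in $s$ bits (by definition of the streaming model), that the maximum over partitions in the definition of $\dcc_k$ is taken \emph{outside} the simulation so that a single simulation recipe handles all partitions uniformly, and that in the one-way case the final player $P_k$ is permitted to compute the output from his view without sending an additional message, exactly matching the convention adopted for $\dccone_k$ and $\rccone_{k,\error}$ in the preceding definitions.
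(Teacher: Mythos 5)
Your proof is correct and matches the standard simulation argument that the paper treats as folklore (the paper states the lemma without proof, remarking only that it ``trivially follows''). You correctly handle the worst-case-partition convention by fixing an arbitrary contiguous partition before simulating, count $tk-1 \le tk$ messages of at most $s$ bits each in the $t$-pass case and exactly $k-1$ messages in the single-pass case, and observe that the final player announces the output without sending a further message, matching the sententious convention. For the randomized case, the simplest justification for why randomness is not a problem is that the $s$-bit memory is, by the model's definition, the \emph{entire} state carried between stream items, so each player may toss its own fresh coins locally; any persistent randomness (e.g.\ a random hash function) must already live inside the $s$-bit state and is therefore transferred in the handoff. Your phrasing about ``piggybacking'' the random tape is unnecessary and slightly misleading, but it does not affect correctness.
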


Additionally, we let $\D_{k,\delta,\mu}(f)$ denote the communication complexity of $f$ with $k$ players and $\delta$ error under input distribution $\mu$ and $\ic_{k,\delta}(f)$ denote the information complexity of $f$ with $k$ players and $\delta$ error. We extend the notion of information complexity from \cite{959901} to this setting by summing the information costs over all of the players and allowing some probability of returning an incorrect answer. The following lemma from relates $\ic_{k,\delta}(f)$ to $\rccsim_{k,\delta}(f)$:

\begin{lemma}
For any function $f$,
    $$\ic_{k,\delta}(f) \le \rccsim_{k,\delta}(f)$$
\end{lemma}
This follows from the fact that the mutual information of the message $M$ that a player sends with their input must be smaller than the number of bits in the message.\\

Additionally, $\ic(f)$ is well-behaved in the sense that it satisfies the direct sum property:

\begin{theorem}
For any function $f$ and any positive integer $m$,
$$\ic_{k,\delta}(f^m) \ge m \cdot \ic_{k,\delta}(f)$$
where a $\delta$ probability of failure for $f^m$ is defined to mean a $\delta$ probability of failure on each instance.
\end{theorem}
This follows from the direct sum theorem on two players and no error by grouping all but player $i$ into the referee for each $i$ and summing over the information complexities of the protocols for each $i$. Then, to deal with the $\delta$ probability of error, we simply force the protocols to be deterministic and consider the function only on the values for which it is correct.

Finally, we introduce the linear sketch model of communication. In this setting, we have $n$ players and the only protocols allowed are of the following form:

There is some matrix $A$ such that if player $i$ receives input $x_i$, they compute $Ax_i$ and send the result to the referee. The referee then computes $\sum_{i=1}^n Ax_i$ and uses the result to compute the answer. We denote the randomized communication complexity of a function $f$ in this model by $\rcclin_{k,\delta}(f)$.\\

We  denote the randomized communication complexity of a function $f$ in this model given that the maximum frequency of any coordinate at the end of the stream is at most $T$ by $\rcclinT_{k,\delta}(f)$. Similarly, in the other models, when we bound the frequency of the coordinates, we will write $\DlinT_{k,\delta,\mu}$, $\ic_{k,\delta}^T$, and $\rscT_{k,\delta}$ for distributional complexity, information complexity, and randomized streaming complexity respectively.


\section{Communication Complexity for Functions on Non-Product Spaces}

\begin{theorem}\label{theorem:1-n-separation}
For every $t\ge 2$, there is a $t$-party symmetric function 
$f: D  \to \zo$ 
defined on $D\subseteq \zon= \left(\zo^{n/t} \right)^t$ 
such that for every error tolerance $\error<1/4$, 
$\dccone_{t-1}(f)\le t-1$ but $\rcc_{t,\error}(f)=  \bigomega{ n/t }$. 
In particular, 
as long as $t=\bigo{1}$ is a constant, we have $\dccone_{t-1}(f) =\bigo{1}$ and $\rsc_\error(f) \ge {\frac{1}{t}\cdot\rcc_{t,\error}(f)} = \bigomega{n}$.
\end{theorem}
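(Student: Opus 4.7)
The natural candidate is the $t$-party \textsc{Unique-Disjointness} promise problem. I would encode each player's input as a subset $S_i \subseteq [n/t]$ given by its characteristic string in $\zo^{n/t}$, and let $D \subseteq \zon$ consist of all tuples $(S_1,\dots,S_t)$ that satisfy the standard promise: either (a) the $S_i$ are pairwise disjoint, or (b) there is a unique element $a$ lying in every $S_i$ while every other element appears in at most one set. Define $f = 0$ on case (a) and $f = 1$ on case (b). The function is visibly symmetric since the promise and the output depend only on the multiset $\set{S_1,\dots,S_t}$.

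For the randomized $t$-player lower bound I would invoke the $\bigomega{n/t}$ lower bound on the randomized communication complexity of $t$-party \textsc{Unique-Disjointness} due to Gronemeier and Jayram, following Bar-Yossef et al.\ (the references \cite{g09,jayram2009hellinger,BarYossef:2004eya} cited in the introduction). Those bounds hold already in the stronger number-in-hand blackboard model with unbounded rounds and for every constant error tolerance below $1/4$. Since the one-way number-in-hand model is weaker, this immediately yields $\rcc_{t,\error}(f) = \bigomega{n/t}$ for every $\error<1/4$.

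For the $(t-1)$-player one-way upper bound, consider any worst-case partition of the $t$ blocks $S_1,\dots,S_t$ among players $P_1,\dots,P_{t-1}$. By the pigeonhole principle some $P_{i^*}$ receives two distinct blocks $S_j$ and $S_{j'}$. Under the promise, case (b) occurs if and only if $S_j \intersect S_{j'} \neq \emptyset$ (the unique shared element must lie in both), and case (a) occurs iff $S_j \intersect S_{j'} = \emptyset$. Hence $P_{i^*}$ can evaluate $f$ from her own input alone. The protocol is then immediate: each of $P_1,\dots,P_{i^*-1}$ sends one placeholder bit, $P_{i^*}$ sends the single bit equal to $f$, each subsequent player forwards that bit, and $P_{t-1}$ outputs it. When $i^*=t-1$, the earlier players each send one placeholder bit and $P_{t-1}$ outputs $f$ directly. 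The total number of bits sent is at most $t-2$, so $\dccone_{t-1}(f)\le t-1$. Combined with $\rsc_\error(f)\ge \frac{1}{t}\rcc_{t,\error}(f)$ from \autoref{lemma:sc-cc} and the $\bigo{1}$ upper bound on $\dccone_{t-1}(f)$ for constant $t$, the ``in particular'' statement follows.

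There is essentially no technical obstacle; the argument is precisely the pigeonhole observation sketched for set-disjointness in the introduction, repackaged as a theorem. The only minor points requiring care are phrasing the one-way protocol so that it is well-defined under every possible worst-case partition (including when the doubled-up player happens to be $P_{t-1}$, so the ``forward'' chain is empty), and citing a variant of the $t$-party disjointness lower bound that applies specifically to the \textsc{Unique-Disjointness} promise distribution we use rather than to the general \textsc{Disjointness} function.
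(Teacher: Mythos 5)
Your proposal is correct and follows essentially the same route as the paper: the same $t$-party unique-intersection disjointness promise problem on $D\subseteq(\zo^{n/t})^t$, the same pigeonhole observation that under any partition into $t-1$ players some player holds two blocks and can decide locally, and the same appeal to the multiparty disjointness lower bounds of \cite{BarYossef:2004eya,g09,jayram2009hellinger} for $\rcc_{t,\error}(f)=\bigomega{n/t}$. The only difference is cosmetic: you count the bits slightly more carefully and note that the last player's output is not charged, but the paper's bound $\dccone_{t-1}(f)\le t-1$ is the same.
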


\begin{proof}
Consider the $t$-party set disjointness problem $\textsc{Disj}_{n/t,t}$ defined as follows:
there are $t$ players $P_1,\dots, P_t$ such that every player $P_i$ holds a private indicator vector $\mathbf{x}_i\in \zo^{n/t}$ which represents a subset of $[n/t]$, 
i.e.,~$\textsc{Disj}_{n/t,t}(\mathbf{x}_1,\dots, \mathbf{x}_t)=\vee_{j=1}^{n/t} \left(\wedge_{i=1}^t x_{i,j}\right)$, where $x_{i,j}$ denotes the $j$-th coordinate of $\mathbf{x}_i$.
We consider the domain $D$ such that the vectors $\mathbf{x}_1,\dots, \mathbf{x}_t\in\zo^{n/t}$ are either
(1) pairwise disjoint, or (2) sharing a unique element $j\in[n/t]$.
Let $f$ be the function that computes $\textsc{Disj}_{n/t,t}$ on domain $D$.

On the one hand, it is easy to verify that $\dccone_{t-1}(f)\le t-1$. 
Indeed, at least one of the $t-1$ players obtains two distinct indicator vectors and hence can itself decide the output of $f$. 
The communication is $1$ bit per player to pass the result, and hence the total communication is bounded by $t-1$ since there are $t-1$ players.

On the other hand, the $\Omega(n/t)$ lower bound for $\rcc_{t,\error}(f)$ follows from the known lower bound for multi-player set disjointness 
(see \cite{BarYossef:2004eya}, which was improved to optimal in \cite{g09,jayram2009hellinger}). 
The lower bound for $\rsc_\error(f)$ immediately follows by \cref{lemma:sc-cc}.
\end{proof}


\section{Deterministic Communication and Streaming Complexity}

We first show that 2-player one-way communication complexity is equivalent to the streaming complexity of single-pass streaming algorithms in the deterministic setting. In the following theorem, we assume for convenience that $m$ is known to both players.

\begin{theorem}\label{thm:dsc-dcc}
For every symmetric function $f:\alphabet^m\to \range$, $\dccone_2(f)\le \dscone(f)\le \dccone_2(f) + \log m$.
\end{theorem}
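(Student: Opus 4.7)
The plan is to prove the two inequalities separately. The easier direction $\dccone_2(f)\le\dscone(f)$ is essentially definitional: given a single-pass streaming algorithm $\mathcal{A}$ using $s$ bits of memory, for any split $(i,m-i)$ Alice runs $\mathcal{A}$ on her prefix, sends the resulting $s$-bit state to Bob, and Bob continues the simulation on his suffix to output $f$. This yields a 2-player one-way protocol of cost at most $s$ under every partition, so $\dccone_2(f)=\max_i\dccone_2(f_{2,i})\le s$.

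The interesting direction is $\dscone(f)\le\dccone_2(f)+\log m$. My plan is a Myhill--Nerode style construction. Set $c\eqdef\dccone_2(f)$, and for each $i\in\set{0,1,\ldots,m}$ define the equivalence $x\equiv_i x'$ on $\alphabet^i$ by $f(x\cdot y)=f(x'\cdot y)$ for every $y\in\alphabet^{m-i}$. Write $N_i$ for the number of $\equiv_i$-classes. Two facts drive the argument. First, $N_i\le 2^c$, because in the one-way protocol for split $(i,m-i)$ any two of Alice's inputs producing the same $c$-bit message must be $\equiv_i$-equivalent --- otherwise some suffix $y$ would let Bob distinguish them via $f$ --- so the number of classes is at most the number of distinct messages. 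Second, the class of $x\cdot a$ depends only on the class of $x$ and on $a\in\alphabet$, since $f((x\cdot a)\cdot y)=f(x\cdot(a\cdot y))$ reduces equality over length-$(m-i-1)$ suffixes to equality over length-$(m-i)$ suffixes, and $x\equiv_i x'$ supplies the latter for free.

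With these two observations in hand, the streaming algorithm is the deterministic automaton whose state after reading the prefix $x_1\cdots x_i$ is the pair $(i,[x_1\cdots x_i]_i)$, with the well-defined transition $(i,C)\mapsto(i+1,\delta_i(C,a))$ baked in as a lookup table. At the end of the stream, level-$m$ equivalence collapses to equality of $f$-values, so the final class directly yields the output. Encoding the disjoint union of classes across all levels uses at most $\log\bigl((m+1)\cdot 2^c\bigr)=c+\log m+O(1)$ bits, matching the claimed bound.

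The heart of the argument, and where I would be most careful, is the first observation --- that the communication cost upper-bounds the number of equivalence classes. Everything else is bookkeeping: defining the correct equivalence, checking compatibility with appending a symbol, and confirming that the final-level equivalence reduces to equality of $f$-values. Symmetry of $f$ does not appear strictly essential to this particular argument, though it is consistent with the worst-case-partition convention used throughout the paper; the additive $\log m$ term is precisely the cost of carrying the stream position through the automaton indexing.
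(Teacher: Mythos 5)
Your proposal is correct, and it arrives at essentially the same streaming automaton as the paper --- state $(i, \textit{class of the prefix})$ with transitions hardcoded --- but you package the correctness argument via a Myhill--Nerode right-congruence rather than the paper's direct protocol simulation. The paper's construction has the algorithm carry the actual protocol message $m_i$ plus the index $i$; on each step it reconstructs any prefix $x''_1,\ldots,x''_i$ consistent with $m_i$, appends $x_{i+1}$, and re-runs the protocol for the split at $i+1$. The soundness of that ``reconstruct-any-compatible-prefix'' step is exactly your two observations: two of Alice's inputs inducing the same message must lie in the same $\equiv_i$-class (your $N_i\le 2^c$ bound), and appending a symbol is well-defined on classes (your $\delta_i$). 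So your version makes explicit the invariant the paper leaves implicit, and it also surfaces the fact --- which you correctly flag --- that symmetry of $f$ is not used: only the prefix/suffix (contiguous-partition) convention matters, which is already how the paper defines worst-case-partition one-way complexity. The one minor quibble is bookkeeping at the end: you encode the state as a class in the disjoint union over all levels, which costs $\lceil\log((m+1)2^c)\rceil$ bits; the paper stores the pair $(m_i,i)$ in $c+\lceil\log m\rceil$ bits. Both match the stated $\dccone_2(f)+\log m$ bound up to rounding, and the theorem is clearly intended at that granularity.
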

\begin{proof}
Obviously, $\dscone(f) \ge \dccone_2(f)$ since a 2-player communication protocol  simulates a streaming algorithm.
It remains to prove $\dscone(f) \le \dccone_2(f) + \log m$.

Suppose the input stream is $(x_1,\dots,x_m)\in\alphabet^m$,
and for every partition into $(x_1,\dots, x_i)$ and $(x_{i+1},\dots,x_m)$ there is a deterministic 2-player one-way  protocol $\Pi_2^i$ computing $f$.
We design the deterministic single-pass streaming algorithm $A$ for $f$ by simulating $2$-player one-way communication protocols under different partitions. 
The memory usage of $A$ is therefore bounded by the maximum communication cost of the simulated $2$-player protocols plus an index in $[m]$ recording the number of processed items.

Notice that when processing the item $x_{i+1}$, $A$ has already processed $x_1,\dots, x_i$ and has $(m_i,i)$ in memory. 
$A$ can thus reconstruct a compatible guess of $x''_1,\dots, x''_i$ that would induce exactly the message $m_i$ as in $\Pi_2^i$, and then sets the memory to be $(m_{i+1},i+1)$ where $m_{i+1}$ is the message sent in $\Pi_2^{i+1}$ when $P_1$ has $(x''_1,\dots, x''_i, x_{i+1})$ and $P_2$ has $(x_{i+2},\dots, x_m)$.
$A$ repeats this process for every $i=1,\dots,m-1$ and at the end it outputs $f(x_1,\dots,x_m)$.

Therefore, we complete the proof with $\dccone_2(f)\le \dscone(f) \le \dccone_2(f)+\log m$.
\end{proof}

Note that the additional index $i$ in the above simulation, which results in the additive $\log m$ term in the upper bound, indicates which $2$-player protocol should be simulated in the reconstruction,
and it is implicitly shared in the $2$-player communication case when $m$ is common knowledge.

When $m$ is not known, the memory used for the index follows any previously agreed upon encoding, which uses $O(\log m)$ space.
For functions that are well-defined for an arbitrary number of input items, e.g. the parity function, this index can be saved, and hence $\dscone(f) = \dccone_2(f)$.

For communication complexity among more players, we establish the following corollary.
\begin{corollary}\label{corollary:dcck-dcc2} 
	For every $k$-party symmetric function $f$,		
	\[ (k-1) \cdot \dccone_2(f) \le \dccone_k(f)  \le  (k-1) \cdot \left(\dccone_2(f) +\log k\right) \]
\end{corollary}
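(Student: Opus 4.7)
The plan is to prove the two inequalities separately, both by direct simulation, leveraging the correspondence between streaming and two-player one-way communication already established by \cref{thm:dsc-dcc} and \cref{lemma:sc-cc}.

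For the upper bound $\dccone_k(f) \le (k-1)(\dccone_2(f) + \log k)$, I chain \cref{thm:dsc-dcc} with the obvious streaming-to-communication reduction. Setting $m = k$ in \cref{thm:dsc-dcc} yields $\dscone(f) \le \dccone_2(f) + \log k$. From the corresponding single-pass streaming algorithm of memory size $s \eqdef \dscone(f)$, I construct a $k$-player one-way protocol in the natural way: each $P_i$ (holding $x_i$) receives the current memory state from $P_{i-1}$, runs the streaming update on $x_i$, and forwards the updated state, of size at most $s$, to $P_{i+1}$. Summing the $k-1$ transmitted messages yields $\dccone_k(f) \le (k-1) s \le (k-1)(\dccone_2(f) + \log k)$.

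For the lower bound $(k-1)\dccone_2(f) \le \dccone_k(f)$, I fix an optimal $k$-player one-way protocol $\Pi_k$ for $f$ and, without loss of generality, pad each of its $k-1$ inter-player messages to the common length $L$, so that $\dccone_k(f) = (k-1) L$. Choose a cut $i^*$ realizing the worst-case two-player partition, so that $\dccone_2(f)$ is the one-way cost of $f$ under the partition $(S_1, S_2) = (\set{1, \dots, i^*},\, \set{i^*+1, \dots, k})$. I then extract from $\Pi_k$ a two-player one-way protocol for this partition: Alice, holding $(x_1, \dots, x_{i^*})$, internally simulates $P_1, \dots, P_{i^*}$ of $\Pi_k$ (which requires only the local inputs and the incoming message at each step), and sends the outgoing message of length $L$ to Bob; Bob, holding $(x_{i^*+1}, \dots, x_k)$, completes the simulation of $P_{i^*+1}, \dots, P_k$ and outputs $f$. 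This two-player protocol has cost exactly $L$ on the $i^*$-cut, so $L \ge \dccone_2(f)$, and hence $\dccone_k(f) = (k-1)L \ge (k-1)\dccone_2(f)$.

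The main subtlety is the padding convention in the lower bound step. By padding every message to the common maximum length $L$, the total cost of $\Pi_k$ becomes exactly $(k-1)L$, which lets the single worst-cut lower bound multiply through into a $(k-1)\dccone_2(f)$ lower bound on the total. This padding is natural in one-way models since the receiver must be able to parse its incoming message unambiguously; without it the per-cut simulation would only give $|m_i| \ge \dccone_2^{(i)}(f)$ (the one-way cost under cut $i$), and summing over $i$ would not in general reach $(k-1)\dccone_2(f)$.
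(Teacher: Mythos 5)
Your upper-bound argument is correct and is exactly the paper's route (chain \cref{lemma:sc-cc} with \cref{thm:dsc-dcc}).

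The lower-bound argument, however, has a genuine gap in the padding step. You assert that after padding every message of an optimal $\Pi_k$ to the common maximum length $L$, you still have $\dccone_k(f) = (k-1)L$. This does not follow. The one-way communication cost is defined as $\max_x \sum_{i=1}^{k-1}|m_i(x)|$, not as $(k-1)\cdot\max_i |m_i|$, so padding all messages up to $L$ produces a protocol of cost $(k-1)L$, which in general strictly exceeds the cost of $\Pi_k$. You end up with $(k-1)L \ge \dccone_k(f)$ and $(k-1)L \ge (k-1)\dccone_2(f)$, which together say nothing about the inequality you want. You diagnose correctly that without padding the per-cut argument only gives $|m_i| \ge \dccone_2^{(i)}(f)$, whose sum need not reach $(k-1)\dccone_2(f)$; unfortunately the padding move does not fix this.

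The paper instead exploits the worst-case-partition freedom in the \emph{definition} of $\dccone_k$: choose the cut $i^*$ that realizes $\dccone_2(f)$, then give $z_1 = (x_1,\dots,x_{i^*})$ to $P_1$, $z_k = (x_{i^*+1},\dots,x_k)$ to $P_k$, and $z_j = \emptyset$ to every intermediate $P_j$. Under this partition every $m_j$ (for $j \in \{2,\dots,k-1\}$) is a deterministic function of $m_{j-1}$ alone, hence every $m_i$ is a function of $z_1$; moreover each $m_i$ determines $m_{k-1}$, which together with $z_k$ determines the output. So each $m_i$, by itself, is a valid Alice-to-Bob message for the same worst two-player cut, giving $|m_i| \ge \dccone_2(f)$ for every $i$, and thus $\sum_i |m_i| \ge (k-1)\dccone_2(f)$. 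The key idea you are missing is this choice of partition with empty intermediaries, which makes all $k-1$ cuts coincide with the single worst two-player cut; your per-cut observation then applies $k-1$ times to the same quantity without any need for padding.
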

\begin{proof}
	Combining \cref{lemma:sc-cc,thm:dsc-dcc}, it follows that
	\[\dccone_k(f) \le (k-1) \cdot \dscone(f) \le (k-1) \cdot \left(\dccone_2(f) +\log k\right)\]
	The other direction $\dccone_k(f) \ge (k-1) \cdot \dccone_2(f)$ holds by giving $z_j=\emptyset$ to every player $j\in\set{2,\dots, k-1}$ in the $k$-player case, when the problem degenerates to $2$-player communication but the same message has to be passed $k-1$ times. 
\end{proof}

Such a linear separation naturally extends to the communication complexity of $t$-player versus $k$-player protocols, as long as $2\le t<k$.
Thus, the deterministic communication complexity grows \emph{linearly} in the number of parties.

We remark that if every player must get a non-trivial input, i.e., at least one input element to the function, the linear growth remains for some but not all problems.
For example, the communication complexity of the parity of $k$ bits is linear in the number of players.
However, to decide whether $k$ elements in $[k]$ are distinct, the $2$-player protocol requires communication $\log\binom{k}{k/2}\approx k-\log\sqrt{k}$, whereas the $k$-player worst-case communication grows sublinearly, i.e. for $k$ players the communication is no more than $\sum_{i=1}^{k-1} \log\binom{k}{i} \ll (k-1)\cdot \log \binom{k}{k/2}$.

\section{Communication Complexity for Functions on a Product Space}
\label{sec:total}
\subsection{Separations for Randomized Communication Complexity}

In this section, we consider the communication cost of randomized multi-player protocols defined on product input distributions and present a $k\log k$ versus $t\log t$ separation between $k$-player and $t$-player communication complexity.

First we introduce the \sumequal problem (as used in Viola's work \cite{Viola:2013we}).

\begin{definition}
The \emph{$k$-player \sumequal over integers}, denoted by $\sumequal_k$, 
requires deciding whether $\sum_{i =1 }^{k} x_i = 0$,
where each player $P_i$ is given an integer $x_i$ as his private input together with the integer $k$ as public input shared by all players.
In the CRS model, an additional public random string is also known to all players.
The \emph{$k$-player \sumequal over $\Z_m$}, denoted by $\sumequal_{k,m}$, 
is defined similarly 
as $\sumequal_{k}$,
except that the input items are drawn from $\Z_m$ and the summation is over $\Z_m$,
for a publicly known $m$.
\end{definition}

\begin{lemma}[\cite{Viola:2013we}, Theorem~15 and Theorem~29]
\label{lemma:Viola}
For every $k\in\N$, $0\le \error\le 1/3$, 
and in the CRS model, 
the $k$-player $\error$-error communication complexity of $\sumequal$ satisfies:

\begin{alphaenumerate}
\item For every $m\in \N$, $\rccone_{k,\error}(\sumequal_{k,m}) = \bigo{k\log (k/\error)}$.

\item For every prime $p\in(k^{1/4},2k^{1/4})$, $\rcc_{k,\error}(\sumequal_{k,p}) = \bigomega{k\log k}$.%
\footnote{Viola's states the lower bound for constant $\error$, but it naturally holds for smaller $\error$ (sometimes not tight).}
\end{alphaenumerate}

In particular, $\rcc_{k,\error}(\sumequal_{k,p})=\bigtheta{k\log k}$ in the CRS model if $\error=\bigomega{1/\poly(k)}$.
\end{lemma}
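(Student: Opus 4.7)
My plan is to handle parts (a) and (b) separately; the ``in particular'' clause then follows since $\rcc \le \rccone$ always, and by plugging in $m = p = \Theta(k^{1/4})$ together with $\error = \Omega(1/\poly(k))$.

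For part (a), I would design a one-way CRS protocol based on a linear fingerprint of the running sum. Lifting each $x_i \in \Z_m$ to an integer in $[0, m)$, the integer sum $S \eqdef x_1 + \cdots + x_k$ lies in $[0, km)$ and satisfies $\sum x_i \equiv 0 \pmod m$ iff $S \in \set{0, m, 2m, \ldots, (k-1)m}$. Public coins sample a uniformly random prime $q$ from a range of size $\poly(k/\error)$. Each player $i$ receives $T_{i-1} = S_{i-1} \bmod q$, locally computes $T_i = (T_{i-1} + x_i) \bmod q$ using $x_i$, and forwards it with $\bigo{\log(k/\error)}$ bits. Player $k$ outputs YES iff $T_k \equiv a m \pmod q$ for some $a \in \set{0, 1, \ldots, k-1}$. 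Completeness is exact; for soundness, each of the nonzero integers $S - a m$ is bounded by $km$ in magnitude and hence has few prime divisors, so by the prime number theorem a suitable choice of the $q$-range, together with a union bound over the $k$ values of $a$, forces the false-positive probability below $\error$. The $k-1$ messages yield total communication $\bigo{k \log(k/\error)}$.

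For part (b), I would carry out a rectangle/corruption argument in the spirit of Viola's original proof. Fix a prime $p \in (k^{1/4}, 2k^{1/4})$ and let $\mu$ be uniform on $\set{x \in \Z_p^k : \sum_i x_i \bmod p \in \set{0,1}}$. Given a $\error$-error randomized protocol of cost $c$, fixing its private coins produces a deterministic protocol that errs under $\mu$ with probability at most $2\error$ and whose at most $2^c$ leaves induce combinatorial rectangles $R = A_1 \times \cdots \times A_k$ in $\Z_p^k$. The crux is a Fourier claim: if $R$ is sufficiently large, then the pushforward of the uniform distribution on $R$ under the map $x \mapsto \sum_i x_i \bmod p$ is $o(1)$-close to uniform on $\Z_p$. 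Writing its nontrivial Fourier coefficients as $\prod_i (|A_i|^{-1} \widehat{\mathbf{1}_{A_i}})$, each factor is bounded by $1 - \bigomega{1/p}$ whenever $|A_i| \ge 2$, and a large rectangle must have many such coordinates. Since a near-uniform pushforward places comparable mass on $\set{0}$ and $\set{1}$, the rectangle cannot be monochromatic for the protocol output, which forces $c = \bigomega{k \log p} = \bigomega{k \log k}$.

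The main obstacle is the quantitative Fourier step in (b). With $p$ as small as $k^{1/4}$, each Fourier factor only contributes a $1 - \bigomega{k^{-1/4}}$ suppression, so one must ensure that at least $\omega(k^{1/4})$ coordinates satisfy $|A_i| \ge 2$. The tradeoff between ``frozen'' coordinates ($|A_i| = 1$, which shrink $R$ but concentrate its sum distribution) and ``active'' coordinates (which drive Fourier suppression) must be carefully balanced so that the counting argument still produces $2^{\bigomega{k \log k}}$ leaves. A further subtlety is that $\mu$ is not a product distribution, so the corruption step that upgrades ``mostly correct on $\mu$'' to a monochromaticity statement needs care, carried out by comparing $\mu$ with the fully uniform distribution on $\Z_p^k$ and bookkeeping the resulting constant-factor losses.
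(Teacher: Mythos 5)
This lemma is not proved in the paper at all: it is imported verbatim from Viola (Theorems~15 and~29 in \cite{Viola:2013we}) and only cited. So there is no in-paper argument to compare against line-by-line; I will instead compare your sketches to Viola's actual techniques, parts of which the paper does replay inside its direct-sum argument (Lemma~\ref{lem: partial uniform} and Claim~31).

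Your sketch for part~(a) is essentially the right protocol: coarse bucketing of the running sum plus a modular fingerprint under a public random prime, which is the same two-piece decomposition Viola uses and which this paper re-uses in the commented-out proof of a later corollary. One caveat you glide past: the number of prime divisors of a nonzero candidate $S-am$ is $\Theta(\log(km))$, so your soundness bound has a $\log\log m$ in the bit-length of $q$; this is benign in all of the paper's applications (where $m=\poly(k)$, absorbing $\log\log m$ into $\log k$), but it means the literal statement ``for every $m\in\mathbb{N}$, cost $O(k\log(k/\error))$'' needs either the coarse-bucketing preprocessing step you describe or an $m\le 2^{\poly(k)}$ hypothesis to be read literally.

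Your sketch for part~(b) reaches the same destination as Viola but by a genuinely different road. Viola (and this paper, in \cref{lem: partial uniform}) first uses entropy bookkeeping to show that for a large rectangle, a $\Omega(1)$-fraction of coordinates have conditional min-entropy $\geq 1$, decomposes each such coordinate distribution as a mixture of uniform-on-two-points distributions, and only then invokes a Fourier anticoncentration claim (Claim~31) for sums of independent two-point uniforms. You instead Fourier-analyze the rectangle indicators $\mathbf{1}_{A_i}$ directly, bypassing the min-entropy decomposition. Both are valid; the entropy route is what makes Viola's argument compose cleanly into the paper's direct-sum induction over copies, whereas your direct route is slightly cleaner for a single copy. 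One quantitative inaccuracy: for a worst-case $A_i$ of two adjacent residues, the nontrivial Fourier coefficient is $|\cos(\pi t/p)|=1-\Theta(t^2/p^2)$, so the per-coordinate suppression is $1-\Omega(1/p^2)$, not $1-\Omega(1/p)$ as you wrote. This does not break your argument --- a rectangle of size $\geq p^k/2^{\gamma k\log p}$ forces $\Omega(k)$ active coordinates, and $\Omega(k)\gg p^2=\Theta(k^{1/2})$, so the product of factors (and even after the $\sqrt p$ loss from converting the Fourier bound to total variation) is still $e^{-\Omega(\sqrt k)}$ --- but the threshold $\omega(k^{1/4})$ active coordinates you state is too weak; you need $\omega(k^{1/2}\log k)$.
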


We remark that Viola's lower bound for $\sumequal_{k,p}$ is proved for a non-product distribution $\mu_H$ whose support covers exactly a $2/p$ fraction of the whole (product) input  space.
Thus if a $k$-player protocol solves $\sumequal_{k,p}$ with error $\error\le 1/k$ on a uniform distribution $\mu$ over the whole input space,
then its error with respect to $\mu_H$ is bounded by $\frac{1/k}{2/p}< k^{-3/4}$.
Notice that the two player version of $\sumequal_{k,p}$ degenerates to testing equality over $\Z_p$ whose upper bound is $\bigo{\log(1/\error)+\log\log k}$,
see more details in \cref{sec:testing equality}.
By \cref{lemma:Viola}, the $\bigomega{k}$ separation in \cref{corollary:lower bound} naturally follows.

\begin{corollary}\label{corollary:lower bound}
	For every prime $p\in(k^{1/4},2k^{1/4})$ and $\error\le 1/\poly(k)$, there is a product distribution $\mu$ such that $\rcc_{k,\error}^{\mu}(\sumequal_{k,p}) = \bigomega{k\log k}$,
	$\rccone_{2,\error}(\sumequal_{k,p}) = \bigo{\log k}$.
\end{corollary}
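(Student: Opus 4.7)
The plan is to establish the two bounds separately, using the ingredients already summarized in the discussion immediately preceding the corollary.

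\emph{Lower bound.} I would take $\mu$ to be the uniform distribution on $\Z_p^k$, which is trivially a product distribution. Viola's $\bigomega{k\log k}$ bound from \cref{lemma:Viola}(b) is proved against a specific hard distribution $\mu_H$ that is uniform on a subset of $\Z_p^k$ of density $2/p$ (namely the inputs whose sum modulo $p$ lies in $\{0,1\}$). The key observation, already stated just above the corollary, is a simple change-of-distribution estimate: if a given protocol errs on a set $E$ with $\Pr_\mu[E]\le\error$, then
\[
\Pr_{\mu_H}[E] \;=\; \frac{\Pr_\mu[E\cap\mathrm{supp}(\mu_H)]}{\Pr_\mu[\mathrm{supp}(\mu_H)]} \;\le\; \frac{\error}{2/p} \;=\; \frac{p\error}{2}.
\]
Since $p=\bigo{k^{1/4}}$ and $\error=1/\poly(k)$ with the polynomial chosen to dominate $p$ sufficiently, the induced error on $\mu_H$ is still $o(1)$, well inside the range in which Viola's lower bound applies (as noted in the footnote of \cref{lemma:Viola}). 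Consequently any protocol with $\error$-error on $\mu$ must use $\bigomega{k\log k}$ bits, proving the lower bound.

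\emph{Upper bound.} For the one-way two-player side, I would fix an arbitrary partition of $[k]$ into Alice's indices $I_A$ and Bob's indices $I_B$, and set $a \eqdef \sum_{i\in I_A} x_i \bmod p$ and $b \eqdef \sum_{i\in I_B} x_i \bmod p$. Since $\sumequal_{k,p}(x)=1$ iff $a\equiv -b \pmod p$, the problem reduces to a single-direction equality test on $\Z_p$. The standard CRS-based fingerprinting scheme achieves this with $\bigo{\log(1/\error)+\log\log p}$ bits, with the detailed analysis deferred to \cref{sec:testing equality}. For $\error=1/\poly(k)$ and $p=\bigo{k^{1/4}}$, this quantity is $\bigo{\log k}$, which establishes the claimed one-way communication upper bound. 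Note that the partition was arbitrary, so the bound holds under the worst-case partition appearing in the definition of $\rccone_{2,\error}$.

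\emph{Main obstacle.} Both halves follow by direct assembly of known ingredients; the only point that requires care is ensuring that the factor $p/2$ lost in the change-of-distribution step of the lower bound still leaves the induced error on $\mu_H$ within the regime covered by Viola's lemma. Since $p/2 = \bigo{k^{1/4}}$ grows polynomially in $k$ while $\error = 1/\poly(k)$ can be taken arbitrarily small, this amounts to choosing the polynomial in $\error$ large enough, a mild bookkeeping step rather than a substantive difficulty.
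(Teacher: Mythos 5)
Your proposal matches the paper's own argument: use the uniform (hence product) distribution on $\Z_p^k$, transfer the small error bound from $\mu$ to Viola's hard distribution $\mu_H$ (which is $\mu$ conditioned on a density-$2/p$ event) via the same change-of-distribution ratio to conclude error $\frac{p\error}{2} = o(1)$ on $\mu_H$ and invoke Lemma~\ref{lemma:Viola}(b), and for the upper bound reduce to equality over $\Z_p$ with the fingerprinting bound $O(\log(1/\error)+\log\log p)=O(\log k)$ from Appendix~\ref{sec:testing equality}. This is the same route the paper takes, laid out with the same $p/2$ density factor and the same two-player reduction.
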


For a larger error tolerance, say $\error$ is a constant, we have a stronger separation between $k$-party communication and $t$-party communication. However, the hard distribution is slightly non-product, that is, it is a product distribution on any $k-1$ out of the $k$ players. 

\begin{corollary}\label{theorem:log-separation}
For every $k\in \N$, there is a $k$-party symmetric function $f$ such that

\begin{alphaenumerate}
	\item For any product distribution $\mu$, 
              for every $2\le t\le k$ and $0\le \error \le 1/3$, 
	$\rccone_{t,\error}^{\mu}(f)=\bigo{t\log(t/\error)}$.
	In particular, $\rccone_{2,\error}^{\mu}(f)=\bigo{\log(1/\error)}$.

	\item There exists a distribution $\mu_H$, which is product on any
              $k-1$ out of $k$ players, for which 
              $\rcc_{k,\error}^{\mu}(f)= \bigomega{k\log k}$ 
              as long as $\error \le 1/3$.	
\end{alphaenumerate}
\end{corollary}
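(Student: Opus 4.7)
I will take $f = \sumequal_{k,p}$ for a prime $p\in(k^{1/4},2k^{1/4})$, so that $f$ is a symmetric $k$-party function and both parts can be extracted from \cref{lemma:Viola} in the CRS model. The upper bound in (a) will come from locally aggregating sums modulo $p$ at each of the $t$ players and then invoking Viola's one-way protocol for the smaller \sumequal instance; the lower bound in (b) will come from identifying the natural hard distribution $\mu_H$ for $\sumequal_{k,p}$ and checking that it is product on any $k-1$ of its $k$ marginals.

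For part (a), I fix any partition $(S_1,\dots,S_t)$ of $\set{1,\dots,k}$ and any product distribution $\mu$. Each player $i$ begins by locally computing $s_i \eqdef \sum_{j\in S_i} x_j \bmod p$, collapsing its own coordinates into a single element of $\Z_p$. The task has reduced to $\sumequal_{t,p}$ on the inputs $(s_1,\dots,s_t)$, so \cref{lemma:Viola}(a), with $k$ replaced by $t$ and $m=p$, yields a one-way CRS protocol with communication $\bigo{t\log(t/\error)}$. Since the protocol is oblivious to $\mu$, the bound transfers trivially to every product input distribution. For the $t=2$ specialization the residual problem is $s_1+s_2\equiv 0\pmod{p}$, i.e.\ an equality test between $s_1$ and $-s_2$ in $\Z_p$, which admits a CRS hashing protocol using $\bigo{\log(1/\error)}$ bits, uniformly in $p$.

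For part (b), I will set $\mu_H$ to be the uniform distribution on $\set{(x_1,\dots,x_k)\in\Z_p^k \st \sum_{i=1}^k x_i \bmod p \in\set{0,1}}$, whose support has size $2 p^{k-1}$. This is precisely the distribution underlying Viola's rectangle argument (cf.\ the remark following \cref{lemma:Viola}), so \cref{lemma:Viola}(b) directly gives $\rcc_{k,\error}^{\mu_H}(f)=\bigomega{k\log k}$ for every $\error\le 1/3$. It remains to verify the $(k-1)$-wise product property of $\mu_H$: since $\mu_H$ is invariant under coordinate permutations, it suffices to examine the marginal on $(x_1,\dots,x_{k-1})$. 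Fixing any such tuple, exactly two values of $x_k$, namely $-\sum_{i<k} x_i \bmod p$ and $1-\sum_{i<k} x_i \bmod p$, lie in the support, each receiving equal mass. Hence the marginal on any $k-1$ coordinates is the uniform distribution on $\Z_p^{k-1}$, which is a product distribution as required.

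The only step that is not a direct appeal to \cref{lemma:Viola} is the product-marginal check in the previous paragraph, and I expect this to reduce to the one-line counting above together with the permutation-symmetry of the \sumequal predicate. Strengthening $\mu_H$ to be product on all $k$ coordinates simultaneously is the open problem flagged in the introduction, so I would not attempt it here.
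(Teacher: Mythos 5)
Your proof is correct and follows essentially the same route as the paper: take $f=\sumequal_{k,p}$, get the upper bound in (a) by locally collapsing each player's coordinates and invoking Lemma~\ref{lemma:Viola}(a) with $t$ players, and get the lower bound in (b) from Lemma~\ref{lemma:Viola}(b) on Viola's hard distribution $\mu_H$, which you correctly identify and verify is uniform (hence product) on any $k-1$ of the $k$ coordinates. The only detail worth noting explicitly is that the CRS in Lemma~\ref{lemma:Viola}(a) can be fixed by averaging since part (a) is a distributional statement, which you implicitly rely on; the paper elides this too.
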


For $\error\ge 1/\poly(t)$,  
the gap between $\rcc_{k,\error}^{\mu}(f)$ and $\rccone_{t,\error}^{\mu}(f)$ is bounded as below:
\[\rcc_{k,\error}^{\mu}(f) \;\big/\; \rccone_{t,\error}^{\mu}(f) = \bigomega{\frac{k\log k}{t\log t}}\]

The outline of the proof of \cref{theorem:log-separation} was given in Section \ref{sec:intro}.
That is, the upper bound in part (a) follows from applying $k = t$ in 
the first part of Lemma \ref{lemma:Viola},
while the lower bound in part (b) follows from the second part of Lemma \ref{lemma:Viola}.

\subsection{Tightness of the Communication Complexity Separation}

The following theorem and corollary show tightness of our separations.

\begin{theorem}\label{theorem:logk-tight}
For every $k$-party function $f: \alphabet^k\to \range$, product distribution $\mu$ over $\alphabet^k$, and error tolerance $\error <1/3$, 

then the following holds:
\begin{align*}
\rccone_{k,\error}^{\mu}(f) =& \begin{cases}
	\bigo{k\log k}	 \cdot  \rccone_{2,\error}(f) & \mbox{if $\error={\bigomega{1}}$}\\
	\bigo{k}	 \cdot  \rccone_{2,\error}(f)+\bigo{k\log k}	& \mbox{if $\error \le 1/k^{\bigomega{1}} $}
\end{cases}  
\end{align*}
When $\delta > 0$, we have that $\rccone_{2,\error}(f) = \Omega(\log k)$ and thus the following holds:
\begin{align*}
{\rccone_{k,\error}^{\mu}(f)}\; \big/\; {\rccone_{2,\error}(f)} \le & 
\bigo{k \cdot \left(1+\frac{\log k}{\log(1/\error)}\right) }
= \begin{cases}
	\bigo{k\log k}	  & \mbox{if $\error={\bigomega{1}}$}\\
	\bigo{k}		& \mbox{if $\error=1/k^{\bigomega{1}}$}
	\end{cases}
\end{align*}

\end{theorem}

\begin{proof}

First we let $\Pi_0$ be the optimal $\error$-error $2$-player one-way protocol $\Pi_0$ that computes $f$ with communication $C=\rccone_{2,\error}(f)$,
and construct a new protocol $\Pi_2$ by taking 
the majority of $M$ independent parallel copies of $\Pi_0$ such that $\Pi_2$ has error $\varepsilon = \error^2/(16k^2)$ and communication ${C M}$.
Recall that $\Pi_0$ has $\error<1/3$,  it suffices to let $t$ and $M$ be defined as in Lemma~\ref{lemma:amp} below:
\begin{align}\label{equ:M}
t &= \ceil{\log\left(\error/(16k^2) \right)/\log\left(4\error(1-\error) \right)}\\
M &=1+2t=1+2\ceil{\frac{\log(1/\error)+2\log k+4}{\log(1/\error)+\log(1/(1-\error))-2}}=\bigtheta{1+\frac{\log k}{\log(1/\error)}}	
\end{align}

\begin{lemma}\label{lemma:amp}
Let $t\in \N$ and $X_1,X_2,\dots, X_{2t+1}$ be i.i.d. binary random variable such that $\Pr[X_i=1]=\error<1/2$ for every $i\in[t]$,
and let $Y=\mathbf{Majority}\set{X_1,\dots, X_{2t+1}}$ be the majority of all $X_i$'s.
Then $\Pr[Y=1]\le \gap$ as long as $t\ge \log(\gap/\error)/\log(4\error(1-\error))$.	
\end{lemma}

\begin{proof}
For $0<\error<1/2$ and $t\ge \log(\gap/\error)/\log(4\error(1-\error))$,
we have 
\begin{align*}
	\Pr\left[ Y=1 \right] &= \Pr\left[ |\set{i\st X_i=1}|\ge t+1 \right]
	\\
	&= \sum_{j=t+1}^{2t+1} \binom{2t+1}{j} \error^j(1-\error)^{2t+1-j}\\
	&\le \sum_{j=t+1}^{2t+1} \binom{2t+1}{j} \error^{t+1}(1-\error)^{t}\\
	&=\frac{2^{2t+1}}{2} \cdot \error^{t+1}(1-\error)^{t}
	=\left( 4\error(1-\error) \right)^t \cdot\error\\
	&\le \frac{\gap}{\error} \cdot\error =\gap
\end{align*}

The first inequality holds because $\error<1/2$ and hence $\error^j(1-\error)^{2t+1-j}\le \error^{t+1}(1-\error)^{t}$ for $j\ge t+1$.	
The second inequality holds because $4\error(1-\error)<1$ for $\error<1/2$, and $\left( 4\error(1-\error) \right)^t\le \left( 4\error(1-\error) \right)^{\log(\gap/\error)/\log(4\error(1-\error))} = \gap/\error$. 
Thus, we have proved that $\Pr[Y=1]\le \gap$ for $t \ge {\log(\gap/\error)/\log(4\error(1-\error))}$.
\end{proof}

Note that $\Pi_2$ is still a $2$-player one-way protocol but has communication ${CM}$. 
Furthermore, we remark that $CM=\bigomega{\log k}$ for $\error>0$, since the error probability must be $\error\ge 1/2^C$ if it is not zero, and hence $M=\bigtheta{1+\frac{\log k}{\log(1/\error)}}=\bigomega{1+\frac{\log k}{C}}$.

Second we prove that for every product input distribution $\mu$ over $\alphabet^k$, the $k$-party function $f$ can be evaluated by a randomized $k$-player one-way protocol $\Pi_k$ with communication $\bigo{k\cdot (CM+\log k)}$ and error $\error/2$ with respect to $\mu$.
The idea is that given the product input distribution $\mu$, each player $P_i$ acts as follows:
\begin{enumerate}
	\item $P_i$ first assumes that the received message $m_{i-1}$ from $P_{i-1}$ will lead to a correct answer with probability $\ge 1-\frac{\error}{4k}$ with respect to $\mu$.

	\item $P_i$ samples a possible input $x'_1,\dots,x'_{i-1}$ of previous players $P_1,\dots, P_{i-1}$, such that if Alice gets input $(x'_1\dots, x'_{i-1})$ and sends $m_{i-1}$, then with probability $\ge 1-\frac{\error}{4k}$ the protocol $\Pi_2$ leads the correct answer. 
	The probability is taken over internal randomness and Bob's input following the marginal distribution of $\mu$ on the remaining players (here we use the condition that $\mu$ is a product distribution).

	\item Finally, $P_i$ sends a message $(m_i,i)$ of length $CM+\log k = \bigo{CM}$, where $m_i$ is the message that Alice would send in $\Pi_2$ when her input is $(x'_1,\dots,x'_{i-1},x_i)$.
\end{enumerate}

By a union bound the error probability of $\Pi_k$ is bounded by $k\cdot (\frac{\error}{4k}+\frac{\error}{4k})<\error/2$ with respect to $\mu$.
The fact that $\mu$ is a product distribution is used in the second step where the sampling process relies on that previous players' inputs are independently distributed from that of future players.

Thus we finish the proof and conclude that 
$\rccone_{k,\error}^{\mu}(f)\le \bigo{kCM}$.
\end{proof}

Notice that in the proof of Theorem~\ref{theorem:logk-tight}, every message in $\Pi_k$ has the length bounded by $\bigo{CM}$, which gives an upper bound for the single-pass streaming complexity.

\begin{corollary}\label{corollary:separation upper bound}
For every $k$-party function $f$ and product input distribution $\mu$, and for every $\error < 1/3$, 
$\rsc_{\error}^{\mu}(f) \le\rscone_{\error}^{\mu}(f) \le  \bigo{1+\frac{\log k}{\log(1/\error)}} \cdot \rccone_{2,\error}(f).$
\end{corollary}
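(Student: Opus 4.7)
The plan is to extract the per-message length bound implicit in the proof of Theorem~\ref{theorem:logk-tight} and convert the resulting $k$-player one-way protocol into a single-pass streaming algorithm.

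First I would observe, following the remark preceding the corollary, that the protocol $\Pi_k$ constructed in the proof of Theorem~\ref{theorem:logk-tight} admits a \emph{per-message} bound of $\bigo{CM}$, not merely a total-communication bound of $\bigo{kCM}$. Concretely, each player $P_i$ transmits a single message $(m_i, i)$ whose length is at most $CM + \log k = \bigo{CM}$ (the counter is absorbed using the fact $CM = \bigomega{\log k}$ established in the theorem's proof), where $C = \rccone_{2,\error}(f)$ and $M = \bigtheta{1 + \log k / \log(1/\error)}$.

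Next I would convert $\Pi_k$ into a single-pass streaming algorithm $\mathcal{A}$ for $f$ under $\mu$. The state of $\mathcal{A}$ stores the current message of $\Pi_k$, and upon reading the $i$-th stream item $x_i$, $\mathcal{A}$ simply executes the local computation that $P_i$ performs in $\Pi_k$: using the stored message $m_{i-1}$, the new symbol $x_i$, and the product marginal of $\mu$ on the remaining coordinates, it samples a hypothetical prefix, simulates the amplified two-player protocol $\Pi_2$, and overwrites its memory with $m_i$. When the stream terminates, $\mathcal{A}$ applies the output rule of $\Pi_2$. The memory usage of $\mathcal{A}$ equals the maximum message length of $\Pi_k$, and the error guarantee transfers from $\Pi_k$ to $\mathcal{A}$ without change, yielding $\rscone_\error^\mu(f) \le \bigo{1 + \log k / \log(1/\error)} \cdot \rccone_{2,\error}(f)$.

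The only subtlety worth double-checking is that each local update of $\mathcal{A}$ is a genuine streaming step: the sampling in $P_i$'s role depends only on the message $m_{i-1}$ (which sits in memory) and on the marginal of $\mu$ on coordinates $i+1, \ldots, k$, not on any past stream items beyond what is already encoded in $m_{i-1}$, so no ``look back'' is required. The trivial inequality $\rsc_\error^\mu(f) \le \rscone_\error^\mu(f)$ then completes the chain, since a single-pass algorithm is a special case of a multi-pass one. I do not anticipate any real obstacle here; the corollary is essentially the observation that Theorem~\ref{theorem:logk-tight} already controls the dominant quantity (maximum message length) rather than merely the total communication.
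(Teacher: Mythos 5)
Your proposal is correct and takes essentially the same approach the paper intends: the paper's entire argument for this corollary is the one-line observation (made explicitly in the sentence immediately preceding the corollary) that every message in the $\Pi_k$ constructed in Theorem~\ref{theorem:logk-tight} has length $\bigo{CM}$, and a $k$-player one-way protocol with that per-message bound yields a single-pass streaming algorithm with that much memory. You correctly identify the same per-message bound, correctly absorb the $\log k$ counter using $CM = \bigomega{\log k}$, and correctly note the trivial $\rsc^\mu_\error \le \rscone^\mu_\error$ chain, so there is no gap.
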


\section{A Direct Sum for Viola's Problem}
We next turn to our direct sum theorem for Viola's problem, which is a crucial
building block for our streaming application. 
Note that the theorem is proved for $\error < 1/9$, but lower bounds for large error tolerance such as $\error=1/3$ can be obtained using a standard error amplification argument.

	\begin{theorem}\label{thm:A1}
		Let $F:\left(\Z_p^m\right)^k \to \zo^m$ be the $k$-party function computing $m$ independent copies of $\sumequal_{k,p}$, where $p$ is a prime between $k^{1/4}$ and $2k^{1/4}$. 
		For every error tolerance $\error\in(0,1/9)$,  
        we say a protocol $\Pi$ is \emph{correct with probability $1-\error$}
		if there is a reconstruction
		function $G$ such that for every fixed $i \in [m]$ and input $x\in \left(\Z_p^m\right)^k$, $G\big(i, \Pi_{out}(x)\big)$ equals the
		output of the $i$-th instance of $\sumequal_{k,p}$ with probability at least
		$1-\error$, over the internal randomness of $\Pi$. 
		Then the communication cost of any $\Pi$ which is correct with probability $1-\error$, is $\bigomega{m k\log k}$.
	\end{theorem}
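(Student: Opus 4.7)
The plan is to prove the theorem by contradiction, following the rectangle-based strategy suggested by Viola's single-copy argument. Suppose a one-way protocol $\Pi$ with communication $c=\littleo{mk\log k}$ were correct in the sense of the hypothesis. I will work under the hard input distribution $\mu_H^m$ in which each of the $m$ copies is independently drawn from Viola's single-copy hard distribution $\mu_H$ for $\sumequal_{k,p}$; recall $\mu_H$ gives the first $k-1$ players i.i.d.\ uniform $\Z_p$-inputs while the $k$-th player's input is chosen so that $\sum_j x_j\bmod p$ is a uniform random bit in $\zo$. Because $\Pi$ is one-way and $P_k$ speaks last, the joint view of $P_1,\ldots,P_{k-1}$ partitions $(\Z_p^m)^{k-1}$ into at most $2^c$ combinatorial rectangles $R=R_1\times\cdots\times R_{k-1}$. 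By an averaging argument, a constant fraction of $\mu_H^m$ falls in \emph{good} rectangles satisfying $\log|R_j|\ge m\log p-\bigo{c/k}$ for most players $j$, and I will show $\Pi$ fails on such rectangles.

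Fix a good rectangle $R$ and a good player $j$. Since $X_j\mid R$ is uniform on $R_j$, subadditivity of Shannon entropy gives $\sum_{i=1}^m H(X_j^{(i)}\mid R)\ge\log|R_j|\ge m\log p-\Delta$ with $\Delta=\bigo{c/k}$, and a Markov-style count implies that for all but $\bigo{\Delta/\gamma}$ copies $i$ the marginal $X_j^{(i)}\mid R$ has entropy at least $\log p-\gamma$, hence min-entropy at least $1$ for sufficiently small $\gamma$. Any distribution on $\Z_p$ of min-entropy $\ge 1$ decomposes as a convex combination of distributions uniform on two-element subsets. Since the players' inputs remain independent after conditioning on a rectangle, $\sum_{j=1}^{k-1}X_j^{(i)}\bmod p$ is a convolution of such two-point distributions across $\bigomega{k}$ good players, and Fourier analysis on $\Z_p$ then gives exponential decay: each nonzero character of a two-point distribution has magnitude $|\cos(\pi t(b-a)/p)|<1$ (using primality of $p$), and the product over the good players pushes the total variation distance from the uniform distribution on $\Z_p$ down to $\littleo{1}$, in fact exponentially small in $k$.

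The main obstacle, and the reason the direct sum is nontrivial, is that this per-copy conclusion must be promoted to a statement about many copies \emph{simultaneously}: the transcript couples the copies, so conditioning on already-answered copies could in principle concentrate the remaining ones. I will handle this by an inductive rectangle argument. At stage $i$, I condition on the partial sums $S^{(1)},\ldots,S^{(i-1)}$ that $P_k$ could have used to answer prior copies; each conditioning costs at most $\log p$ of rectangle entropy, so after $\bigomega{m}$ stages the accumulated loss is $\bigo{m\log p}$, still well below the starting surplus $m(k-1)\log p-\bigo{c}$ whenever $c=\littleo{mk\log k}=\littleo{mk\log p}$. Because coordinate-fixing preserves the rectangle product structure across players, the entropy bound and Fourier estimate apply at each stage, so $\bigomega{m}$ copies $i^*$ retain $\sum_j X_j^{(i^*)}\bmod p$ that is $\littleo{1}$-close to uniform even conditioned on the transcript and on the prior copies' answers.

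To close the argument, note that under $\mu_H$ the $i^*$-th answer is the uniform random bit in $\zo$ determined by whether $\sum_j x_j^{(i^*)}\bmod p$ is $0$ or $1$. Since $P_k$'s view (the transcript together with his own input $X_k^{(i^*)}$) is $\littleo{1}$-close to being independent of that bit, any reconstruction $G(i^*,\Pi_{out})$ must err with probability at least $1/2-\littleo{1}>\error$ on average under $\mu_H^m$, and therefore on some fixed input, contradicting the $\error$-correctness hypothesis. The choice $p=\bigtheta{k^{1/4}}$ balances the Fourier decay against the error budget exactly as in Viola's single-copy bound, yielding the desired $\bigomega{mk\log k}$ lower bound.
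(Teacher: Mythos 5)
Your proposal tracks the paper's high-level strategy closely: work under the mixed hard distribution, decompose the transcript into rectangles, pass from Shannon entropy to min-entropy $\ge 1$ on many copies, decompose each high-min-entropy coordinate into a convex combination of two-point distributions, use convolution/Fourier decay over $\Z_p$ to show the $k$-th player's coordinate is near-uniform, and finally argue this gives $\approx 1/2$ decoding error. You also correctly identify that the crux is the \emph{joint} (cross-copy) non-uniformity, not per-copy uniformity. That much is sound.

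The gap is in your inductive step, and it is a real one. You say you will condition on the \emph{partial sums} $S^{(1)},\ldots,S^{(i-1)}$ of the prior copies, paying $\log p$ of ``rectangle entropy'' per stage, and then you assert ``coordinate-fixing preserves the rectangle product structure.'' These two statements are inconsistent: conditioning on a cross-player sum $\sum_{j} X_j^{(i')}$ is \emph{not} a coordinate fixing, and it destroys the product structure $X_1\times\cdots\times X_{k-1}$ across players. After that conditioning, your per-player two-point decompositions and the convolution argument on $\Z_p$ no longer apply, because the summands are no longer independent. If instead you actually fix the prior \emph{coordinates} $X_j^{(i')}$ for $j\le k-1$ (which is the paper's move and does preserve product structure), the entropy accounting you wrote becomes wrong by a factor of $k-1$: each stage removes up to $(k-1)\log p$ bits of joint entropy, so after $\Theta(m)$ stages the naive accumulated loss is $\Theta(m(k-1)\log p)$, which is of the same order as the starting surplus $m(k-1)\log p-\bigo{c}$ and therefore cannot be written off as ``well below'' it. The paper's Lemma~\ref{lem: partial uniform} avoids both problems: at stage $\ell$ it does not subtract entropy cumulatively but instead \emph{re-derives} the conditional rectangle mass bound $\Pr[G^m_{-k}\in R_{-k}\mid \event_x]\ge 1/(\eta\alpha k^{\gamma m k})$ for a ``good'' fixing $x$ (incurring only a $1/\eta$ additive statistical loss from the bad $x$'s), then shows the surviving $J_x$ and $I_{j,x}$ are still $\bigomega{k}$ and $m-\bigo{\gamma' m}$, respectively, using the fact that the conditional distribution \emph{remains a product per player} because each $\event_{x_j}$ factors through a single player. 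This per-stage refresh, not a running entropy tax of $\log p$, is what makes the induction close with error $\error_\ell\le \ell/p$. You would need to replace your partial-sum conditioning with coordinate fixing and replace the additive-entropy-tax argument with a per-stage re-derivation of the rectangle mass (or some equivalent device) to make the proof go through.
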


%

	\begin{proof}
		For simplicity of notation in the proof, we flip the output of $F$, so that it outputs $0$ if the input to the corresponding $\sumequal_{k,p}$ instance sums to $0$ in $\Z_p$, and $F$ outputs $1$ on instances with summation other than $0$.

		Let $\Pi$ be an $\error$-error randomized protocol for $F$, and let $\Pi_{out}\left(x\right)$ denote the output of $\Pi$ on input $x$.
		Here by ``the $\error$-error protocol'' we mean that the expected error rate of $\Pi$ is bounded by $\error$, since both $\Pi_{out}(x)$ and $F(x)$ are binary vectors in $\zo^m$.
		Therefore, 
		\begin{align*}
			\Pr_{i\in_R [m]}\left[\Pi_{out}\left(x\right)_i \ne F_i(x) \right]\le \error
		\end{align*}

		where the input to $F$ is partitioned as $x=\left(x^{(1)},x^{(2)},\dots,x^{(m)}\right)\in \Z_p^{m\times k}$ such that $F_i(x)\eqdef \overline{\sumequal_{k,p}}(x^{(i)})$ computes the $i$-th instance of $\sumequal_{k,p}$ for each $i\in [m]$.

		We abuse notation a little in this proof and let $|\cdot|$ denote the \emph{Hamming weight} of a not necessarily binary vector, 
		which measures the number of non-zero coordinates of the vector. Then, 
		\begin{align*}
		\E\left[ \left|\Pi_{out}\left(x\right) - F(x)\right| \right]\le \error m 
		\end{align*}

		To prove that $\rcc_{k,\error}(F) = \max_{x}|\Pi(x)|=\bigomega{m k\log k}$ for the optimal $\error$-error protocol $\Pi$,
		we will deduce a contradiction if $\Pi$ uses $c<\gamma m k\log k$ bits of communication,
		for a constant $\gamma=(1-9\error)/135>0$ and sufficiently large $k$.
		Thus, we can conclude a communication lower bound of $c\ge \gamma mk\log k= \bigomega{m k \log k}$.

		For the purposes of a contradiction, we first convert the randomized protocol $\Pi$ into a deterministic protocol $\Pi'$ that has small error with respect to a specific distribution $\mathcal{H}$. 
		The deterministic protocol $\Pi'$ is obtained by fixing all internal random coins of $\Pi$ so that $\Pi'$ has error rate at most $\error$ for inputs drawn from $\mathcal{H}$.
		\begin{align*}
			\E_{X\follows \mathcal{H}}\left[ \left| \Pi'_{out}(X) - F(X)\right| \right]\le \error m
		\end{align*}

		Since $\Pi'$ can never generate a transcript larger than the communication that $\Pi$ uses in the worst case, i.e., $|\Pi'(X)| \le  \max_x |\Pi(x)| =c$,
		it suffices to prove a communication lower bound for $\Pi'$ on inputs drawn from $\mathcal{H}$.

		By Markov's inequality, we have that for every positive constant $\gap>0$,
		\begin{align}\label{ineq:Pi'}
		\Pr_{X\follows \mathcal{H}}\left[ \left| \Pi'_{out}(X) - F(X)\right| > \gap m \right] \le \frac{\E_{X\follows \mathcal{H}}\left[ \left| \Pi'_{out}(X) - F(X)\right| \right]}{\gap m}\le \frac{\error}{\gap}
		\end{align}

		Now we specify the distribution $\mathcal{H}$.
		Let $G$, $B$ be defined as
		\[
			\left\{
			\begin{array}{c c r}
				G \eqdef &	\big( G_1,\dots, G_{k-1} , &-\sum_{j=1}^{k-1} G_j \big)		\\
				B \eqdef &  \big( B_1,\dots, B_{k-1} , &1-\sum_{j=1}^{k-1} B_j \big)
			\end{array}		
			\right.
		\]
		for uniform and independent $G_i,B_i\in \Z_p$ for every $j\in[k-1]$. 
		Note that: 
		a) $\sumequal_{k,p}(G)=1$, $\sumequal_{k,p}(B)=0$ and hence $F_i(G)=0, F_i(B)=1$;
		b) the first $k-1$ elements of $G$ and $B$, denoted by $G_{-k}$ and $B_{-k}$, follow the same distribution, i.e., the uniform distribution over $\Z_p^{k-1}$.
		For convenience we can write $B=(G_{-k},1+G_k)$.

		Let $H\eqdef G/2+B/2$ be a mixture of $G$ and $B$ and let $\mathcal{H}$ be $m$ independent copies of $H$ as below:
		\[\mathcal{H} \eqdef H^m = \left(G/2+B/2\right)^m\]
		Since $B=(G_{-k},1+G_k)$ and $H=G/2+B/2$, we note that
		\[\mathcal{H} = \sum_{v\in\zo^m} \frac{1}{2^m} (G^m_{-k}, v+G^m_k)=(G^m_{-k}, V+G^m_k),\]
		where $G^m_{-k}$ is uniformly distributed over $\Z_p^{m\times (k-1)}$, $G_k^m$ is a vector in $\Z_p^m$ such that $G_k^m=-\sum_{j=1}^{k-1} G^m_j$, and $V$ is a random variable that is uniform over $\zo^m$, that we will think of as an element in  $\Z_p^m$.
		With the above notation of $\mathcal{H}, V$, we have
		\[F(\mathcal{H})=F(G^m_{-k}, V+G^m_k)=V\]

		To prove the communication lower bound of a deterministic protocol $\Pi'$ that has error probability $\le \error$ w.r.t. $\mathcal{H}$, 
		we recall the following protocol decomposition by monochromatic rectangles, c.f. Claim 24 in  \cite{Viola:2013we}  or Lemma 1.16 in \cite{kushilevitz1997communication}.

		\begin{claim}[\cite{Viola:2013we}, Claim 24]\label{claim: rectangle}
		A $k$-player (number-in-hand) deterministic protocol using communication $\le c$ partitions the inputs into $C\le 2^c $ sets of inputs $R^1,R^2,\ldots, R^C$ such that
			\begin{itemize}
				\item the protocol outputs the same value on inputs in the same set, and

				\item the sets are rectangles: each $R^i$ can be written as $R^i=R_1^i\times R_2^t\times\ldots\times R_k^i$ where $R_j^i$ is a subset of the inputs of Player $j$.
			\end{itemize}
		\end{claim} 

		For every $i\in [C]$ and rectangle $R^i$, we use the notation $R^i_{-j}\eqdef R^i_{1} \times R^i_{2} \times \dots \times R^i_{j-1}\times R^i_{j+1}\times \dots \times R^i_k$ to denote the projection of $R^i$ on to the $k-1$ coordinates except the $j$-th one, for every $j\in[k]$. 
		In particular, $R^i_{-k}\eqdef R^i_{1} \times R^i_{2} \times \dots \times R^i_{k-1}$ denotes the first $k-1$ coordinates.
		Sometimes the index $i$ of rectangle $R^i$ is clear from context, for which we simply write $R$ instead of $R^i$.

		In what follows we show a contradiction when $\Pi'$ has communication $c<\gamma mk\log k$ and hence there are $C\le 2^c<k^{\gamma m k}$ rectangles.
		The argument depends on the following lemma, which essentially guarantees that for every large rectangle, $\Pi'$ is likely to make mistakes on more than $\gap m$ coordinates.

		\begin{lemma}\label{lem: partial uniform}
			For every rectangle $R$ satisfying $\Pr[\mathcal{H}_{-k}\in R_{-k}]\ge \frac{1}{\alpha C}>\frac{1}{\alpha k^{\gamma mk}}$ for which $\alpha= p^{\bigo{1}}$,
			there must be a set $L\subseteq [m]$ such that $|L|=(1-135\gamma) m$ and
			$G_k^{(L)}\cond G_{-k}^m \in R_{-k}$ is  $\frac{|L|}{p}$-close to uniform over $\Z_p^{|L|}$.
		\end{lemma}

		\cref{lem: partial uniform} implies the following claim:

		\begin{claim}\label{claim:ball}
			For every rectangle $R$ on which $\Pi'$ outputs $w\in \zo^m$,
			if $\Pr\left[\mathcal{H}_{-k}\in R_{-k}\right]\ge \frac{1}{\alpha C}$, 
			then for every $u\in R_k$ and for $\gamma,\gap$ satisfying $1-135\gamma \ge 3\gap$, 
			\begin{equation}\label{ineq:ball}
				  \Pr\Big[ \mathcal{H}\in R,\; \left|F(\mathcal{H})-w\right|\le \gap m \Big] 
				< \frac{1}{2} \Pr\left[ \mathcal{H}\in R\right]
			\end{equation} 	
		\end{claim} 

		For compactness of the proof of Theorem~\ref{thm:A1} we defer the proofs of Claim~\ref{claim:ball} and Lemma~\ref{lem: partial uniform}  to the end of this section.

		Let $\widetilde{\mathcal{R}}$ be the set of the $C$ rectangles and $\mathcal{R}\subseteq\widetilde{\mathcal{R}}$ be the set of all large rectangles satisfying $\Pr[\mathcal{H}_{-k}\in R_{-k}]\ge \frac{1}{\alpha C}>\frac{1}{\alpha k^{\gamma mk}}$.
		Then for every rectangle $R\in \widetilde{\mathcal{R}}\backslash\mathcal{R}$, 
		\[\Pr[\mathcal{H}\in R]\le  \Pr[\mathcal{H}_{-k}\in R_{-k}]< \frac{1}{\alpha C} \le \frac{1}{\alpha \big|\widetilde{\mathcal{R}}\backslash\mathcal{R}
		\big|} \]

		Using Claim~\ref{claim:ball}, we have
		\begin{align*}
			&\Pr_{X\follows \mathcal{H}}\left[ \left| \Pi'_{out}(X) - F(X)\right| \le \gap m \right]	\\
			= & \sum_{R\in \widetilde{\mathcal{R}}} 
			\Pr\left[  \mathcal{H}\in R,\; \left|F(\mathcal{H})-\Pi'_{out}(R)\right|\le \gap m \right] \\
			\le & \sum_{R\in {\mathcal{R}}} 
			\Pr\left[  \mathcal{H}\in R,\; \left|F(\mathcal{H})-\Pi'_{out}(R)\right|\le \gap m \right] 
			+ \sum_{R\in \widetilde{\mathcal{R}}\backslash\mathcal{R}} \Pr[\mathcal{H}\in R] \\
			\le & \sum_{R\in {\mathcal{R}}}  \frac{1}{2} \Pr\left[ \mathcal{H}\in R\right]
			+ \sum_{R\in \widetilde{\mathcal{R}}\backslash\mathcal{R}} \Pr[\mathcal{H}\in R] \\
			\le & \frac{1}{2} \sum_{R\in \widetilde{\mathcal{R}}} \Pr[\mathcal{H}\in R] +\frac{1}{2} \cdot \left| \widetilde{\mathcal{R}}\backslash\mathcal{R}\right|\cdot \frac{1}{\alpha \big|\widetilde{\mathcal{R}}\backslash\mathcal{R}
		\big|}
			\le \frac{1}{2}+\frac{1}{2\alpha}
		\end{align*}	

	Combining it with (\ref{ineq:Pi'}), we have 
	\[1-\frac{\error}{\gap} \le \Pr_{X\follows \mathcal{H}}\left[ \left| \Pi'_{out}(X) - F(X)\right| \le \gap m \right] \le \frac{1}{2}+\frac{1}{2\alpha}  
	\implies 1-\frac{2\error}{\gap}\le \frac{1}{\alpha}\]
	
	However, the above inequality cannot be true if we set $\gap =3\error$ and pick a constant $\alpha>3$.
	Let $\gamma\eqdef (1-9\error)/135$ be the constant for which we want to show $c\ge \gamma mk\log k= \bigomega{m k \log k}$.
	Then $1-135\gamma = 9\error \ge 3\gap$ satisfies the condition in \cref{claim:ball} and $\alpha=\bigo{1}$ satisfies the requirement in \cref{lem: partial uniform}.

	Thus we finish the contradiction argument and complete the proof of \cref{thm:A1} with $\rcc_{k,\error}(F)\ge  \gamma mk\log k =\bigomega{m k\log k}$. 
	\end{proof}

	\begin{proof}[Proof of Claim~\ref{claim:ball}]
		Recall that $G'\eqdef  G_k^{(L)}\cond G_{-k}^m \in R_{-k}$,
		$G'$ is $|L|/p$ close to the uniform distribution by Lemma~\ref{lem: partial uniform}. 
		Therefore for every fixed $u\in \Z_p^{|L|}$, 
		\begin{align*}
			& \sum_{v\in\zo^{|L|}: |v|\le \gap m} \Pr\left[ G'=u-v \right]\\  
			=& \frac{1}{2}\left(\sum_{v:|v|\le \gap m}\Pr\left[G'=u-v\right]+\sum_{v:|v|\ge |L|-\gap m}\Pr\left[G'=u-\overline{v} \right]\right)\\
	 		\le & \frac{1}{2}\left(\sum_{v:|v|\le \gap m}\Pr\left[G'=u-v\right]+\sum_{v:|v|\ge |L|-\gap m}\Pr\left[G'={u-v} \right]+\frac{2|L|}{p}\right)\\
	 		< & \frac{1}{2}\sum_{v\in \zo^{|L|}}\Pr\left[G'={u-v} \right]
		\end{align*}
		where the first inequality follows Lemma~\ref{lem: partial uniform},
		and the last inequality holds since as long as $G'$ is close to the uniform distribution and $|L|=(1-135\gamma)m\ge 3\gap m$, there is 
		\[\sum_{v:\gap m<|v|<|L|-\gap m}\Pr\left[G'={u-v} \right] = \bigomega{1}> \frac{|L|}{p}\]

        Recall that $u_L$ and $v_L$ denote $u$ and $v$ restricted to coordinates in the set $L$, $u_{-L}$ and $v_{-L}$ denote $u$ and $v$ restricted to coordinates not in $L$, and $G_k^{(-L)}$ denotes $G_k$ restricted to coordinates not in $L$. We then apply the above inequality and get the following bound relating probabilities on a single coordinate conditional on the rest of the coordinates being contained in the rectangle $R$:
		\begin{align}
			& \sum_{v\in\zo^m: \left|v-w\right| \le  \gap m }\Pr\left[ G^m_k = u-v \Cond G^m_{-k}\in R_{-k}  \right] \notag\\
			\le & \sum_{v\in\zo^m: \left|v_L-w_L\right| \le  \gap m }\Pr\left[ G^m_k = u-v \Cond G^m_{-k}\in R_{-k}  \right] \notag\\
			= & \sum_{v_L\in\zo^{|L|}: \left|v_L-w_L\right| \le  \gap m }\Pr\left[ G^{(L)}_k = u_L-v_L \Cond G^m_{-k}\in R_{-k}  \right] \notag\\
			& \qquad {\Large\boldsymbol{\cdot}} \sum_{v_{-L}\in \zo^{m-|L|}} \Pr\left[ G^{(-L)}_k = u_{-L}-v_{-L} \Cond G^m_{-k}\in R_{-k},G^{(L)}_k = u_L-v_L  \right]  \notag\\
			< & \frac{1}{2} \sum_{v_L\in\zo^{|L|} }\Pr\left[ G^{(L)}_k = u_L-v_L \Cond G^m_{-k}\in R_{-k}  \right] \notag\\
			& \qquad {\Large\boldsymbol{\cdot}} \sum_{v_{-L}\in \zo^{m-|L|}} \Pr\left[ G^{(-L)}_k = u_{-L}-v_{-L} \Cond G^m_{-k}\in R_{-k},G^{(L)}_k = u_L-v_L  \right]  \notag\\
			= &\frac{1}{2} \sum_{v\in\zo^{m} }\Pr\left[ G^{m}_k = u-v \Cond G^m_{-k}\in R_{-k}  \right] \label{ineq:intermediate}
		\end{align}
		\\
		The above inequality (\ref{ineq:intermediate}) implies (\ref{ineq:ball}) since:
		\begin{align*}
			 & \Pr\left[ \mathcal{H}\in R, \left|F(\mathcal{H})-w\right|\le \gap m \right] \\
			=& \Pr\left[ \mathcal{H}_{-k}\in R_{-k} \right]
			\cdot \Pr\left[ \mathcal{H}_k\in R_k, \left|F(\mathcal{H})-w\right|\le \gap m \Cond \mathcal{H}_{-k}\in R_{-k}  \right]\\
			=& \Pr\left[ G^m_{-k}\in R_{-k} \right]
			\cdot \sum_{v\in\zo^m}\frac{1}{2^m}\Pr\left[ \mathcal{H}_k\in R_k, \left|F(\mathcal{H})-w\right|\le \gap m \Cond G^m_{-k}\in R_{-k},F(\mathcal{H})=v  \right]\\
			=& \Pr\left[ G^m_{-k}\in R_{-k} \right]
			\cdot \sum_{v\in\zo^m}\frac{1}{2^m}\Pr\left[ v+ G^m_k\in R_k, \left|v-w\right|\le \gap m \Cond G^m_{-k}\in R_{-k}  \right]\\
			=& \Pr\left[ G^m_{-k}\in R_{-k} \right]
			\cdot \sum_{v\in\zo^m: \left|v-w\right|\le \gap m }\frac{1}{2^m} \sum_{u\in R_k}\Pr\left[ v+ G^m_k = u \Cond G^m_{-k}\in R_{-k}  \right]\\
			=& \frac{1}{2^m}  \Pr\left[ G^m_{-k}\in R_{-k} \right]
			\cdot \sum_{u\in R_k} \sum_{v\in\zo^m: \left|v-w\right|\le \gap m }\Pr\left[ v+ G^m_k = u \Cond G^m_{-k}\in R_{-k}  \right]\\
			<& \frac{1}{2^m}  \Pr\left[ G^m_{-k}\in R_{-k} \right]
			\cdot \sum_{u\in R_k} \frac{1}{2}\sum_{v\in\zo^m}\Pr\left[ v+ G^m_k = u \Cond G^m_{-k}\in R_{-k}  \right]\\
			=&\frac{1}{2} \Pr\left[ \mathcal{H}\in R\right]
		\end{align*}

		Thus we complete the proof of Claim~\ref{claim:ball}	
	\end{proof}


	\begin{proof}[Proof of Lemma~\ref{lem: partial uniform}]
		We prove this lemma inductively for the indices in $L$.
		In what follows, let $\error_{i}\eqdef \frac{i}{p}$ for every $i\in [\ell]$.
		Given that $\left(G_k^{(1)},\dots,G_k^{(\ell-1)}\right)\Cond G_{-k}^m \in R_{-k}$ is $\error_{\ell-1}$-close to the uniform distribution over $\Z_p^{\ell-1}$, we will show that there exists another instance which, w.l.o.g., we label as $G_k^{(\ell)} $, for which 
		$\left(G_k^{(1)},\dots,G_k^{(\ell-1)},G_k^{(\ell)}\right)\Cond G_{-k}^m \in R_{-k}$ is $\error_{\ell}$-close to uniform distribution over $\Z_p^{\ell}$.
		
		The base case for $\ell=0$ is trivial. 
		In what follows we suppose that the conditional distribution $\left(G_k^{(1)},\dots,G_k^{(\ell-1)}\right)\Cond G_{-k}^m \in R_{-k}$ is already $\error_{\ell-1}$-uniform and we do our induction for $G_k^{(\ell)}$.

		First we fix $x\in \Z_p^{(\ell-1)\times (k-1)}$ for which $\Pr\left[\left(G_{-k}^{(1)},\dots,G_{-k}^{(\ell-1)}\right)=x \cond G_{-k}^m\in R_{-k} \right] \ge \frac{1}{\eta  p^{(\ell-1)(k-1)}}$, 
		and let $\event_x$ denote  the event 
		$\left(G_{-k}^{(1)},\dots,G_{-k}^{(\ell-1)}\right)=x$.  
		Then we discuss the conditional distribution of the remaining instances given $\event_x$.


		Let $J_x \eqdef \set{j\in [k-1] \; \Big| \; \Pr\left[G^m_j\in R_j\; \big| \; \event_x\right] \ge {1}/{\beta k^{2\gamma m}} }$,
		\\
		Then
		\begin{align}\label{ineq:A1_upper}
			\Pr\left[ G^m_{-k}\in R_{-k} \cond \event_x \right] =
		\prod_{j\in [k-1]} 
			\Pr\left[ G_j^m\in R_j \cond  \event_x \right]
			\le 
			\prod_{j\in ([k-1]\backslash J_x)} \frac{1}{\beta k^{2\gamma m}} = \left( \frac{1}{\beta k^{2\gamma m}} \right)^{k-1-|J_x|}
		\end{align}

		On the other hand, recalling that $\left(G_{-k}^{(1)},\dots,G_{-k}^{(\ell-1)}\right)$ is uniformly distributed and hence $\Pr[\event_x]=\frac{1}{p^{(\ell-1)(k-1)}}$, we have
		\begin{align}
			& \Pr\left[ G^m_{-k}\in R_{-k} \cond \event_x \right] \notag \\
			=
			&\Pr\left[G^m_{-k}\in R_{-k},\event_x\right] / \Pr[\event_x]
			\notag\\
			= & \Pr\left[ \event_x \Cond G_{-k}^m\in R_{-k} \right] \cdot \Pr[G_{-k}^m\in R_{-k}]/\Pr[\event_x] \notag \\
			\ge & \frac{1}{\eta p^{(\ell-1)(k-1)}} \cdot \frac{1}{\alpha k^{\gamma mk}} \Big/ \left(\frac{1}{p^{(\ell-1)(k-1)}}\right)
			=\frac{1}{\eta \alpha k^{\gamma mk}}
			\label{ineq:A1_lower}
		\end{align}

		Combining \cref{ineq:A1_upper,ineq:A1_lower} 
		and letting $\beta\ge (\eta\alpha)^{{2\gamma}/{\gamma k}}$, we can conclude
		\[\left( \frac{1}{\beta k^{2\gamma m}} \right)^{k-1-|J_x|}\ge  \frac{1}{\eta\alpha k^{\gamma mk}} \implies |J_x|\ge k-1-\frac{\gamma mk\log k+ \log \eta\alpha}{2\gamma m\log k + \log \beta} \ge \left(1-\frac{\gamma}{2\gamma}\right)k-1\]

		Thus the size of $J_x$ is at least
		 $|J_x|\ge \left(1-\frac{\gamma}{2\gamma}\right)k-1=\bigomega{k}$.

		For every $j\in J_x$, we have $\Pr\left[G^m_j\in R_j\; \big| \; \event_x\right] \ge {1}/{\beta k^{2\gamma m}}$ by definition of $J_x$
		and hence 
		\begin{align}\label{ineq: entropy of Gmj}
		\entropy\left[ G^m_j \cond G^m_j\in R_j,\event_x \right] \ge \log\left( {p^{m-\ell}}/{\beta k^{2\gamma m}} \right)=(m-\ell)\log p - 2\gamma m\log k - \log \beta
		\end{align}

		Note that for every $i\in[m]$, $G_j^{(i)}$ is uniform over $\Z_p$ as long as $j\in [k-1]$.
		Thus conditioned on $\event_x$ and $G_j^m\in R_j$,
		if $\exists a\in\Z_p$, 
		$\Pr[G_j^{(i)}=a \cond G_j^m\in R_j,\event_x ]=p_a>\frac{1}{2}$ 
		then we have an upper bound for the conditional entropy of $G_j^{(i)} $: 
		\begin{align*}
			\entropy[G_j^{(i)}\cond G_j^m\in R_j,\event_x ]\le p_a\log\frac{1}{p_a}+(1-p_a)\log(p-1)< (1+\log(p-1))/2
		\end{align*}

		Let $I_{j,x}$ be defined as 
		\[I_{j,x}\eqdef\set{i\in [m] \st \Hmin\left[ G_j^{(i)}\cond G_j^m\in R_j,\event_x  \right]\ge 1 } = \set{i \st \forall a, \Pr\left[ G_j^{(i)}=a\cond G_j^m\in R_j,\event_x  \right]\le \frac{1}{2} }\]
		Then $\forall i\in \overline{I_{j,x}}\eqdef \left(([m]\backslash[\ell-1]) \backslash I_{j,x}\right)$, $\entropy[G_j^{(i)}\cond G_j^m\in R_j,\event_x ] < (1+\log(p-1))/2$,
		and in particular for $i\in [\ell-1]$, $\entropy[G_j^{(i)}\cond G_j^m\in R_j,\event_x ]=0$ since $G_j^{(i)}$ is already fixed in $\event_x$.
		
		\begin{align*}
			&\entropy\left[ G^m_j \cond G^m_j\in R_j,\event_x \right]
			\le  \sum_{i=1}^{m} \entropy[G_j^{(i)}\cond G_j^m\in R_j,\event_x ]\\
			&=  \sum_{i\in I_{j,x} } \entropy[G_j^{(i)}\cond G_j^m\in R_j,\event_x ] + \sum_{i \notin I_{j,x} \cup \overline{I_{j,x}}} \entropy[G_j^{(i)}\cond G_j^m\in R_j,\event_x ] + \sum_{i\in \overline{I_{j,x}}} \entropy[G_j^{(i)}\cond G_j^m\in R_j,\event_x ] \\
			&= \sum_{i\in I_{j,x} } \entropy[G_j^{(i)}\cond G_j^m\in R_j,\event_x ] + 0 + \sum_{i\in \overline{I_{j,x}}} \entropy[G_j^{(i)}\cond G_j^m\in R_j,\event_x ] \\
			&\le  |I_{j,x}| \cdot \log p + (m-\ell+1 - |I_{j,x}|)(1+\log(p-1))/2
		\end{align*}
		Combining the above with  the lower bound for $\entropy\left[ G^m_j \cond G^m_j\in R_j,\event_x \right]$ in (\ref{ineq: entropy of Gmj}), 
		\begin{align*}
			 (m-\ell)\log p - 2\gamma m\log k - \log \beta & \le |I_{j,x}| \cdot \log p + (m-\ell+1 - |I_{j,x}|)(1+\log(p))/2\\
			\implies \left( \frac{\log p -1}{2} \right) |I_{j,x}| & \ge  (m-\ell)\left(\frac{\log p -1}{2}\right)-2\gamma m\log k -\frac{1+\log p}{2} -\log \beta
		\end{align*}

		Therefore, recalling that $p> k^{1/4}$ 
		{and for $\log \beta =o(\log p)=o(\log k)$}, we have
		\[|I_{j,x}| \ge  m-\ell -\frac{4\gamma m \log k}{\log p-1} - \bigo{\frac{\log \beta}{\log p}}>m-\ell -\frac{4\gamma m \log k}{\frac{1}{4}\log k-1} - \littleo{1} >m-\ell - 18\gamma m + 1 \]

		Therefore, for every $x\in \Z_p^{(\ell-1)\times(k-1)}$ for which 
		\begin{center}$\Pr\left[\left(G_{-k}^{(1)},\dots,G_{-k}^{(\ell-1)}\right)=x \cond G_{-k}^m\in R_{-k} \right] \ge \frac{1}{\eta  p^{(\ell-1)(k-1)}},$\end{center} 
		the size of $|J_x|\ge \left(1-\frac{\gamma}{2\gamma}\right)k-1=\bigomega{k}$; and for every  $j\in J_x$, 
		$|I_{j,x}|>m-\ell - 18\gamma m + 1$ and $\left|\overline{I_{j,x}}\right|=m-\ell+1 - |I_{j,x}| <18\gamma m$.
		
		That is, these three bounds hold with probability 
		at least $1-\frac{1}{\eta}$ by taking a union bound over all $x \in Z_p^{(\ell-1)\times(k-1)}$ where
		$$\Pr\left[\left(G_{-k}^{(1)},\dots,G_{-k}^{(\ell-1)}\right)=x \cond G_{-k}^m\in R_{-k} \right] < \frac{1}{\eta  p^{(\ell-1)(k-1)}}$$
		for 
		$x\sim\left(G_{-k}^{(1)},\dots,G_{-k}^{(\ell-1)}\right)\Cond G_{-k}^m\in R_{-k}$.
		In what follows we abuse notation a little by assuming $\mathcal{X}\eqdef \left(G_{-k}^{(1)},\dots,G_{-k}^{(\ell-1)}\right)\Cond G_{-k}^m\in R_{-k}$ is a distribution over $\Z_p^{(\ell-1)(k-1)}$ for which $\mathcal{X}$ satisfies all the above statements of $J_x$ and $I_{j,x}$.
		This causes at most an additional loss of $\frac{1}{\eta}$ in the error probability.

		Notice that 
		the conditional distribution $\left(G_{-k}^{(1)},\dots,G_{-k}^{(\ell-1)}\right)\Cond G_{-k}^m\in R_{-k}$ is indeed a product distribution since $R$ is a rectangle.
		That is, letting $x=(x_1,\dots, x_{k-1})$ where $x_j\in \Z_p^{\ell-1}$ for $j\in [k-1]$,
		then $\event_x$ can be decomposed into $k-1$ independent events $\event_{x_j}$, 
		where each $\event_{x_j}$ denotes the event $\left(G_{j}^{(1)},\dots,G_{j}^{(\ell-1)}\right)=x_j$ and $\event_x=\wedge_{j=1}^{k-1}\event_{x_j}$. 
		Therefore the conditional distribution
		$G^m_j \; \big| \; \event_x$ is identical to $G_j^m \Cond  \event_{x_j}$ since the distribution of $G^m_j$ is independent from inputs of the remaining $k-2$ players (among the first $k-1$ players) in the product distribution. 
		As a result, we have
		$\Pr\left[G^m_j\in R_j\; \big| \; \event_x\right] = \Pr\left[ G_j^m\in R_j \Cond  \event_{x_j} \right]$
		so that $\event_{x_j}$ and $x_j$ fully determines whether $j\in J_x$ following the definition of $J_x$.
		Similarly we have $G_j^{(i)}\Cond \set{G_j^m\in R_j,\event_x}$ identical to $G_j^{(i)}\Cond \set{G_j^m\in R_j,\event_{x_j}}$, 
		so that $I_{j,x}$ is also fully determined by $x_j$ and $\event_{x_j}$.

		Next we fix $j\in [k-1]$ and pick $x_j\in\Z_p^{\ell-1}$ for which 

		$j\in J_{x}$ for $x$ extended from $x_j$.
		Now we have $\event_{x_j}$ and $I_{j,x_j} \eqdef I_{j,x}$ containing all but a fraction of $<\frac{18\gamma m}{m-\ell+1}$ coordinates, since $\left|\overline{I_{j,x_j}}\right|< 18\gamma m$ out of the $m-\ell+1$ unfixed coordinates in total.
		Then for $X_j\follows \mathcal{U}_{\Z_p^{\ell-1}}$ and $\mathcal{I}(\cdot)$ denoting the indicator function, 
		\begin{align*}
			& \sum_{i=\ell}^m \mathcal{I}\left( \Pr_{X_j}\left[ i\in \overline{I_{j,X_j}} \Cond j\in J_{X_j} \right] \ge \frac{1}{3}\right)\\
			\le & \sum_{i=\ell}^m 3 \Pr_{X_j} \left[ i\in \overline{I_{j,X_j}} \Cond j\in J_{X_j} \right] \\
			= & 3\sum_{i=\ell}^m \sum_{x_j\in \Z_p^{\ell-1}: j\in J_{x_j}} \Pr[X_j={x_j}] \cdot \mathcal{I}\left(  i\in \overline{I_{j,x_j}} \right)\\
			= & 3 \sum_{x_j\in \Z_p^{\ell-1}: j\in J_{x_j}} \Pr[X_j={x_j}\cond j\in J_{X_j}] \cdot \sum_{i=\ell}^m  \mathcal{I}\left(  i\in \overline{I_{j,x_j}} \right)\\
			= &  3 \sum_{x_j\in \Z_p^{\ell-1}: j\in J_{x_j}} \Pr[X_j={x_j}\cond j\in J_{X_j}] \cdot \left| \overline{I_{j,x_j}} \right|
			< 54 \gamma m
		\end{align*}

		That is, for every fixed  $j\in [k-1]$, there are at least $m-\ell+1-54\gamma m$ coordinates $i\in [m]$ satisfying $\Pr \left[ i\in I_{j,X_j} \Cond j\in J_{X_j} \right] > \frac{2}{3}$,
		i.e., with probability $\frac{2}{3}$, $G_j^{(i)}$ satisfies $\Hmin\left[ G_j^{(i)}\cond G_j^m\in R_j,\event_x  \right]\ge 1$ for a randomly selected $x_j$ conditioned on that $j\in J_{x_j}$ specifies a big component in the rectangle.
		This is exactly the probability that the $i$-th coordinate $G_j^{(i)} $ of $G_j^m$ can be decomposed into a convex combination of a uniform distribution over $2$ elements.

		Now we have at least $(m-\ell+1-54\gamma m)(k-1)$ pairs of $(i,j)\in \set{\ell,\ell+1,\dots,m}\times[k-1]$ satisfying the above condition $\Pr \left[ i\in I_{j,X_j} \Cond j\in J_{X_j} \right] > \frac{2}{3}$, 
		which means at least one fixed $i$ must appear in 
		$\frac {(m-\ell+1-54\gamma m)(k-1)}{m-\ell+1} = \left(1- \frac {54\gamma m}{m-\ell+1}\right)(k-1)$ many pairs for different $j\in[k-1]$ by a standard averaging argument.
		Without loss of generality we may assume $i=\ell$, and let $G''\eqdef(G''_1,\dots,G''_k)$ denote the conditional distribution of $G^{(\ell)}$, i.e., each $G_j''\eqdef G_j^{(\ell)}\Cond\set{G_j^m\in R_j,\event_x}$ denotes the conditional distribution of $G_j^{(\ell)}$. 
		Recalling that $|J_x|\ge \left(1-\frac{\gamma}{2\gamma}\right)k-1$, 
		the number of elements in $|J_x|$ hit by those pairs containing $\ell$ is at least
		$$\left(1-\frac{\gamma}{2\gamma}\right)k-1 +  \left(1- \frac {54\gamma m}{m-\ell+1}\right)(k-1) - (k-1) \ge  \left(1-\frac{\gamma}{2\gamma}- \frac {54\gamma m}{m-\ell+1}\right)k-1=\bigomega{k}$$ 

		We say the pair $(i,j)$ is \emph{good for $x$} if $j\in J_{x}$ and $i\in I_{j,x}$.
		Then recalling that $|J_x|\ge \left(1-\frac{\gamma}{2\gamma}\right)k-1$, the expected number of good $(\ell,j)$ over $x\follows\mathcal{X}$ is lower bounded as follows.
		\begin{align*}
			& \E_x\left[ \#\set{j\in [k-1]\st \text{$(\ell,j)$ is good for $x$}} \right]  \\
			=& \E_x\left[ \sum_{j=1}^{k-1} \mathcal{I}\left(\text{$(\ell,j)$ is good for $x$}\right) \right] \\
			\ge & \E_x\left[ \sum_{j=1}^{k-1} \E_x\left[\mathcal{I}\left(\text{$(\ell,j)$ is good for $x$}\right)\right] \right] 
			= \E_x\left[ \sum_{j=1}^{k-1} \Pr_x[\ell \in I_{j,x}, j\in J_x] \right]\\
			\ge & \E_x\left[ \sum_{j\in J_x} \Pr_x[\ell \in I_{j,x}\cond j\in J_x] \right] \\
			\ge & \frac{2}{3}\cdot\left(\left(1-\frac{\gamma}{2\gamma}- \frac {54\gamma m}{m-\ell+1}\right)k-1\right)
		\end{align*}
		By a Chernoff bound it implies
		\begin{align*}
			& \Pr_x\left[ \#\set{j\in [k-1]\st \text{$(\ell,j)$ is good for $x$}} \le \frac{1}{3} \left(1-\frac{\gamma}{2\gamma} - \frac {54\gamma m}{m-\ell+1}\right)k \right]\\
		\le &\exp\left( -\bigomega{\left(1-\frac{\gamma}{2\gamma}- \frac {54\gamma m}{m-\ell+1}\right)k} \right)
		\end{align*}
		Let $\error' = \exp\left( -\bigomega{\left(1-\frac{\gamma}{2\gamma}- \frac {54\gamma m}{m-\ell+1}\right)k} \right)$ be an upper bound of this error probability.
		Then with probability at least $1-\error'$, the conditional distribution 
		$G_j''$ can be decomposed into a convex combination of uniform distributions over two distinct elements for at least $\frac{1}{3} \left(1-\frac{\gamma}{2\gamma}- \frac {54\gamma m}{m-\ell+1}\right)k$ indices $j\in [k-1]$.

		Next we show that conditioned on the above decomposition, 
		which happens with probability $\ge 1-\error'$,
		the conditional distribution $G''_k$ is close to uniform by the following claim.
		\begin{claim}[Claim 31 in \cite{Viola:2013we}]
			Let $p$ be a prime number. 
			Let $X$ be the sum of $t$ independent random variables each uniform over $\set{a_i,b_i}\subset\Z_p$ for $a_i\ne b_i$.
			Then $X$ modulo $p$ is $\error\le 0.5\sqrt{p}\exp\left(-\bigomega{t/p^2} \right)$ close to uniform.
		\end{claim}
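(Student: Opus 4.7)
The plan is to prove this via Fourier analysis on $\Z_p$, bounding the characteristic function of $X$ at each nonzero frequency and then invoking Parseval and Cauchy--Schwarz to convert an $L^2$ estimate into an $L^1$ (total variation) estimate. Concretely, for each $s \in \Z_p$ let $\phi_X(s) := \E\!\left[e^{2\pi i s X/p}\right]$; the target is to bound $\sum_{s\ne 0}|\phi_X(s)|^2$.

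First, using independence of the $X_i$'s I would write $\phi_X(s) = \prod_{i=1}^t \phi_{X_i}(s)$. Since $X_i$ is uniform on $\{a_i,b_i\}$, a direct computation gives
\[
\phi_{X_i}(s) \;=\; \tfrac{1}{2}\bigl(e^{2\pi i s a_i/p} + e^{2\pi i s b_i/p}\bigr) \;=\; e^{\pi i s(a_i+b_i)/p}\,\cos\!\left(\pi s (b_i-a_i)/p\right),
\]
so $|\phi_{X_i}(s)| = \left|\cos(\pi s (b_i-a_i)/p)\right|$. Here primality of $p$ enters crucially: for $s \ne 0$ and $a_i \ne b_i$, the product $s(b_i-a_i) \bmod p$ lies in $\{1,\dots,p-1\}$, so the argument $\pi s(b_i-a_i)/p$ is bounded away from multiples of $\pi$. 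The worst case is when this residue equals $1$ or $p-1$, giving $|\cos(\pi s(b_i-a_i)/p)| \le \cos(\pi/p) \le \sqrt{1-4/p^2} \le \exp(-2/p^2)$, where I use $\sin x \ge 2x/\pi$ on $[0,\pi/2]$. Multiplying over $i=1,\dots,t$ yields the uniform estimate $|\phi_X(s)|^2 \le \exp(-4t/p^2)$ for every $s \ne 0$.

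Second, I would convert this to a statistical distance bound. Let $P$ denote the distribution of $X \bmod p$ and $U$ the uniform distribution on $\Z_p$. By Parseval,
\[
\sum_{x\in\Z_p}\bigl(P(x)-\tfrac{1}{p}\bigr)^2 \;=\; \frac{1}{p}\sum_{s=1}^{p-1}|\phi_X(s)|^2,
\]
and Cauchy--Schwarz then gives $\|P-U\|_1 \le \sqrt{p}\cdot\sqrt{\sum_x(P(x)-1/p)^2} = \sqrt{\sum_{s\ne 0}|\phi_X(s)|^2}$. Plugging in the per-frequency bound from the previous step,
\[
\|P-U\|_{TV} \;=\; \tfrac{1}{2}\|P-U\|_1 \;\le\; \tfrac{1}{2}\sqrt{(p-1)\,\exp(-4t/p^2)} \;\le\; 0.5\sqrt{p}\,\exp(-2t/p^2),
\]
which is of the claimed form $0.5\sqrt{p}\exp(-\Omega(t/p^2))$.

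There is no real obstacle here; the only point requiring a little care is the uniform lower bound $|\cos(\pi k/p)| \le \cos(\pi/p)$ for all nonzero residues $k \in \{1,\dots,p-1\}$ (needed since the worst case is $k \in \{1, p-1\}$, not $k$ close to $p/2$), together with the use of primality to guarantee $s(b_i-a_i) \not\equiv 0 \pmod p$ whenever $s \ne 0$ and $a_i \ne b_i$. Everything else is standard character-sum manipulation, and the exponent $\Omega(t/p^2)$ comes out immediately from the elementary inequality $\cos^2(\pi/p) \le 1-4/p^2 \le e^{-4/p^2}$.
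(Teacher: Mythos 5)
Your proof is correct and is the standard Fourier-analytic argument for this claim. The paper cites Claim~31 of \cite{Viola:2013we} without reproducing a proof, but your route (factoring the characteristic function $\phi_X(s)=\prod_i\phi_{X_i}(s)$, using primality to bound each factor by $\cos(\pi/p)\le e^{-2/p^2}$ at every nonzero frequency, and then passing from an $L^2$ bound to total variation via Parseval and Cauchy--Schwarz, which is exactly where the $\sqrt{p}$ factor comes from) is the same approach Viola uses.
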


		Plugging our parameters into the above claim and following exactly the same argument as in \cite{Viola:2013we} ($G_k''$ is $\error''$-close to uniform if every component in the above convex decomposition of $G_k''$ is $\error''$-close to uniform),
		the statistical distance between $G_k''=-\sum_{j=1}^{k-1} G_j''$ and the uniform distribution over $\Z_p$ is bounded by
		\begin{align*}
			\error'' \le & 0.5\sqrt{p}\exp\left(-\bigomega{\frac{1}{3} \left(1-\frac{\gamma}{2\gamma}- \frac {54\gamma m}{m-\ell+1}\right)k/p^2} \right)\\
			= & \exp\left(-\bigomega{\left(1-\frac{\gamma}{2\gamma}- \frac {54\gamma m}{m-\ell+1}\right)\sqrt{k}} \right)
		\end{align*}

		Putting it all together, we conclude that $G_k^{(\ell)}\Cond\set{G_{-k}^m\in R_{-k}, G^{(1)},\dots,G^{(\ell-1)}}$ is close to uniform,
		which implies 
		$\left(G_k^{(1)},\dots,G_k^{(\ell-1)},G_k^{(\ell)}\right)\Cond G_{-k}^m \in R_{-k}$ is also close to uniform.
		Moreover, its statistical distance to uniform is bounded by 
		\begin{align*}
			\error_{\ell} \le \error_{\ell-1}+ \frac{1}{\eta}+\error'+\error''
		\end{align*}

		Let $\eta=2p$ and $\beta=2\ge (\eta\alpha)^{2\gamma/(\gamma k)}=2^{\bigo{\frac{\log k}{k}}}$ for $\alpha=p^\bigo{1}$.
		Then for sufficiently large $k$m 
		the above induction argument goes through for $\ell\le (1-135\gamma) m$, with error $\error',\error''$ bounded by
		$$ \error'= \exp\left(-\bigomega{{k}}\right), \error''=\exp\left(-\bigomega{\sqrt{k}}\right) \Longleftarrow 1-\frac{\gamma}{2\gamma}- \frac {54\gamma m}{m-\ell+1}
		\ge 0.1 \Longleftrightarrow \ell \le (1-135\gamma) m +1 $$
		Therefore the conditional distribution $\left(G_k^{(1)},\dots,G_k^{(\ell-1)},G_k^{(\ell)}\right)\Cond G_{-k}^m \in R_{-k}$ is $\error_{\ell}$-close to uniform for $\error_{\ell}$ bounded by $\frac{\ell}{p}$ as follows:
		$$\error_{\ell} \le \error_{\ell-1}+\frac{1}{\eta}+\error'+\error''\le \frac{\ell-1}{p}+\frac{1}{2p}+ \exp\left(-\bigomega{\sqrt{k}}\right)\le \frac{\ell}{p}$$

		Thus we have proved the induction hypothesis for every $\ell \le (1-135\gamma) m$.
		Let $L$ be the first $(1-135\gamma) m$ indices as in the induction hypothesis, 
		we complete the proof of Lemma~\ref{lem: partial uniform} for $|L|= (1-135\gamma) m$ and statistical distance $\frac{|L|}{p}$.
	\end{proof}

\section{Lower bound for Hamming Norm Estimation}
\label{sec:lower bound L0}

In this section we present a space lower bound for single-pass streaming algorithms for $(1\pm \gap)$-approximating the Hamming norm $L_0$ in the strict turnstile model,  which is denoted by $\hest_{\gap}$ as in \cref{para:Hamming estimation}.

Formally, in the Hamming norm estimation problem there is an underlying vector $(x_1,\dots, x_N)$ which starts from the all zero vector and processes up to $m$ updates each of the form $(i,v)\in [N]\times [\pm M]$.
The update $(i,v)$ means one should add $v$ to the $i$-th coordinate $x_i$ in the vector $x$. 
After processing all $m$ updates, we have $\norm{x}_0 =|\set{i\st x_i\ne 0}|$ and we want to output a number within $(1\pm\gap)\|x\|_0$ with probability $\ge 2/3$. We additionally assume all players have access to a heavy hitters oracle, which tells them whether the frequency of a given coordinate is greater than $T$. This is a generalization of the case without a heavy hitters oracle, where we simply let $T = mM$ and we know that all frequencies are guaranteed to be smaller.
The strict turnstile model guarantees that $x_i \geq 0$ for all $i \in [N]$ at all positions in the stream,
in which case it suffices to prove the space lower bound in the simultaneous communication model following the reduction in Theorem 4.1 of \cite{AHLW16}.
Furthermore, it is also guaranteed that for every $i \in [N]$,  $x_i \leq \poly(n)$ at the end of the stream. 
In this setting, the algorithm of \cite{KNW10} approximates $\|x\|_0$ up to a $(1 \pm \gap)$ factor with $\bigo{\gap^{-2}\log(N) \left(\log(1/\gap) + \log \log (T) \right)}$ bits of space\footnote{Indeed, their algorithm stores $\bigo{\gap^{-2}\log N}$ counters modulo primes that are each $\bigo{\log(1/\gap) + \log \log (T)}$ bits in magnitude, and it does not matter how large the values of $x_i$ are at intermediate positions in the stream. }, as long as $\gap > 0$.

We first note that solving distinct elements with a heavy hitters oracle reduces to solving distinct elements given a threshold on the frequency of the coordinates. As such, we will solve the complexity question of space complexity given a threshold $T$ for the frequency.

\begin{theorem}
\label{thm:streaming_bound}
The space complexity of $(1\pm \epsilon)$ approximating $L_0$ with probability at least $2/3$ in a strict turnstile stream with access to a heavy hitters oracle with a threshold of $T > 1$ is $\Omega(\epsilon^{-2} \log n \log \log T)$.
\end{theorem}

We note that the assumption $T > 1$ is necessary for this bound to be well defined. When $T=1$, the heavy hitters oracle tells us exactly whether or not the frequency of a coordinate is $0$ at the end of the stream, so the complexity is $\Theta(\log n)$. This lower bound follows as we need to write down the answer and the upper bound follows as we can directly count the elements with nonzero frequency.

To prove this theorem, we first prove the following lemma:

\begin{lemma}
\label{thm:promise_reduction}
The space complexity of $(1\pm \epsilon)$ approximating $L_0$ with probability at least $2/3$ in a strict turnstile stream with access to a heavy hitters oracle with a threshold of $T > 1$ is at least $RSC^T_{2/3}(T_\epsilon)$.
\end{lemma}
\begin{proof}
Suppose we have an algorithm $A$ which gives us a $(1\pm \epsilon)$ approximation of $L_0$ in a strict turnstile stream with access to a heavy hitters oracle with a threshold of $T$.\\

Now, if we are given an input where the maximum frequency of any element is at most $T$, then we can go through our input and do exactly what $A$ would do for everything other than calls to the heavy hitters oracle. If $A$ would make a call to a heavy hitters oracle, we just treat the answer as $0$ without making this query and proceed as $A$ would.\\

Since we assumed the input has a maximum frequency of $T$, the heavy hitters oracle would return $0$ for every element, so this would give us the same answer as $A$, and by correctness of $A$, it's a $(1\pm \epsilon)$ approximation.\\
\end{proof}

Now, we will state and prove our main theorem:
\begin{theorem}\label{thm:L0}
	For error tolerance $\gap <1/3$ and $\gap=\max\set{\bigomega{ \sqrt{\frac{\log k}{k}}}, \frac{1}{{N}^{0.49}}}$, 
	any single-pass streaming algorithm solving $\hest_\gap$ with probability $\ge 2/3$ in the strict turnstile model
	must use $\bigomega{\gap^{-2}\log (N)\log\log(T)}$ bits of space.
\end{theorem}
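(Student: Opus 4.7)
The plan is to exhibit a composed simultaneous-communication problem whose hardness lifts to $\hest_\gap$ via the strict-turnstile reduction of \cite{AHLW16} (Theorem~4.1), using \cref{thm:A1} as the hard core. The composition nests three layers: the direct sum over $r$ copies of $\sumequal_{k,p}$ provided by \cref{thm:A1}, a gap-Hamming wrapper on the outputs (yielding \gs), and an augmented-index outer layer (yielding \auggs).

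\textbf{Layers 1--2.} Set $k = \Theta(\log(mM))$ and let $p$ be a prime in $(k^{1/4},2k^{1/4})$, so that $\log k = \Theta(\log\log(mM))$. Instantiate \cref{thm:A1} with $r = \Theta(\gap^{-2})$ copies, giving an $\bigomega{r k \log k}$ total communication lower bound for per-copy reconstruction. Define \gs as the problem of deciding whether the Hamming weight of the $r$ \sumequal outputs is $\leq (1/2 - \gap)r$ or $\geq (1/2 + \gap)r$; under a product input distribution tailored so that the \sumequal outputs behave like i.i.d.\ biased bits, a constant-error \gs protocol, amplified by Chernoff concentration across the $r$ copies, must recover each individual \sumequal instance with small error, so it inherits the $\bigomega{r k \log k} = \bigomega{\gap^{-2} \log(mM) \log\log(mM)}$ bound.

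\textbf{Layer 3 and reduction.} Form \auggs by stacking $\log N$ independent copies of \gs in an augmented-index configuration: the referee gets an index $i \in [\log N]$ and the correct answers to all copies $j > i$ as hints, and must answer copy $i$. A rectangle-based augmented-index argument, peeling copies off inductively as in the proof of \cref{thm:A1}, scales the per-player cost by $\log N$, giving $\bigomega{\gap^{-2} \log(N) \log\log(mM)}$ bits per player in the simultaneous model. To reduce to $\hest_\gap$, I realize an \auggs instance as a strict-turnstile stream on $N$ coordinates by reserving one coordinate block per augmented-index position, encoding each of the $r$ \sumequal copies within a block via non-negative updates contributed by the $k$ players at coordinates labeled by their $\Z_p$ element; all-non-negative updates keep the stream strict turnstile. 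A $(1 \pm \gap)$-approximation to $L_0$ of the resulting vector, combined with the augmented-index hints, determines the gap-Hamming threshold at the queried position. Applying \cite{AHLW16} Theorem~4.1 transfers the per-player communication lower bound to the claimed streaming space lower bound of $\bigomega{\gap^{-2} \log(N) \log\log(mM)}$.

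\textbf{Main obstacle.} The hardest step will be Layer~3: for simultaneous multi-player protocols, the augmented-index amplification must be proved via a rectangle decomposition compatible with the inductive rectangle argument of \cref{thm:A1}. The augmented-index hints effectively condition on a large-measure event whose correlations across index positions could leak information, so a careful inductive invariant, analogous to the min-entropy-$\geq 1$ invariant peeled off in \cref{lem: partial uniform}, will be needed to show that conditioning on one copy's hint does not trivialize the next. The stated regime $\gap \leq N^{-0.49}$ and $\gap = \bigomega{\sqrt{\log(k)/k}}$ should enter here, both to keep the updates of magnitude $p^{O(1)}$ within the $[\pm M]$ budget and to keep the gap-Hamming Chernoff concentration robust.
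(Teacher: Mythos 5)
Your high-level composition (direct sum $\to$ gap-Hamming wrapper $\to$ augmented index $\to$ \cite{AHLW16} reduction) matches the paper's strategy, but there are three concrete gaps and one questionable design choice.

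\textbf{The hash-function embedding is missing.} Your Layer~2 asserts that a constant-error \gs protocol ``must recover each individual \sumequal instance with small error'' by Chernoff concentration, but this does not follow: \gs outputs a single threshold bit on the total Hamming weight of the \sumequal answers, and no amount of concentration recovers the value of one specified copy from that bit. The paper instead proves the $\bigomega{n''k\log\log k}$ bound for $1$-$\gs$ (Lemma~\ref{lemma:1-GHSE}) by \emph{reducing} $\augsumequal^{n''}_k$ to $1$-$\gs_{n'}$: one uses $n'\times n''$ random hash functions from a pairwise-independent family (via the CRS) so that each \gs coordinate is a majority of $n''$ hashed inputs. For the queried copy $j$, the correlation $\Pr[m'_i\ne m_i\mid X^{(j)}=Y^{(j)}]>\tfrac12+\tfrac1{2\sqrt{n''}}$ induces a detectable bias in the Hamming weight, while for $X^{(j)}\ne Y^{(j)}$ there is no bias. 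This encoding is what makes the gap-Hamming output reveal the augmented-index query, and it is not present in your plan.

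\textbf{Geometric multiplicities in the augmented-index layer are missing.} Your Layer~3 ``peels copies off inductively,'' but the $L_0$ estimate carries additive error $\pm 2\gap N$ over the \emph{entire} universe, so the lower-index \gs instances are noise relative to the queried one. The paper assigns instance $g_i$ multiplicity $100^{i-1}$, so that $N\le 100^t n/99$ and a second-moment bound shows $\bigl|\sum_{i<t}100^{i-1}f_i\bigr|\le 100^{t-1}\gap n$ with probability $\ge 98/99$; then a $(1\pm\gap)$-approximation separates $g_t=1$ from $g_t=0$. Without a multiplicity hierarchy of this form, the reduction from $\auggs$ to $\hest_\gap$ does not go through.

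\textbf{Your hard core and parameterization differ from the paper's, and the arithmetic is not obviously consistent.} The paper does \emph{not} invoke Theorem~\ref{thm:A1} here: it uses Theorem~\ref{theorem:augsumequal_overZ} ($\augsumequal$ over $\Z$ with inputs drawn from $[a]^m$, $a=\bigo{\log k}$), so the crucial $\log\log(mM)$ factor comes from $\log a=\Theta(\log\log k)$ with $k=m$ players and one stream update per player — the natural streaming-to-communication partition. You instead set $k=\Theta(\log(mM))$ so that $\log k$ (from the $\Z_p$ alphabet in Theorem~\ref{thm:A1}) plays the role of $\log\log(mM)$. That choice forces each of the $\Theta(\log(mM))$ players to hold $m/k$ stream updates, and it makes $k$, $m$, $N$, and $\gap$ circularly interdependent through $m\approx kr\log N$ with $r=\Theta(\gap^{-2})$. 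You would need to verify that this parameterization simultaneously satisfies the magnitude budget $p^{\bigo1}\le M$, the constraint that each $\sumequal_{k,p}$ instance has exactly $k$ player-inputs realized as stream updates, and the window $\gap=\max\{\bigomega{\sqrt{\log k/k}},1/N^{0.49}\}$; the paper's design with $k=m$ and the $[a]$-bounded alphabet avoids this tension. Finally, note that Theorem~\ref{thm:A1}'s reconstruction function $G$ gets only $(i,\Pi_{\text{out}}(x))$ — no hints from the other copies — so even aside from the streaming bookkeeping, you would still need a separate augmented-index argument to obtain a per-copy guarantee from the \gs protocol, and that brings you back to the hash-embedding step you skipped.
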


First we introduce $\go$ and $\gose$:

\begin{definition}
In the $c$-\go$_n$ problem, we have two players Alice and Bob. They each have as input a vector in $\{0,1\}^n$ and we wish to compute 

$$c\text{-}\go_n(x,y) = \begin{cases}
1, \quad \left|\left(\sum_{i \in n} \textbf{XOR}(x_i,y_i)\right) - \frac{n}{2}\right| \ge 2c\sqrt{n},\\
0, \quad \left|\left(\sum_{i \in n}  \textbf{XOR}(x_i,y_i)\right) - \frac{n}{2}\right| \le c\sqrt{n},\\
\end{cases}
$$
and otherwise, it can return anything.
\end{definition}

\begin{definition}
In the $c$-\gose$_n$ problem, we have two players Alice and Bob. They each have as input a vector in $\mathbb{Z}^n$ and we similar to above, we wish to compute

$$c\text{-}\gose_n(x,y) = \begin{cases}
1, \quad \left|\left(\sum_{i \in n} \mathbf{1}_{x_i + y_i = 0}\right) - \frac{n}{2}\right| \ge 2c\sqrt{n},\\
0, \quad \left|\left(\sum_{i \in n} \mathbf{1}_{x_i + y_i = 0}\right) - \frac{n}{2}\right| \le c\sqrt{n},\\
\end{cases}
$$
and otherwise, it can return anything. We let $\mathbf{1}_{x_i + y_i = 0}$ denote the indicator function which is $1$ if $x_i + y_i = 0$ and $0$ otherwise.
\end{definition}

Additionally, we will let $\sumequal_{k,\delta}^{m,a}$ denote $m$ independent instances of $\sumequal_{k,\delta}$, and our protocol needs to be able to solve at least $am$ of these instances correctly with probability at least $1-\delta$.

\begin{lemma}
for every $k \in \mathbb{N}$, $0 \le \delta \le 1/2$, and $n \ge c^2/100\epsilon^2=n'$,
$$\rcclinT_{k,\delta}(10\epsilon\sqrt{n}\text{-}\gose_n) \ge \rcclinT_{k,\delta}(c\text{-}\gose_{n'})$$
\label{lemma:gap-ort reduction}
\end{lemma}

\begin{proof}[Proof of \cref{lemma:gap-ort reduction}]
	Given $n'=c^2/100\gap^2$ and 
	an input instance of $c$-$\gose_{n'}$ with underlying \sumequal problems outputting $\mathbf{x}'\in \zo^{n'}$, we create the new input to $10\gap\sqrt{n}$-$\gose_n$ by taking $100\gap^2 n / c^2$ copies of each coordinate, 
	with results of underlying problems being $\mathbf{x}\in\set{\pm 1}^n$. 
	As a result, $\sum_{j=1}^{n}\mathbf{x}_j  =  \frac{100\gap^2 n}{c^2}\cdot \sum_{j=1}^{n'}\mathbf{x}'_j$.

	If $|\sum_{j}\mathbf{x}'_j| \le c\sqrt{n'}$,
	then $|\sum_{j}\mathbf{x}_j| \le 10\gap n$,
	and on the other hand 
	$|\sum_{j}\mathbf{x}'_j| \ge 2c\sqrt{n'}$
	implies $|\sum_{j}\mathbf{x}_j| \ge 20\gap n$.

	Thus, any $k$-player $\error$-error simultaneous communication protocol for $10\gap\sqrt{n}$-$\gose_n$ immediately implies a $k$-player $\error$-error simultaneous communication protocol for $c$-$\gose_{n'}$.
	 
	 Since all we are doing is copying coordinates, this does not change the threshold.
\end{proof}

\begin{theorem}
Given some simultaneous communication protocol $\Pi$ with two players that solves $1$-\gose$_n$ when each \sumequal instance has input drawn from its hard distribution $\mu$, there exists a protocol $\Pi'$ such that $\rcclin_{2,\delta}(\Pi') \le O(\rcclin_{2,\delta}(\Pi))$ which solves $\Omega(n)$ of the individual sum-equal instances with probability at least $\frac{1}{2}+\beta$ for some constant $\beta>0$.
\label{thm:gose complexity}
\end{theorem}

\begin{proof}
Suppose $\Pi$ is a protocol that solves $1$-\gose$_n$. Now, if Alice has input $X = (x_1, x_2, \dots x_n)$ and Bob has input $Y = (y_1, y_2, \dots y_n)$ to $1$-\gose$_n$, we define a corresponding instance of $1$-\go$_n$ where Alice gets input $X' = (x'_1, x'_2, \dots x'_n)$ and Bob gets input $Y' = (y'_1, y'_2, \dots y'_n)$ where $y'_i = 0$ with probability $1$ if $y_i < M/2$, probability $\frac{1}{2}$ if $y_i=M$, and probability $0$ otherwise where $M$ is defined as in the proof of \cref{thm:woodruff}, and $x'_i = 1-y'_i$ iff $x_i+y_i = 0$.\\

When $X,Y \sim \mu$, each $x_i+y_i = 0$ with probability $\frac{1}{2}$ and $y_i$ is equally likely to be $-x_i$ or $M-x_i$ so it is symmetric about $M/2$. Hence, $(X',Y') \sim \{0,1\}^{2n}$. Furthermore, the answer to the $1$-\gose$_n$ instance is the same as the answer to the $1$-\go$_n$ instance by construction. Therefore, we can solve this instance of $1$-\go$_n$ by simply running $\Pi$.\\

Now, if we let $\mathcal{M}$ be the message sent by Alice to Bob in protocol $\Pi$, then

$$I(X' ; \mathcal{M}, Y) \ge IC(1\text{-Gap-Ort}_n) = \Omega(n)$$

since Bob can solve $1$-\go$_n$ where Alice has input $X'$ and Bob has input $Y'$ when he has access to $(\mathcal{M},Y)$ by returning the answer to $1$-\gose$_n$ using protocol $\Pi$ with input $Y$ after being sent the message $\mathcal{M}$.

We now note that $X'$ is $n$ iid uniformly random bits. As such,

$$I(X' ; \mathcal{M}, Y) = \sum_{i=1}^n I(x'_i;\mathcal{M}, Y \mid x'_1, x'_2, \dots x'_{i-1}) \ge \sum_{i=1}^n I(x'_i;\mathcal{M}, Y).$$

Each of these terms is upper bounded by $1$, so in order for the sum to be $\Omega(n)$, there exists some constant $c>0$ such that there are at leasts $cn$ indices $j$ such that $I(x'_j;\mathcal{M}, Y) 
\ge \alpha$ for some constant $\alpha > 0$.\\

Now, let
$$J = \{j \mid I(x'_j;\mathcal{M}, Y) \ge \alpha\}.$$

We claim that the transcript of $\Pi$ must contain the solution to the $j^{th}$ Sum-Equal instance with probability at least $\frac{1}{2} + \beta$ for a constant $\beta > 0$ for each $j$. To see this, we note that Bob has as input $Y$ for $1$-\gose$_n$ so he can compute $y'_j$. Then, we note that

$$H(x'_j \mid \mathcal{M},Y) = H(x'_j) - I(x'_j; \mathcal{M},Y) \le 1-\alpha$$

Since $x'_j \in \{0,1\}$, let $\Pr[x'_j = 0 \mid \mathcal{M},Y] = p$. Then, if $p=0$, the entropy is $0$ so this is satisfied for any $0 < \alpha \le 1$. If $p>0$, we have

$$-(p \log p + (1-p) \log (1-p))\le 1-\alpha.$$

Since this is symmetric about $p=\frac{1}{2}$ and cannot be satisfied by $p = \frac{1}{2}$ since $\alpha > 0$, we assume WLOG that $p < \frac{1}{2}$, in which case the entropy monotonically decreases as $p$ decreases. Now, we claim that we must have $p < \frac{1}{2}-\frac{\alpha}{2}$. It suffices to show that

$$ -\left(\left(\frac{1-\alpha}{2}\right) \log \left(\frac{1-\alpha}{2}\right) + \left(\frac{1+\alpha}{2}\right) \log \left(\frac{1+\alpha}{2}\right)\right) \ge 1-\alpha$$

Simplifying this expression yields the solution
$$0 < \alpha < 1.$$

Thus, for $0 < \alpha < 1$, we must have $p < \frac{1-\alpha}{2}$. If $\alpha=1$, then the entropy is $0$ so we must have $p=0$. Thus, we get that $p \le \frac{1-\alpha}{2}$.\\

By symmetry, we thus have that either
$$\Pr[x'_j = 0 \mid \mathcal{M},Y] \le \frac{1-\alpha}{2}$$
or
$$\Pr[x'_j = 0 \mid \mathcal{M},Y] \ge \frac{1+\alpha}{2}.$$
In the former case, Bob lets $\hat{x'_j} = 1$ and in the latter case, Bob lets $\hat{x'_j} = 0$. Bob then computes $y'_j$ from $y_j$. Then, if $\hat{x'_j} = y'_j$, Bob concludes that $x_j + y_j \ne 0$ and if $\hat{x'_j} = 1-y'_j$, Bob concludes that $x_j + y_j = 0$. By construction, this succeeds with probability at least $\frac{1+\alpha}{2}$, and all we did was run $\Pi$ and compute the value from the transcript.
\end{proof}

\begin{corollary}
When $\alpha$ and $c$ are the constants from the proof of $\cref{thm:gose complexity}$,
$$\DlinT_{2,\delta,\mu'}\left(1\text{-}\gose_{\epsilon^{-2}/100}\right) \ge \DlinT_{2,(1+\alpha)/2,\mu}\left(\sumequal^{\epsilon^{-2}/100,c}_2\right)$$
\end{corollary}

\begin{proof}
This follows directly from \cref{thm:gose complexity}. Each instance of Sum-Equal corresponds to a single coordinate from $1$-\gose so their frequencies must all be bounded by $T$ as well.
\end{proof}

\begin{theorem}
When $\delta < \frac{1}{2}$ and $a$ is some constant fraction,
\begin{equation}
    \ic^T_{k,\delta}(\sumequal_k^{n',a}) \ge \Omega(n'k \log \log T)
    \label{eq:Sum-Equal-Complexity}
\end{equation}
where $\sumequal_k^{n',a}$ is the problem where we are given $n'$ independent instances of $\sumequal$ and we are asked to solve $an'$ of them with probability $1-\delta$ each.
\label{thm:woodruff}
\end{theorem}
The proof of this theorem can be found in \cref{sec:over Z}.

\begin{corollary}
For the input distribution $\mu$ defined in the proof of \cref{thm:woodruff}, $\delta < \frac{1}{2}$, and $0<a<1$,
$$\DlinT_{2,\delta,\mu}(\sumequal_2^{n',a}) \ge \Omega(n' \log \log T)$$
\end{corollary}
\begin{proof}
If we plug $k=2$ into (\ref{eq:Sum-Equal-Complexity}), we get

$$\DlinT_{2,\delta,\mu}(\sumequal_2^{n',a}) \ge \ic^T_{2,\delta}(\sumequal_2^{n',a}) \ge \Omega(n' \log \log T)$$

since by definition $\mu$ is the hard distribution that we got the information complexity bound from.

\end{proof}

\begin{definition}
The Aug-Index-GOSE$_{n,k}^t$ problem consists of $t$ independent instances of $\epsilon\sqrt{n}$-Gap-Ort-Sum-Equal$_n$, denoted $g_1, g_2, \dots g_t$, with $k$ players and $n$ coordinates each. In this problem, the referee is asked to estimate $g_i$ based on an index $i\in[t]$ together with the auxiliary information of $f_{i+1},\dots, f_t$, where for convenience we let $f_i \in [\pm n]$ denote the bias of the number of underlying $\sumequal_k$ instances outputting $1$ in $g_i$.
\end{definition}

\begin{theorem}
$\rcclinT_{k,1/3}(T_\epsilon) \ge \rcclinT_{k,0.4}(\auggo)$
\end{theorem}

\begin{proof}
We let the $i$-th $\text{$\gap\sqrt{n}$-$\gose_n$}$ instance $g_i$ in the $\auggo_{n,k}^{t}$ problem have frequency $100^{i-1}$, i.e., each element in $g_i$ is counted $100^{i-1}$ times (as that many distinct elements).
	Thus the universe contains $N\eqdef n+100\cdot n+\dots+100^{t-1}\cdot n \le 100^t n/99$ distinct elements in total, which is $N\le n^{1.01}$ for sufficiently small $t$ (and hence $1/N^{0.49}>1/\sqrt{n}$).
	The final Hamming norm is a weighted sum $F'\eqdef \sum_{i=1}^t 100^{i-1} f'_i$.
	The advantage of $F'$ is hence $F \eqdef 2F'-N = \sum_{i=1}^t 100^{i-1} f_i$. 

	Then we invoke the simultaneous communication protocol for $\hest_\gap$ to estimate $F'$, which returns a value $\widetilde{F'}$ satisfying 
	$(1-\gap)F' \le \widetilde{F'} \le (1+\gap)F'$.
	Translating to the advantage we get $\left| \widetilde{F} -  F \right| \le 2\gap F' \le 2\gap N$.
	From this approximated value $\widetilde{F}$, together with the index $i$ and auxiliary information $f_{i+1},\dots, f_t$, we need to determine the output value of $g_i$.
	Since the influence of $f_{j}$ with $j>i$ can be precisely removed from $F$ before getting the approximated norm $\widetilde{F}$, in what follows it suffices to consider the estimation of $g_t$ when the index is indeed $i=t$.
	Recall that $F = 100^{t-1} f_t + \sum_{i=1}^{t-1} 100^{i-1} f_i$,
	and thus $\widetilde{F}$ is also an approximation of $100^{t-1} f_t$ as long as the additive error $\sum_{i=1}^{t-1} 100^{i-1} f_i$ is bounded.

	Let the input distribution to every $f_i$ be padded from the \text{$1$-$\gose_{\epsilon^{-2}}$} distribution $\mu'$ as in \cref{thm:gose complexity}, where the coordinates are iid bits drawn uniformly from $\{0,1\}$. Thus, each $f_i$ has expectation $0$ and variance $25\epsilon^2n^2$
	It immediately follows by Chebyshev's inequality that $\Pr\left[|f_i|\ge 50 \gap n\right] \le 1/100$.
	Similarly, $\Pr\left[|f_{i}|\ge 50^j \gap n\right] \le 1/100^j$.
	Therefore,
	\begin{align}
		\Pr\left[ \left|\sum_{i=1}^{t-1} 100^{i-1} f_i \right|>  100^{t-1}\gap n\right] 
		& \le \sum_{i=1}^{t-1} \Pr\left[ \left|f_{t-i}\right| >  50^i \gap n \right]
		\le \sum_{i=1}^{t-1} \frac{1}{100^i} \le \frac{1}{99}
	\end{align}
	where the first inequality holds because if $\left|f_{t-i}\right| \le 50^i \gap n $ for every $i$, then $ \left|\sum_{i=1}^{t-1} \cdot 100^{i-1} f_i \right|\le \sum_{i=1}^{t-1} 100^{i-1}\times 50^{t-i} \gap n\le \frac{50^t}{100}\gap n\sum_{i=1}^{t-1} 2^i < 100^{t-1}\gap n$.

	Notice that as long as $\widetilde{F}$ is a $(1\pm \gap)$-approximation of $F$, we must have $\left| \widetilde{F} -  F \right| \le 2\gap N$. Furthermore suppose that we return $0$ if $\widetilde{F} < 15 \cdot 100^{t-1}\epsilon n$ and $1$ if $\widetilde{F} \ge 15 \cdot 100^{t-1}\epsilon n$. Since we know that $N \le 100^{t}n/99$, we have
	
	$$2 \gap N \le 2 \epsilon 100^{t}n/99 < 3 \cdot 100^{t-1} \epsilon n.$$
	
	So in particular, if $\hest_\gap$ succeeds, if $g_t = 0$, we have $|f_t| \le 10 \cdot 100^{t-1}\epsilon n$, so $|F| = |\sum_{i=1}^{t}100^{i-1}f_i| \le 11 \cdot 100^{t-1} \epsilon n$ with probability at least $\frac{98}{99}$. Then, $|\tilde{F}| < 14 \cdot 100^{t-1} \epsilon n$ and our algorithm succeeds.\\
	
	Similarly, if $g_t = 1$, we have $|f_t| \ge 20 \cdot 100^{t-1} \epsilon n$. Thus, with probability at least $\frac{98}{99}$, $|F| = |\sum_{i=1}^{t}100^{i-1}f_i| \ge 19 \cdot 100^{t-1} \epsilon n$, so $|\tilde{F}| > 16 \cdot 100^{t-1} \epsilon n$ and our algorithm succeeds.\\
	
	Thus, if $\hest_\gap$ succeeds with probability $\frac{2}{3}$, the above algorithm succeeds with probability $\frac{2}{3} \cdot \frac{98}{99} > 0.6$. Thus we can determine the value of $g_t$ with probability $\ge 0.6$. The thresholds stay the same because all we did to change the input was copy coordinates, which does not change the frequencies. Hence,
	\begin{align}\notag
		\rcclinT_{k,1/3}\left(\hest_\gap \right) 
		\ge \rcclinT_{k,0.4}\left( \auggo_{n,k}^{t} \right).
	\end{align}

	Finally, to bound the complexity $\rscT_{k,0.4}( T_\epsilon)$, we conclude as follows:\\
	
	When we solve $\auggo$, we claim that in order to solve $\auggo$, with probability at least $0.6$ on every input, we must solve every instance of $\gose$ with probability at least $0.6$.
	
	To see this, suppose we have a protocol that solves $\auggo$ with probability at least $0.6$. let the index corresponding to this index be $j$. Then, we can choose our input such that the index is $j$, and the probability of success of our protocol is the same as the probability that the protocol solves instance $j$ correctly. By assumption, our protocol succeeds with probability at least $0.6$, so it solves the $j^{th}$ instance of $\gose$ with probability at least $0.6$. This holds for every $j$, so our protocol must solve every instance of $\gose$ with probability at least $0.6$.

	Thus,
	
	\begin{align*}
	\rcclinT_{k,2/3}(T_\epsilon) &\ge \rcclinT_{k,0.4}\left( \auggo_{n,k}^{t} \right)\\
	&\ge \rcclinT_{k,0.4}\left(10\epsilon\sqrt{n}\text{-}\gose_{n,k}^t\right)\\
	&\ge \rcclinT_{k,0.4}\left(1\text{-}\gose_{n',k}^t\right)\\
	&\ge \DlinT_{k,\delta,\mu}\left(1\text{-}\gose_{n',k}^t\right)\\
	&\ge (k-1) \DlinT_{2,\delta,\mu}\left(1\text{-}\gose_{n',2}^t\right)\\
	&\ge (k-1) \DlinT_{k,\delta,\mu}\left(\sumequal_{2}^{tn',c}\right)\\
	&\ge (k-1) \ic^T_{k,\delta}\left(\sumequal_2^{tn',c}\right)\\
	&\ge \Omega(ktn' \log \log T) = \Omega(\epsilon^{-2} k\log n \log \log T)
	\end{align*}
	so
	$$\rscT_{k,0.4}(T_\epsilon) \ge \frac{1}{k} \rcclinT_{k,0.4}(T_\epsilon) \ge \Omega(\epsilon^{-2} \log n \log \log T)$$
	
\end{proof}

\appendix


\section{Communication Upper Bound for \textsc{Equality}}
\label{sec:testing equality}

The standard $\error$-error protocol solving the $\textsc{Equality}$ problem starts by sending and comparing the digest under a random hash function $h:[p]\to [q]$ where $q=\bigo{\error^{-1}\log p}$. 
For example, let $q$ be a random prime drawn from the interval $[\error^{-2}\log^2 p, 2\error^{-2}\log^2 p]$ and let $h$ compute a number modulo $q$.
By the prime number theorem there are at least $2\sqrt{N}$ primes in the interval $[N,2N]$, 
which implies the existence of $2\error^{-1}\log(p)$ distinct primes in that range. 
For any two distinct numbers $x,y\in \Z_p$,  since $z=x-y$ has no more than $\log|z|\le \log p$ prime factors, the error probability of the protocol is bounded by the collision probability of $h$ as follows:
\[
\Pr_q\left[h(x)= h(y) \right]=\Pr_q\left[x\equiv y \pmod q \right]=\Pr_q\left[q | (x-y) \right] \le \frac{\log p}{2\error^{-1}\log p} <\error
\]

The communication is a message of the form $(h,h(x))$ (indeed $(q, x\mod q)$ in the above example), whose length is at most $2\ceil{\log q}=\bigo{\log(1/\error)+\log\log p}=\bigo{\log(1/\error)+\log\log k}$ bits.
In particular this is an upper bound for one-way communication protocols computing $\textsc{Equality}$.
Recalling that $p=\bigtheta{k^{1/4}}$,
we can conclude
\[
\rcc_{2,\error} (f) \le\rccone_{2,\error}(f)=\bigo{\log(1/\error)+\log\log k}
\]

We note that the $1/\error$ factor in $q$ is unavoidable, since otherwise more than an $\error$ fraction of numbers would share the same message and hence the collision probability, as well as the error probability, would exceed $\error$.

\section{The lower bound for $\sumequal^{m,a}_k$ over integers}

\label{sec:over Z}

\noindent\textbf{Theorem~\ref{thm:woodruff} (restated).} 
{\it
	Let $\Pi$ be the $\error$-error simultaneous $k$-player protocol for solving m independent instances of the $\sumequal_k^{m,a'}$ problem, where 
	$m \le\frac{k\log\log T}{20\log k}$ and the error tolerance $\error \in (0,1/6)$.
	The simultaneous communication complexity of $\Pi$ is $\rcclinT_{k,\error}(\Pi)=\bigomega{mk\log\log T}$.
}
\medskip
\begin{proof}
	To prove the $\bigomega{mk\log\log T}$ lower bound we will deduce a contradiction if $\Pi$ uses $c<\gamma mk\log\log T$ bits of communication, for a \emph{sufficiently small} constant $\gamma$.
	By decreasing $\gamma$ we may assume that $k$ is arbitrarily large.

	For the hard distribution we first introduce a magnitude bound $a$ defined to be the largest integer such that $a! \le T$. We define $M = a!$. We note that $M \le T$ and $a = O(\log M)$ so $a = O(\log T)$. Let $a = \gamma' \log T$.

	Now we specify the distribution $\mathcal{H}$ for the $\sumequal_k$ instances.
	$\mathcal{H} \eqdef \left(G/2+B/2\right)^m$ consists of $m$ independent copies of $G/2+B/2$,
	for $G$, $B$ defined as follows:
	$$
		\left\{
		\begin{array}{c c r}
			G \eqdef &	\big( G_1,\dots, G_{k-1} , &-\sum_{j=1}^{k-1} G_j \big)		\\
			B \eqdef &  \big( B_1,\dots, B_{k-1} , &M-\sum_{j=1}^{k-1} B_j \big)
		\end{array}		
		\right.
	$$
	where $G_j,B_j$ are uniformly and independently chosen from $[a]$ for every $j\in[k-1]$. 
	Note that: 
	a) $\sumequal_{k}(G)=0$, $\sumequal_{k}(B)=1$; 
	b) the first $k-1$ elements of $G$ and $B$, denoted by $G_{-k}$ and $B_{-k}$, are the same uniform distribution over $[a]^{k-1}$.
	Thus we can write $B=(G_{-k},M+G_k)$;
	c) for $j\in[k-1]$, the $j$-th player's input $\mathcal{H}_j$ is uniform over $[a]^m$ and independent from other players' input.

	Besides $\mathcal{H}_k$, the referee gets in addition an index $n$ uniformly drawn from $[m]$ together with the answers $Y^{(j)}=\sumequal_k(X^{(j)})$ for $j=n+1,\ldots,m$.
	Let $\mathcal{H}_n'\eqdef (\mathcal{H},Y^{(n+1)},\ldots,Y^{(m)})$
	and the hard input distribution is defined as $\mathcal{H}'\eqdef 
	\sum_{n=1}^m \frac{1}{m}\cdot \mathcal{H}_n'$.

	Now we derandomize the protocol $\Pi$ by fixing the randomness and thus get an $\error$-error deterministic protocol $\Pi'$ with respect to the above input distribution.
	That is, $\Pi'$ outputs $\sumequal_k^{(n)}=\sumequal_k(X^{(n)})$ with probability $\ge 1-\error$.

	By averaging, for at least $m/2$ choices of the index $n\in [m]$ and the restricted distribution $\mathcal{H}'_n$,
	the error of $\Pi'$ is bounded by $2\error$.
	\begin{align}
		\Pr_{(X,Y)\follows \mathcal{H}'_n}\left[ \Pi'_{out}(\Pi'(X,Y)) \ne \sumequal_k(X^{(n)}) \right] \le 
	2\error
	\end{align}

	Then we introduce \cref{lemma: lb mutual information} that 
	lower bounds $I\left(X^{(n)}_{-k}; M_1,\ldots, M_{k-1}\right)\ge 0.1{k\log a}$ for protocols with small error.
	For compactness the proof of \cref{lemma: lb mutual information} is deferred to the end of this section.

	\begin{lemma}\label{lemma: lb mutual information}
		For every $n$  such that $\Pi'$ errs with probability $\le 1/3$ on input  $(X,Y)\follows \mathcal{H}'_n$, on at least $a'm$ of the \sumequal instances,
		the mutual information between $X^{(n)}$ and $\Pi'(X,Y)$ must be
		$I\left(X^{(n)}_{-k}; M_1,\ldots, M_{k-1}\right)\ge 0.1{k\log a}$.
	\end{lemma}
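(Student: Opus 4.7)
My plan is to prove the lemma by a contrapositive, second-moment argument that exploits the product structure of the players' posteriors.

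\textbf{Step 1: factorize the mutual information.} Because $\Pi'$ is a simultaneous protocol, each message $M_j$ is a function of player $j$'s input $X_j$ alone, and the inputs $X_1,\ldots,X_{k-1}$ are mutually independent under $\mathcal{H}'_n$. Hence the posterior of $X^{(n)}_{-k}$ given the transcript factorizes as a product $\pi_1^{M_1}\otimes\cdots\otimes\pi_{k-1}^{M_{k-1}}$, where $\pi_j^{m_j}(\cdot):=\Pr[X^{(n)}_j=\cdot\mid M_j=m_j]$, giving
\[
I\!\left(X^{(n)}_{-k};\,M_1,\ldots,M_{k-1}\right)\;=\;\sum_{j=1}^{k-1}\E_{M_j}\!\left[\mathrm{KL}\!\left(\pi_j^{M_j}\,\big\|\,U_{[a]}\right)\right].
\]
I then argue contrapositively: if this sum is strictly below $0.1\,k\log a$, the referee cannot output $Y^{(n)}$ with error less than $1/3$ on $\mathcal{H}'_n$.

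\textbf{Step 2: reduce correctness to distinguishing two close hypotheses.} Under $\mathcal{H}'_n$ the bit $Y^{(n)}$ is uniform on $\{0,1\}$, and conditional on $X^{(n)}_k=v$, the two hypotheses $Y^{(n)}=0$ and $Y^{(n)}=1$ correspond to $S=\sum_{j<k}X^{(n)}_j$ equalling $-v$ and $M-v$ respectively---two integers exactly $M$ apart. Given the transcript, the referee's posterior on $S$ is the convolution $\pi_1^{M_1}\ast\cdots\ast\pi_{k-1}^{M_{k-1}}$, and his optimal rule compares the posterior masses at $-v$ and $M-v$. I plan to show via a local-CLT/smoothness argument that whenever the posterior variance $V:=\mathrm{Var}(S\mid M_1,\ldots,M_{k-1})$ satisfies $V\gg M^2$, the ratio of these two masses is $1\pm o(1)$ for most $v$ in the support of $X^{(n)}_k$, so the Bayes error on $\mathcal{H}'_n$ strictly exceeds $1/3$. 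Protocol correctness therefore forces $V\le C\cdot M^2$ on a constant fraction of transcripts for some absolute constant $C$.

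\textbf{Step 3: convert low posterior variance into high KL.} Independence of the $\pi_j^{M_j}$'s gives $V=\sum_{j=1}^{k-1}\mathrm{Var}(\pi_j^{M_j})$; with $M\le k^{1/8}$ (\cref{lem:upper bound of M}) and $a=\bigo{\log k}$, the average per-player posterior variance on the good transcripts is at most $CM^2/(k-1)=\bigo{k^{-3/4}}=o(1)$. By Markov, a constant fraction of the pairs $(j,M_j)$ satisfy $\mathrm{Var}(\pi_j^{M_j})<1$; any integer-valued distribution with variance below $1$ has entropy $\bigo{1}$ (by a standard maximum-entropy bound for discrete distributions with bounded second moment), so for each such $(j,M_j)$,
\[
\mathrm{KL}\!\left(\pi_j^{M_j}\,\big\|\,U_{[a]}\right)\;=\;\log a-H(\pi_j^{M_j})\;\ge\;\log a-\bigo{1}\;\ge\;\tfrac{1}{2}\log a
\]
for $k$ sufficiently large. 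Summing the contributions of a constant fraction of the $k-1$ coordinates yields $\sum_j\E_{M_j}[\mathrm{KL}(\pi_j^{M_j}\|U_{[a]})]=\bigomega{k\log a}$, contradicting the contrapositive hypothesis.

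\textbf{Main obstacle.} The principal technical hurdle is Step 2: making quantitative the claim that ``$V\gg M^2$ implies near-equal posterior masses at two points $M$ apart with high probability over $X^{(n)}_k$''. A local-CLT estimate for sums of bounded independent random variables---either via an Esseen-type smoothing inequality or via a direct characteristic-function expansion at frequency $1/M$ that uses $|e^{2\pi ix/M}-1|=\bigo{a/M}$ for $x\in[a]$ together with $M\ge 2^{\Omega(a)}$---should suffice, but carrying it out uniformly over a large set of transcripts (and of candidate inputs $v$) requires some care. All other ingredients---factorization of the posterior, law of total variance, the variance--entropy inequality for integer-valued distributions, and the identity $\mathrm{KL}(\pi\|U_{[a]})=\log a-H(\pi)$---are routine.
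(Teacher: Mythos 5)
Your proposal takes a genuinely different route from the paper, but it has two gaps---one conceptual and one technical---and the conceptual one is a real omission rather than a matter of polish.

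\textbf{The main gap: the referee's auxiliary information.} In $\augsumequal_k^m$ the referee does not decide based on the transcript $(M_1,\ldots,M_{k-1})$ and its own input $X_k$ alone; it also receives the answers $Y^{(n+1)},\ldots,Y^{(m)}$. Each $Y^{(i)}$ for $i>n$ is a joint function of all players' $i$-th coordinates, and since $M_j$ can correlate $X_j^{(n)}$ with $X_j^{(i)}$ in the posterior, conditioning on $Y^{(n+1)},\ldots,Y^{(m)}$ both reveals additional information about $X^{(n)}_{-k}$ and \emph{destroys the product structure} $\pi_1^{M_1}\otimes\cdots\otimes\pi_{k-1}^{M_{k-1}}$ that your Steps~1 and~2 rely on. Your reduction to distinguishing two posterior masses of the convolution $\pi_1^{M_1}\ast\cdots\ast\pi_{k-1}^{M_{k-1}}$ is therefore only the referee's posterior when it ignores the $Y$'s, which is not the Bayes-optimal test. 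The paper handles this explicitly: it passes from $I(X^{(n)}_{-k};M_1,\ldots,M_{k-1})$ to $I(X^{(n)}_{-k};M_1,\ldots,M_{k-1},X_k,Y^{(n+1)},\ldots,Y^{(m)})$, paying at most an additive $m\log(ka)$, and that is precisely where the hypothesis $m\le \frac{k\log\log k}{20\log k}$ is consumed. Any proof of this lemma must spend that hypothesis somewhere; yours never uses it, which is a signal that something essential is missing.

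\textbf{The secondary gap: the local CLT is not established.} You flag this yourself, but it is worth being concrete about why an off-the-shelf Esseen bound is not automatic. The per-player posteriors $\pi_j^{M_j}$ can be point masses, two-point masses with various gaps, or anything supported in $[a]$; the conditional sum $S$ can therefore live on a sublattice whose spacing depends on the transcript. What saves the argument is a structural fact you do not invoke: every integer in $[a]$ divides $M$, so two points $M$ apart always lie on the same sublattice. Even granting that, ``$\mathrm{Var}(S\mid\cdot)\ge CM^2\Rightarrow$ the two masses are within $1\pm 1/10$ on all but $1/10$ of inputs $v$'' is a quantitative statement about a non-identically-distributed lattice sum that needs a real proof, uniformly over transcripts; your proposed expansion ``at frequency $1/M$'' does not obviously isolate the correct Fourier window.

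\textbf{How the approaches compare.} The paper avoids both issues with a more combinatorial route: it converts the small-information hypothesis into a high \emph{conditional entropy} bound for $\Omega(k)$ coordinates (after absorbing $X_k$ and the $Y$'s into the conditioning at cost $m\log(ka)$), notes that conditional entropy $>1+\tfrac12\log a$ forces conditional min-entropy $\ge 1$, decomposes each such posterior as a mixture of two-point distributions, and then pigeonholes on the common gap value $M'\in[a-1]$ to extract $t\ge k/\bigo{a}$ independent two-point summands with the same spacing. Viola's shift-invariance of the binomial (Claim~39) then gives statistical closeness between $S$ and $S+M$ at cost $q\cdot\bigo{1/\sqrt t}\le k^{-1/8}$ where $q=M/M'\le k^{1/8}$, with no local CLT and no case analysis over sublattices. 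Your variance/KL reformulation is a natural and potentially cleaner framing of Step~3 (the variance--entropy inequality for integer-valued random variables is a nice replacement for the min-entropy case analysis), and if you carry out the local CLT and fold in the $Y$'s using an $m\log(ka)$ budget, the argument should go through---but as written it does not prove the lemma.
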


	Using \cref{lemma: lb mutual information}, it immediately follows that for $\error \le 1/6$ the protocol $\Pi'$ must use $\bigomega{mk\log a}$ bits of communication.
	Since
	\begin{align*}
	 	\rccsim_{k,\error}(\Pi') & \ge I\left(X_{-k}; M_1,\dots,M_{k-1} \right)\\
	 	& = \sum_{i=1}^m I\left(X^{(i)}_{-k}; M_1,\dots,M_{k-1}\cond X^{(1)}_{-k},\ldots,X^{(i-1)}_{-k} \right)\\
	 	& = \sum_{i=1}^m I\left(X^{(i)}_{-k}; M_1,\dots,M_{k-1}, X^{(1)}_{-k},\ldots,X^{(i-1)}_{-k} \right)\\
	 	& \ge \sum_{i=1}^m I\left(X^{(i)}_{-k}; M_1,\dots,M_{k-1} \right)\\
	 	& \ge \frac{a'm}{2}\cdot 0.1 k\log a =\bigomega{mk\log a}
	\end{align*} 
	since $a'$ is some constant between $0$ and $1$.
\end{proof}	

\begin{proof}[Proof of Lemma~\ref{lemma: lb mutual information}]
		Suppose  by contradiction that
		$I\left(X^{(n)}_{-k}; M_1,\ldots, M_{k-1}\right)<0.1 k\log a$ and recall that $m \le\frac{k\log\log T}{20\log k} \le\frac{0.1 k\log a}{\log (ka)}$ for $a=\gamma'\log T$ and sufficiently large $T$,
		$$ I\left(X^{(n)}_{-k}; M_1,\ldots, M_{k-1}, X_k, Y^{(n+1)},\ldots, Y^{(m)} \right)<0.1 k\log a + {m\log(ka)}<0.2 k\log a$$

		Therefore, recalling that $I(A;B,C)=I(A;B\cond C)$ when $A$ is independent from $C$ and that $X_j^{(n)}$ is independent from $X^{(n)}_1,\ldots, X^{(n)}_{j-1}$, 
		\begin{align*}
			& \sum_{j=1}^{k-1} I\left(X^{(n)}_{j}; M_1,\ldots, M_{k-1}, X_k, Y^{(n+1)},\ldots, Y^{(m)} \right) \\
			\le & \sum_{j=1}^{k-1} I\left(X^{(n)}_{j}; M_1,\ldots, M_{k-1}, X_k, Y^{(n+1)},\ldots, Y^{(m)}, X^{(n)}_1,\ldots, X^{(n)}_{j-1} \right) \\
			= & \sum_{j=1}^{k-1} I\left(X^{(n)}_{j}; M_1,\ldots, M_{k-1}, X_k, Y^{(n+1)},\ldots, Y^{(m)} \cond X^{(n)}_1,\ldots, X^{(n)}_{j-1} \right)\\
			\le & I(X^{(n)}_{-k}; M_1,\ldots, M_{k-1}, X_k, Y^{(n+1)},\ldots, Y^{(m)} ) <0.2 k\log a
		\end{align*}
		
		As a result, there is $J\subseteq [k-1]$ and $|J|>k/2$ such that for every $j\in [k-1]$, it holds that
		$I\left(X^{(n)}_{j}; M_1,\ldots, M_{k-1}, X_k, Y^{(n+1)},\ldots, Y^{(m)} \right)<-1+ 0.5\log a$,
		and hence 
		\begin{align}
			&\entropy\left[X^{(n)}_{j} \cond M_1,\ldots, M_{k-1}, X_k, Y^{(n+1)},\ldots, Y^{(m)} \right] \notag\\
			= &\entropy\left[X^{(n)}_{j}\right] -  I\left(X^{(n)}_{j}; M_1,\ldots, M_{k-1}, X_k, Y^{(n+1)},\ldots, Y^{(m)} \right)\notag\\
			>& \log a - (-1+ 0.5\log a) = 1+ 0.5\log a \label{ineq:lower bound for H}
		\end{align}
	
		Note that
		$\Hmin\left[X^{(n)}_{j} \cond M_1,\ldots, M_{k-1}, X_k, Y^{(n+1)},\ldots, Y^{(m)} \right] <1$ implies the existence of $x\in [a]$ such that  $\Pr\left[X^{(n)}_{j}=x \cond M_1,\ldots, M_{k-1}, X_k, Y^{(n+1)},\ldots, Y^{(m)} \right]=p_x>\frac{1}{2}$,
		and hence it follows that 
		\begin{align}
			\entropy\left[X^{(n)}_{j} \cond M_1,\ldots, M_{k-1}, X_k, Y^{(n+1)},\ldots, Y^{(m)} \right] 
			=& \sum_{i\in[a]} p_i\log\frac{1}{p_i} \notag \\
			\le & p_x\log \frac{1}{p_x}+(1-p_x)\log\frac{a-1}{1-p_x}\notag\\
			< & 1+0.5\log (a-1) \label{ineq:upper bound for H}
		\end{align}

		Thus, (\ref{ineq:lower bound for H}) ensures that $\Hmin\left[X^{(n)}_{j} \cond M_1,\ldots, M_{k-1}, X_k, Y^{(n+1)},\ldots, Y^{(m)} \right]\ge 1$ for every $j\in J$.
		In what follows, we prove that if $\Hmin\left[X^{(n)}_{j} \cond M_1,\ldots, M_{k-1}, X_k, Y^{(n+1)},\ldots, Y^{(m)} \right]\ge 1$ for every $j\in J$ and $|J|>k/2$, 
		then the conditional distribution
		$B_k'\eqdef G_k'+M$  and
		$G_k'\eqdef -\sum_{j=1}^{k-1}X^{(n)}_{j} \Cond \set{M_1,\ldots, M_{k-1}, X_k, Y^{(n+1)},\ldots, Y^{(m)}}$ have statistical distance $\le k^{-1/8}$.

		Notice that for $j\in J$ and $\Hmin\left[X^{(n)}_{j} \cond M_1,\ldots, M_{k-1}, X_k, Y^{(n+1)},\ldots, Y^{(m)} \right]\ge 1$,
		the conditional distribution 
		$G'_j\eqdef X^{(n)}_{j} \Cond \set{M_1,\ldots, M_{k-1}, X_k, Y^{(n+1)},\ldots, Y^{(m)}}$ is a convex combination of distributions uniform over two values.
		More specifically, $G'_j=\sum_{v_j} \alpha_{v_j}\cdot G^{[v_j]}$, where $\alpha_{v_j}\in (0,1)$  and each $G^{[v_j]}$ is a random variable uniform over two values. 
		For $j\notin J$, $G'_j=\sum_{v_j} \alpha_{v_j} \cdot G^{[v_j]} $ where $G^{[v_j]}$ is fixed, i.e., a random variable that equals one value with probability $1$.
		For $v=(v_1,\dots,v_{k-1})$, let $\alpha_v=\prod_{j=1}^{k-1}\alpha_{v_j}$ and $G^{[v]}=\left(G^{[v_1]},\ldots,G^{[v_{k-1}]}, -\sum_{j=1}^{k-1} G^{[v_j]}\right)$,
		then $G'$ can be decomposed as
		$G'=\sum_{v} \alpha_v\cdot G^{[v]}$. 


		Now for every $j\in J$ and $G^{[v_j]}$ uniform over $\set{a_j,b_j}\subset [a]$,
		we can assume w.l.o.g., $a_j<b_j$ and write $G^{[v_j]}=a_j + (b_j-a_j) Z_j$ where $Z_j$ is uniform over $\zo$.
		Since $b_j-a_j\in [a]$, among the $>k/2$ indices $j\in J$ for which $G^{[v_j]}$ takes two values, 
		we must have $t\ge |J|/a>k/\bigo{\log k}>\sqrt{k}$ indices $J'$ such that for any $j\in J'$ the value $b_j-a_j$ is the same value $M'$.

		Thus $G^{[v]}$ can be further decomposed into a convex combination of $G^\set{u}$ where, among the indices in $J$, only those in $J'$ are not fixed.
		Fix any $u$ and denote $G^\set{u}$ by $G''$.
		Let $S=\sum_{j\in J'} Z_j$ denote the sum of $t$ uniform i.i.d. $0/1$ random variables.
		Then we can write
		\begin{align*}
			G''_k &= b + M' S\\
			B''_k &= b + M' S + M
		\end{align*}

		Since $1\le M'<a$, $M'$ divides $M$ and hence $M=M'q$ for $q\in \Z$ and $q\le M\le k^{1/8}$.
		Now we can apply $q$ times the shift-invariance of the binomial distribution, which is stated as follows:

		\begin{claim}[Claim 39 in \cite{Viola:2013we}]
			Let $S$ be the sum of $t$ uniform, i.i.d. Boolean random variables.
			Then $S$ and $S+1$ have statistical distance $\le \bigo{1/\sqrt{\ell}}$. 
		\end{claim}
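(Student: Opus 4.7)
The plan is to compute the total variation distance between $S$ and $S+1$ directly, exploiting the fact that $S \sim \mathrm{Bin}(t,1/2)$ is a symmetric, unimodal distribution on $\set{0,1,\dots,t}$, and then to apply Stirling's approximation. (I will treat the $\ell$ in the stated bound as a typo for $t$, the number of summands.)

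First, I would rewrite the statistical distance explicitly using $\Pr[S=k] = \binom{t}{k}/2^t$:
\[
d_{\mathrm{TV}}(S, S+1) = \frac{1}{2}\sum_{k \in \Z} |\Pr[S=k] - \Pr[S=k-1]| = \frac{1}{2^{t+1}}\sum_{k=0}^{t+1} \left|\binom{t}{k} - \binom{t}{k-1}\right|,
\]
under the convention $\binom{t}{-1} = \binom{t}{t+1} = 0$. Next, I would exploit the unimodality of $\binom{t}{k}$: since $\binom{t}{k}/\binom{t}{k-1} = (t-k+1)/k \ge 1$ iff $k \le (t+1)/2$, the sequence is non-decreasing up to the mode $k^* = \lfloor t/2 \rfloor$ and non-increasing thereafter. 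Splitting the sum at $k^*$, the absolute values can be removed and the expression telescopes to $2\binom{t}{k^*}$ (the ascending part yields $\binom{t}{k^*} - 0$, and the descending part yields $\binom{t}{k^*} - 0$). Hence $d_{\mathrm{TV}}(S, S+1) = \binom{t}{\lfloor t/2 \rfloor}/2^t$.

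Finally, Stirling's approximation gives $\binom{t}{\lfloor t/2 \rfloor}/2^t = \bigtheta{1/\sqrt{t}}$, so $d_{\mathrm{TV}}(S,S+1) = \bigo{1/\sqrt{t}}$, completing the proof. The main obstacle is essentially none: this is a textbook-level smoothness fact about the symmetric binomial distribution under unit shifts, which is precisely why Viola cited it without reproof. The only minor care needed is the parity of $t$ at the mode (for odd $t$ the two peaks $(t\pm 1)/2$ are equal and the telescoping contains a trivial zero term in the middle), but the telescoping and the final bound are unaffected. The bound is also tight up to the constant factor, since the mode has mass $\bigtheta{1/\sqrt{t}}$, leaving no room to improve the exponent.
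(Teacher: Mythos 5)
Your proposal is correct. Note that the paper does not actually prove this claim---it cites it as Claim~39 from Viola's work \cite{Viola:2013we}---so there is no in-paper argument to compare against. Your proof is the standard one: the total variation distance between a unimodal integer distribution and its unit shift is exactly the probability mass at the mode (by telescoping the absolute differences on the ascending and descending flanks), and for $\mathrm{Bin}(t,1/2)$ the mode has mass $\binom{t}{\lfloor t/2\rfloor}/2^t = \Theta(1/\sqrt{t})$ by Stirling. You are also right that the $\ell$ in the statement is a typo for $t$. One tiny caveat worth recording: for odd $t$ there are two adjacent equal modes at $(t\pm1)/2$, so the ``ascending'' range should be taken up to $(t-1)/2$ and the ``descending'' from $(t+1)/2$, with the middle difference $\binom{t}{(t+1)/2}-\binom{t}{(t-1)/2}=0$ dropping out; the telescoping value $2\binom{t}{\lfloor t/2\rfloor}$ and the final $\Theta(1/\sqrt{t})$ bound are unchanged, as you note.
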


		This yields that $G''_k$ and $B''_k$ have statistical distance 
		$$SD(G''_k, B''_k)=SD(M' \cdot S, M'\cdot(q+S))\le q\cdot \bigo{1/\sqrt{\sqrt{k}}}\le k^{1/8}/k^{1/4}=k^{-1/8}$$

		Recalling that $G'$ is just a convex combination of $G''$, 
		the statistical distance between $G'_k$ and $B'_k=G'_k+M$ is also bounded by $k^{-1/8}$. 
		However, by definition of $G'_k$ and $B'_k$ we conclude that the referee cannot distinguish the two cases of $X^{(n)}\follows G$ and $X^{(n)}\follows B$ with advantage greater than $k^{-1/8}< 1/6$, which contradicts the condition that $\Pi'$ has error probability $<1/3$.
		
		Therefore, 
		$I\left(X^{(n)}_{-k}; M_1,\ldots, M_{k-1}\right)\ge 0.1 k\log a=\bigomega{k\log a}$.
	\end{proof}	

\bibliographystyle{tocplain}   

\bibliography{biblio}

\providecommand{\bibhead}[1]{}
\expandafter\ifx\csname pdfbookmark\endcsname\relax%
  \providecommand{\tocrefpdfbookmark}{}
\else\providecommand{\tocrefpdfbookmark}{%
   \hypertarget{tocreferences}{}%
   \pdfbookmark[1]{References}{tocreferences}}%
\fi

\tocrefpdfbookmark
\begin{thebibliography}{10}

\bibitem{ablayev1996lower}\bibhead{ablayev1996lower}
{\sc Farid Ablayev}: Lower bounds for one-way probabilistic communication
  complexity and their application to space complexity.
\newblock {\em Theoretical Computer Science}, 157(2):139--159, 1996.

\bibitem{AHLW16}\bibhead{AHLW16}
{\sc Yuqing Ai, Wei Hu, Yi~Li, and David~P. Woodruff}: New characterizations in
  turnstile streams with applications.
\newblock In {\em 31st Conference on Computational Complexity, {CCC} 2016, May
  29 to June 1, 2016, Tokyo, Japan}, pp. 20:1--20:22, 2016.

\bibitem{BarYossef:2004eya}\bibhead{BarYossef:2004eya}
{\sc Ziv Bar-Yossef, T.~S. Jayram, Ravi Kumar, and D.~Sivakumar}: {An
  information statistics approach to data stream and communication complexity}.
\newblock {\em Journal of Computer and System Sciences}, 68(4):702--732, 2004.

\bibitem{BPSW05}\bibhead{BPSW05}
{\sc Paul Beame, Toniann Pitassi, Nathan Segerlind, and Avi Wigderson}: A
  direct sum theorem for corruption and the multiparty {NOF} communication
  complexity of set disjointness.
\newblock In {\em 20th Annual {IEEE} Conference on Computational Complexity,
  {CCC} 2005, 11-15 June 2005, San Jose, CA, {USA}}, pp. 52--66, 2005.

\bibitem{bg14}\bibhead{bg14}
{\sc Mark Braverman and Ankit Garg}: Public vs private coin in bounded-round
  information.
\newblock In {\em Automata, Languages, and Programming - 41st International
  Colloquium, {ICALP} 2014, Copenhagen, Denmark, July 8-11, 2014, Proceedings,
  Part {I}}, pp. 502--513, 2014.

\bibitem{BBKLSV16}\bibhead{BBKLSV16}
{\sc Joshua Brody, Harry Buhrman, Michal Kouck{\'{y}}, Bruno Loff, Florian
  Speelman, and Nikolay~K. Vereshchagin}: Towards a reverse newman's theorem in
  interactive information complexity.
\newblock {\em Algorithmica}, 76(3):749--781, 2016 (also CCC 2013).

\bibitem{959901}\bibhead{959901}
{\sc A.~Chakrabarti, Yaoyun Shi, A.~Wirth, and A.~Yao}: Informational
  complexity and the direct sum problem for simultaneous message complexity.
\newblock In {\em Proceedings 42nd IEEE Symposium on Foundations of Computer
  Science}, pp. 270--278, 2001.
\newblock [\epfmtdoi{10.1109/SFCS.2001.959901}]

\bibitem{CDIM03}\bibhead{CDIM03}
{\sc Graham Cormode, Mayur Datar, Piotr Indyk, and S.~Muthukrishnan}: Comparing
  data streams using hamming norms (how to zero in).
\newblock {\em {IEEE} Trans. Knowl. Data Eng.}, 15(3):529--540, 2003.

\bibitem{g09}\bibhead{g09}
{\sc Andr{\'{e}} Gronemeier}: Asymptotically optimal lower bounds on the
  nih-multi-party information complexity of the and-function and disjointness.
\newblock In {\em 26th International Symposium on Theoretical Aspects of
  Computer Science, {STACS} 2009, February 26-28, 2009, Freiburg, Germany,
  Proceedings}, pp. 505--516, 2009.

\bibitem{hsu2018learningbased}\bibhead{hsu2018learningbased}
{\sc Chen-Yu Hsu, Piotr Indyk, Dina Katabi, and Ali Vakilian}: Learning-based
  frequency estimation algorithms.
\newblock In {\em International Conference on Learning Representations}, 2019.

\bibitem{i06}\bibhead{i06}
{\sc Piotr Indyk}: Stable distributions, pseudorandom generators, embeddings,
  and data stream computation.
\newblock {\em J. {ACM}}, 53(3):307--323, 2006.

\bibitem{jayram2009hellinger}\bibhead{jayram2009hellinger}
{\sc T.~S. Jayram}: Hellinger strikes back: A note on the multi-party
  information complexity of and.
\newblock {\em Approximation, Randomization, and Combinatorial Optimization.
  Algorithms and Techniques. APPROX '09 / RANDOM '09, Berkeley, CA, USA, August
  21 - 23, 2009}, pp. 562--573, 2009.

\bibitem{Jiang2020Learning-Augmented}\bibhead{Jiang2020Learning-Augmented}
{\sc Tanqiu Jiang, Yi~Li, Honghao Lin, Yisong Ruan, and David~P. Woodruff}:
  Learning-augmented data stream algorithms.
\newblock In {\em International Conference on Learning Representations}, 2020.

\bibitem{knpw11}\bibhead{knpw11}
{\sc Daniel~M. Kane, Jelani Nelson, Ely Porat, and David~P. Woodruff}: Fast
  moment estimation in data streams in optimal space.
\newblock In {\em Proceedings of the 43rd {ACM} Symposium on Theory of
  Computing, {STOC} 2011, San Jose, CA, USA, 6-8 June 2011}, pp. 745--754,
  2011.

\bibitem{knw10b}\bibhead{knw10b}
{\sc Daniel~M. Kane, Jelani Nelson, and David~P. Woodruff}: On the exact space
  complexity of sketching and streaming small norms.
\newblock In {\em Proceedings of the Twenty-First Annual {ACM-SIAM} Symposium
  on Discrete Algorithms, {SODA} 2010, Austin, Texas, USA, January 17-19,
  2010}, pp. 1161--1178, 2010.

\bibitem{KNW10}\bibhead{KNW10}
{\sc Daniel~M. {Kane}, Jelani {Nelson}, and David~P. {Woodruff}}: An optimal
  algorithm for the distinct elements problem.
\newblock In {\em Proceedings of the twenty-ninth ACM SIGMOD-SIGACT-SIGART
  symposium on Principles of database systems}, pp. 41--52, 2010.

\bibitem{kremer1999randomized}\bibhead{kremer1999randomized}
{\sc Ilan Kremer, Noam Nisan, and Dana Ron}: On randomized one-round
  communication complexity.
\newblock {\em Computational Complexity}, 8(1):21--49, 1999.

\bibitem{kushilevitz1997communication}\bibhead{kushilevitz1997communication}
{\sc Eyal Kushilevitz}: Communication complexity.
\newblock {\em Advances in Computers}, 44:331--360, 1997.

\bibitem{lykouris2018competitive}\bibhead{lykouris2018competitive}
{\sc Thodoris Lykouris and Sergei Vassilvtiskii}: Competitive caching with
  machine learned advice.
\newblock In {\em International Conference on Machine Learning}, pp.
  3296--3305. PMLR, 2018.

\bibitem{mitzenmacher_scheduling_2020}\bibhead{mitzenmacher_scheduling_2020}
{\sc Michael Mitzenmacher}: Scheduling with predictions and the price of
  misprediction.
\newblock In {\sc Thomas Vidick}, editor, {\em 11th {{Innovations}} in
  {{Theoretical Computer Science Conference}} ({{ITCS}} 2020)}, volume 151 of
  {\em Leibniz {{International Proceedings}} in {{Informatics}} ({{LIPIcs}})},
  {Dagstuhl, Germany}, January 2020. {Schloss Dagstuhl\textendash
  Leibniz-Zentrum fuer Informatik}.
\newblock [\epfmtdoi{10.4230/LIPICS.ITCS.2020.14}]

\bibitem{DBLP:books/cu/20/MitzenmacherV20}\bibhead{DBLP:books/cu/20/MitzenmacherV20}
{\sc Michael Mitzenmacher and Sergei Vassilvitskii}: Algorithms with
  predictions.
\newblock In {\sc Tim Roughgarden}, editor, {\em Beyond the Worst-Case Analysis
  of Algorithms}, pp. 646--662. Cambridge University Press, 2020.
\newblock [\epfmtdoi{10.1017/9781108637435.037}]

\bibitem{papadimitriou1984communication}\bibhead{papadimitriou1984communication}
{\sc Christos~H. Papadimitriou and Michael Sipser}: Communication complexity.
\newblock {\em Journal of Computer and System Sciences}, 28(2):260--269, 1984.

\bibitem{Viola:2013we}\bibhead{Viola:2013we}
{\sc Emanuele Viola}: The communication complexity of addition.
\newblock {\em Combinatorica}, 35(6):703--747, 2015 (also SODA 2013).

\bibitem{wz12}\bibhead{wz12}
{\sc David~P. Woodruff and Qin Zhang}: Tight bounds for distributed functional
  monitoring.
\newblock In {\em Proceedings of the 44th Symposium on Theory of Computing
  Conference, {STOC} 2012, New York, NY, USA, May 19 - 22, 2012}, pp. 941--960,
  2012.

\end{thebibliography}

\begin{tocauthors}
\begin{tocinfo}[du]
 Elbert Du\\
 4th year Undergraduate\\
 Department of Computer Science and\\
 Department of Mathematics\\
 Harvard University\\
 Cambridge, MA, USA\\
 edu@college.harvard.edu \\   
\end{tocinfo}
\begin{tocinfo}[mitzenmacher]
 Michael Mitzenmacher\\
 Professor of Computer Science\\
 Department of Computer Science\\
 Harvard University\\
 Cambridge, MA, USA\\
 michaelm@eecs.harvard.edu \\   
 \url{https://www.eecs.harvard.edu/~michaelm/}      
\end{tocinfo}
\begin{tocinfo}[woodruff]
 David Woodruff\\
 Associate Professor of Computer Science\\
 Department of Computer Science\\
 Carnegie Mellon University\\
 Pittsburgh, PA, USA\\
 dwoodruf@andrew.cmu.edu \\   
 \url{http://www.cs.cmu.edu/~dwoodruf/}      
\end{tocinfo}
\begin{tocinfo}[yang]
 Guang Yang\\
 Research Director\\
 Tree-Graph Blockchain Innovation Center of Shanghai and\\
 Tree-Graph Blockchain Innovation Center of Xiang River Hunan\\
 Conflux Foundation\\
 Shanghai, China\\
 guang.research@gmail.com \\
 \url{https://sites.google.com/site/guangyangresearch/home}      
\end{tocinfo}
\end{tocauthors}

\begin{tocaboutauthors}
\begin{tocabout}[surname]  
  \textsc{Elbert Du} is currently a senior at Harvard College, concentrating in mathematics and getting a fourth year masters with the AB/SM program in Computer Science. He was first introduced to the world of academic mathematics when he was in fifth grade, and he began attending the late Professor Paul Sally Jr.’s Young Scholars’ Program at the University of Chicago. Elbert is now interested in studying complexity, differential privacy, and adaptive data analysis. In his spare time, Elbert enjoys reading, solving chess puzzles, and playing video games.
\end{tocabout}
\begin{tocabout}[Mitzenmacher]
  \textsc{Michael Mitzenmacher} is a Professor of Computer Science at Harvard University.  He is the co-author of a well-known textbook on randomized algorithms and probabilistic
techniques in computer science with Eli Upfal.  He is an ACM and IEEE Fellow.  
\end{tocabout}
\begin{tocabout}[Woodruff]
\textsc{David Woodruff} is an associate
professor in the Computer Science Department at Carnegie Mellon University. He works on the foundations of data science, specifically in data streams, machine learning, randomized numerical linear algebra, sketching and sparse recovery. 
\end{tocabout}
\begin{tocabout}[Yang]
\textsc{Guang Yang} is currently the research director at Conflux, a startup blockchain project initiated by Fan Long and Andrew Yao. Before joining Conflux, he was an assistant professor at Institute of Computing Technology (ICT), Chinese Academy of Sciences. 
\end{tocabout}
\end{tocaboutauthors}

\end{document}